\documentclass{article}
\usepackage{amsmath, amssymb}
\usepackage{amsfonts}
\usepackage{appendix}
\usepackage{graphicx}
\usepackage{color}
\usepackage{url}
\usepackage{bm}
\usepackage[all]{xy}
\usepackage{arydshln}
\usepackage{colortbl}
\usepackage{tabularx}
\usepackage{makecell}
\usepackage{xcolor}
\usepackage{threeparttable}


\setlength{\topmargin}{-0.1in} \setlength{\textheight}{8.3in}
\setlength{\oddsidemargin}{0.1 in} \setlength{\textwidth}{6.2 in}



\newtheorem{fact}{Fact}

\newtheorem{theorem}{Theorem}
\newtheorem{lemma}{Lemma}
\newtheorem{corollary}{Corollary}

\newtheorem{definition}{Definition}
\newtheorem{proposition}{Proposition}

\newenvironment{proof}{{\noindent\it Proof}\quad}{\hfill $\square$\par}
\newtheorem{example}{Example}

\newcommand{\Z}{\ensuremath{\mathbb Z}}

\newcommand{\C}{\ensuremath{\mathbb C}}


\newcommand{\ls}[1]
{\dimen0=\fontdimen6\the\font\lineskip=#1\dimen0
	\advance\lineskip.5\fontdimen5\the\font
	\advance\lineskip-\dimen0
	\lineskiplimit=0.9\lineskip
	\baselineskip=\lineskip
	\advance\baselineskip\dimen0
	\normallineskip\lineskip\normallineskiplimit\lineskiplimit
	\normalbaselineskip\baselineskip
	\ignorespaces}


\begin{document}
	
	\bibliographystyle{abbrv}
	
	\title{New Constructions of Complementary Sequence Pairs over $4^q$-QAM
}
\footnotetext{The material in this paper was  present in part at the IEEE International Symposium on Information Theory, Los Angeles, California, USA June, 2020.}
\author{Zilong Wang$^1$, Erzhong Xue$^{1}$, Guang Gong$^2$, IEEE Fellow\\
	\small $^1$ State Key Laboratory of Integrated Service Networks, Xidian University \\[-0.8ex]
	\small Xi'an, 710071, China\\
	\small $^2$Department of Electrical and Computer Engineering, University of Waterloo \\
	\small Waterloo, Ontario N2L 3G1, Canada  \\
	\small\tt zlwang@xidian.edu.cn, 2524384374@qq.com, ggong@uwaterloo.ca\\
}
	\maketitle

	\thispagestyle{plain} \setcounter{page}{1}
	\begin{abstract}

The previous constructions of quadrature amplitude modulation (QAM) Golay complementary sequences (GCSs)  were generalized as $4^q $-QAM GCSs of length $2^{m}$  by Li \textsl{et al.} (the generalized cases I-III for $q\ge 2$) in 2010 and Liu \textsl{et al.} (the generalized cases IV-V for $q\ge 3$) in 2013 respectively. Those sequences are presented as  the combination of the
quaternary standard GCSs and compatible offsets. By providing new compatible offsets based on the factorization of the integer $q$,  we proposed two new constructions of $4^q $-QAM GCSs, which have the  generalized cases I-V as special cases. The numbers of the proposed GCSs (including the generalized cases IV-V) are  equal to the product of the number of the quaternary standard GCSs  and the number of the compatible offsets. For $q=q_{1}\times q_{2}\times \dots\times q_{t}$ ($q_k>1$), the number of new offsets in our first construction is lower bounded by a polynomial of $m$ with degree $t$, while the numbers of offsets in the generalized cases I-III and IV-V are a linear  polynomial of $m$ and a  quadratic  polynomial of $m$, respectively. In particular, the numbers of  new  offsets in our first construction is seven times more than that in the generalized  cases IV-V for $q=4$. We also show that the numbers of  new  offsets in our two constructions is lower bounded by a cubic polynomial of $m$ for $q=6$.
Moreover,  our proof implies that all the mentioned  GCSs over QAM in this paper can be regarded as projections of Golay complementary arrays of size $2\times2\times\cdots\times2$.
\end{abstract}

{\bf Index Terms }Golay complementary pair (GCP), QAM, Array,  Boolean function, PMEPR.	
	
\ls{1.5}
	\section{Introduction}

A pair of sequences is called a Golay complementary pair (GCP) \cite{Golay1961Complementary} if their aperiodic autocorrelation sums for any
nonzero shifts are all equal to zero. Each sequence in the GCP
is called a Golay complementary sequence (GCS). The concept of binary GCP was extended later to the polyphase case \cite{Sivaswamy78} and
complementary sequence sets \cite{Tseng1972Complementary}. These sequences have found numerous applications in various fields of science and
engineering, especially in orthogonal frequency-division multiplexing (OFDM) systems. One of the major impediments to deploying OFDM is the high peak-to-mean envelope power ratio (PMEPR) of uncoded OFDM signals.
PMEPR reduction in OFDM transmission can be implemented  by using codes constructed from the sequences in the complementary sequence sets, especially GCSs \cite{Boyd1986Multitone,Popovic1991Synthesis}.

GCPs were initially constructed by the recursive methods \cite{Golay1961Complementary,Budi1990}.
An extensive study on this topic was made by Davis and
Jedwab in \cite{Davis1999Peak} by a direct  construction of polyphase
GCSs based on generalized Boolean functions (GBFs), which have been referred to as the standard GCSs  subsequently.  Non-standard GCPs were studied in \cite{Li05,Fiedler06,Fiedler2008A} and complementary sequence sets were constructed in \cite{Paterson2000Generalized,Schmidt2006Complementary} based on  GBFs later on.

All the aforementioned sequences are constructed over the phase-shift keying (PSK) constellations. Since quadrature amplitude modulation (QAM) are widely employed in high rate OFDM transmissions, $16$-QAM sequences based on weighted quaternary PSK (QPSK) GCSs were studied by R{\"o}\ss ing and Tarokh \cite{Robing2001A} in 2001. Chong \textsl{et al.} \cite{Chong2003A} then proposed a construction of 16-QAM GCSs based on  standard GCSs over QPSK and first-order offsets of Reed-Muller codes in three cases. It was pointed out in \cite{Chong2003A} that an OFDM system with 16-QAM GCSs  has a higher code rate than that with  binary or quaternary standard GCSs, given the same PMEPR constraint. In 2006, Lee and Golomb \cite{Lee2006A} proposed a construction of $64$-QAM GCSs with the weighted-sum of three standard GCSs over QPSK and first-order offsets in five cases. Further improvements on the constructions of GCSs over 16-QAM and 64-QAM were given by Li \textsl{et al.}  \cite{Li2008Comments} and Chang \textsl{et al.} \cite{Chang2010New} later on. These results
were extended to the general construction GCSs  over $4^q$-QAM  by Li \textsl{et al.} \cite{Li2010A} in 2010 and Liu \textsl{et al.} \cite{Liu2013New} in 2013, respectively. All these GCSs over QAM are constructed based on standard QPSK GCSs and compatible offsets. Depending on the algebraic structure of the compatible offsets, the GCSs proposed in \cite{Li2010A,Liu2013New} are referred to as the generalized  cases I-III and the generalized cases IV-V, respectively.

In 2018, Budi{\v{s}}in and Spasojevi$\acute{c}$ \cite{Budi2018PU} introduced a new  recursive  algorithm in multiplicative form to generate GCPs over QAM by para-unitary (PU) matrices, where any element of a sequence can be generated by indexing the entries of unitary matrices with the binary representation of the discrete time index. Sequences derived from $M$ unitary matrices over QAM constellation are referred to as the $M$-Qum case. It is shown that the 1-Qum case and 2-Qum case can generate the sequences in the generalized cases I-III \cite{Li2010A} and cases IV-V \cite{Liu2013New}, respectively. Moreover, a large number of new GCSs over QAM
are produced from the $M$-Qum case when $M\geq 2$. However, for given $q$ and sequences length $2^m$, the acceptable unitary matrices over QAM can only be obtained by exhaustive search, and the lack of explicit algebraic expression of GCSs leads to unexpected  duplication for $M$-Qum case when $M\geq 2$.

In this paper, we propose two new constructions of GCSs and GSPs over $4^q$-QAM of length $2^{m}$.
For the generalized cases I-III \cite{Li2010A}, the generalized cases IV-V \cite{Liu2013New}, and the new constructions in this paper, the GCSs  are all expressed by the combination of standard GCSs over QPSK and compatible offsets. We re-express the compatible offsets in the generalized cases I-III and IV-V by the so-called
$\vec{d}$-vectors and $\vec{b}$-vectors derived from the Set $\mathcal{C}$ (in Definition \ref{Set-C}) and  non-symmetrical Gaussian integer pair, respectively. Furthermore, 
the new compatible offsets in this paper are constructed by the $\vec{b}$-vectors and the new $\vec{d}$-vectors based on the factorization of the integer $q$. If $q$ is a prime, the proposed constructions in this paper coincide with the generalized cases I-V. If $q$ is a  composite number, the proposed constructions  comprise of not only  the generalized cases I-V, but also a great many new GCSs.

The numbers of the proposed GCSs including the generalized cases I-V are all equal to the product of the number of the quaternary standard GCSs and the number of the compatible offsets. Since the quaternary standard GCSs was given in  \cite{Davis1999Peak}, the numbers of the proposed GCSs are determined by the number of the compatible offsets. 
It was shown in \cite{Li2010A} and \cite{Liu2013New} that the numbers of offsets in the generalized cases I-III and IV-V are a linear  polynomial of $m$ and a  quadratic  polynomial of $m$, respectively. We show that, for $q=q_{1}\times q_{2}\times \dots\times q_{t}$ ($q_k>1$), the number of new offsets in our first construction is lower bounded by a polynomial of $m$ with degree $t$.  In particular, for $q=4$, the numbers of  new  offsets in our first construction is seven times more than that in the generalized  cases IV-V. We also show that the numbers of  new  offsets in our two constructions is lower bounded by a cubic polynomial of $m$ for $q=6$.

Although the GCSs over QAM proposed in our constructions are  represented by weighted-sum of standard GCSs over QPSK, the methodology in this paper is totally different from the aforementioned references which also constructed GCSs over QAM by weighted-sum of standard GCSs. Our ideas here are inspired by the PU algorithm  \cite{Budi2018PU} and the Golay array pairs (GAPs) \cite{Array2}, and benefit from a recent proposed approach to extract GBFs from PU matrices \cite{CCA}.  GAPs and their relationship with GCPs over PSK by a three-stage construction process were introduced by F. Fiedler \textsl{et al.} \cite{Array2} in 2008. We extend this idea from PSK  modulation to  QAM modulation, and propose a mapping from a GAP of size $\underbrace{2\times 2 \times \cdots \times 2}_m$ to a large number of GCPs over QAM of length $2^m$.
 We also make a connection of the construction of  GAPs  and some specified PU matrices with multi-variables over $4^q$-QAM.  
Finally, we propose two constructions of these PU matrices, and extract the corresponding GAPs over QAM, from which we obtain our new constructions of GCPs and GCSs over QAM.

The rest of this paper is organized as follows. In the next section, we introduce the definitions of GCP and GCS, and revisit the known constructions of the GCPs over QAM. In Section 3, we present two new constructions of  $4^{q}$-QAM GCSs  including the generalized cases I-V as special cases in Theorems 1 and 2. Enumerations of the new GCSs other than the generalized cases I-V are given in Section 4. The proofs of our main results are presented in the succeeding three sections.  In Section 5,  we show our viewpoint to construct GCPs over $4^q$-QAM by GAPs and PU matrices. We then derive the PU form (in Theorems 5 and 6) and the array form (in Theorem 8) of our result in Section 6 and Section 7, respectively.  We conclude the paper in Section 8.

\section{Preliminaries}
The following notations will be used throughout the paper.
\begin{itemize}
	\item $q, p, m, L$ are all positive integers, where $0\le p<q $.
	\item 
	$\Z_4$ is the residue class ring modulo $4$. $\xi=\sqrt{-1}$ is a fourth primitive root of unity.
	\item $\mathbb{F}_2$ is the finite field with two elements, and $\mathbb{F}_2^m$ is $m$-dimensional vector space over $\mathbb{F}_2$.
    \item $\C$ is the complex field. For any $\alpha\in\C$, $ \overline{\alpha} $ is the conjugation of $ \alpha $.
	\item $\bm{x}=(x_1, x_2, \cdots x_{m})\in\mathbb{F}_2^m$, where each $x_i$ are Boolean variables for $1\leq i\leq m$.
.
	\item $\pi$ is a permutation of symbols $\{1, 2, \cdots, m\}$.
\item $x_{\pi(0)}$ and $x_{\pi(m+1)}$ as \lq fake\rq \ variables which always equal to $0$



\end{itemize}

\subsection{Golay Complementary Pairs}
Let $F(y)=(F(0), F(1),\cdots, F(L-1))$ be a complex-valued sequence of length $L$.
The {\em aperiodic auto-correlation} of  $F(y)$ at shift ${\tau}$ ($1-L\le\tau\le L-1$) is defined by
	$$C_{{F}}({\tau})=
	\sum_{{y}}{F({y}+{\tau})\cdot \overline{F}({y})},
	$$
	where ${F({y}+{\tau})\cdot \overline{F}({y})}=0$ if  ${F({y}+{\tau})}$ or ${F({y})}$ is not defined.

	A pair of sequences $ \{{F}(y), {G}(y)\}$ is said to be a {\em Golay  complementary  pair} (GCP) if
	\begin{equation}
	{C}_{{F}}(\tau)+{C}_{{G}}(\tau)=0,\quad(\forall \tau\ne 0).\label{equation_GCP}
	\end{equation}
	And  either sequence in a GCP is called a  {\em Golay complementary sequence} (GCS) \cite{Golay1961Complementary}.

\subsection{GCPs over QPSK}

A {\em generalized Boolean function} (GBF)  $f(\bm{x})$ (or $ f( x_1, x_2, \cdots, x_{m})) $ over $\Z_{4}$ is a function from $\mathbb{F}_2^m$ to $\mathbb{Z}_{4}$.
Such a function can be uniquely expressed as a linear combination over $\mathbb{Z}_{4}$ of the monomials \[1, x_1, x_2, \cdots, x_{m}, x_1x_2, x_1x_3, \cdots, x_{m-1}x_{m}, \cdots, x_1x_2x_3\cdots x_{m},\]
where the coefficient of each monomial belongs to $\mathbb{Z}_{4}$.

For $0\le y< 2^m$, $y$ can be written uniquely in a binary expansion as
$y=\sum_{j=1}^{m}x_j\cdot 2^{j-1}$
where $x_j\in\{0, 1\}$.
Then a sequence ${F(y)}$  of length $L=2^m$ over QPSK  can be associated with a GBF  $f(\bm{x})$ over $\Z_{4} $ by
\begin{equation}
F(y)=\xi^{f(\bm{x})}.
\end{equation}

There are several constructions of GCPs over QPSK based on GBFs, such as
 \cite{Davis1999Peak,Li05,Fiedler06,Fiedler2008A}.
The most typical GCPs are called standard GCPs \cite{Davis1999Peak},
associated with GBFs over $\Z_4$ given below.
\begin{fact}[\cite{Davis1999Peak}]
For  GBF
	\begin{equation}
	{f}(\bm{x})={2}\cdot\sum_{j=1}^{m-1}x_{\pi(j)}x_{\pi(j+1)}+\sum_{j=1}^{m}c_{j}\cdot x_{j}+c_0,\label{equation_f(x)}
	\end{equation}	
	where $c', c_{j}\in \Z_4 (0\le{j}\le m)$, the  sequence pair associated with the GBFs over $\Z_4$
		\begin{equation*}
		\left\{
		\begin{aligned}
		&{f}(\bm{x}),\\
		&{f}(\bm{x})+2x_{\pi(1)}+c',
		\end{aligned}\right.
		\quad\text{or}\quad
		\left\{
		\begin{aligned}
		&{f}(\bm{x}),\\
		&{f}(\bm{x})+2x_{\pi(m)}+c',
		\end{aligned}\right.
		\end{equation*}
$c'\in\Z_4$ form a GCP over QPSK.	
\end{fact}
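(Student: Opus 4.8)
The plan is to verify the defining Golay condition $C_F(\tau)+C_G(\tau)=0$ directly for every shift $\tau\neq 0$, where $F(y)=\xi^{f(\bm{x})}$ and $G(y)=\xi^{f(\bm{x})+2x_{\pi(1)}+c'}$ (I treat the first case; the second is the mirror image). Since $C_F(-\tau)=\overline{C_F(\tau)}$, it suffices to take $\tau>0$. Writing $\bm{x}$ and $\bm{x}'$ for the binary expansions of $y$ and $y+\tau$, the first observation is that $\xi^{2x_{\pi(1)}}=(-1)^{x_{\pi(1)}}$, so $G(y)=\xi^{c'}(-1)^{x_{\pi(1)}}F(y)$ and therefore $G(y+\tau)\overline{G(y)}=(-1)^{x_{\pi(1)}+x'_{\pi(1)}}\,F(y+\tau)\overline{F(y)}$; the constants $c'$ and $c_0$ cancel in the correlation. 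Summing, each term of $C_F(\tau)+C_G(\tau)$ acquires the factor $1+(-1)^{x_{\pi(1)}+x'_{\pi(1)}}$, which is $0$ whenever $x_{\pi(1)}\neq x'_{\pi(1)}$. Hence $C_F(\tau)+C_G(\tau)=2\sum_{y:\,x_{\pi(1)}=x'_{\pi(1)}}F(y+\tau)\overline{F(y)}$, and it remains to show this restricted sum vanishes.

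To do so I would build a fixed-point-free involution on the surviving index set that negates each summand. For a surviving $y$, let $v$ be the least index with $x_{\pi(v)}\neq x'_{\pi(v)}$ (well defined since $\tau\neq 0$ forces $y\neq y+\tau$, and $v\geq 2$ because $x_{\pi(1)}=x'_{\pi(1)}$). Define $\tilde{y}$ by flipping the single bit in position $\pi(v-1)$ of $y$. Because $y$ and $y+\tau$ agree in that position, flipping it is carry-free and commutes with adding $\tau$, so $\widetilde{y+\tau}=\tilde{y}+\tau$, both stay in $[0,2^m)$, the constraint $x_{\pi(1)}=x'_{\pi(1)}$ is preserved, and the value of $v$ is unchanged; thus $y\mapsto\tilde{y}$ is a well-defined involution with no fixed points.

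The heart of the argument is the phase bookkeeping. Using the explicit chain form of $f$ in (\ref{equation_f(x)}), flipping $x_{\pi(v-1)}$ alters $f$ by $2x_{\pi(v-2)}+2x_{\pi(v)}+c_{\pi(v-1)}-2c_{\pi(v-1)}x_{\pi(v-1)}\pmod 4$, where the fake-variable convention $x_{\pi(0)}=0$ absorbs the boundary case $v=2$. Subtracting this change evaluated at $\bm{x}$ from the same change evaluated at $\bm{x}'$, every contribution indexed by $k<v$ cancels because $x_{\pi(k)}=x'_{\pi(k)}$ there, leaving exactly $2\bigl(x'_{\pi(v)}-x_{\pi(v)}\bigr)\equiv 2\pmod 4$. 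Consequently $F(\tilde{y}+\tau)\overline{F(\tilde{y})}=-\,F(y+\tau)\overline{F(y)}$, so the paired terms cancel and the restricted sum is zero, giving $C_F(\tau)+C_G(\tau)=0$. The second case follows verbatim after replacing $\pi$ by its reversal and keying the involution to the largest differing index.

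I expect the main obstacle to be locating the correct bit to flip. The tempting choice of always flipping $\pi(1)$ fails: it produces the factor $2(x'_{\pi(2)}-x_{\pi(2)})$, which need not be $2\pmod 4$. The resolution is to flip the left neighbor $\pi(v-1)$ of the first disagreement, since only there does the quadratic chain $2\sum_{j}x_{\pi(j)}x_{\pi(j+1)}$ telescope so that all lower-order terms cancel and precisely the order-$2$ term at $\pi(v)$ survives. The remaining care is purely mechanical: confirming the bit flip is carry-free and that the involution respects both the range and the parity constraint.
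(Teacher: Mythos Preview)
Your argument is correct and is precisely the classical Davis--Jedwab involution proof: reduce to the indices with $x_{\pi(1)}=x'_{\pi(1)}$, locate the first disagreement index $v\ge 2$ along the chain, flip the neighbouring bit $\pi(v-1)$, and check that the phase changes by exactly $2\pmod 4$. Your bookkeeping is accurate, including the carry-free step (flipping a bit shared by $y$ and $y+\tau$ amounts to adding or subtracting $2^{\pi(v-1)-1}$ to both, which indeed commutes with the shift and keeps both in range) and the observation that the involution preserves both the parity constraint and the value of $v$.

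The paper itself does not supply a proof of this Fact; it is quoted from \cite{Davis1999Peak}, whose original proof is exactly the involution argument you wrote. What the present paper does provide, however, is an independent structural route: Corollary~\ref{coro} shows that the product $\bm{U}(\bm{z})=\bm{H}\prod_{j}\bigl(\bm{D}(z_j)\bm{H}\bigr)$ is para-unitary with associated GBF matrix $f(\bm{x})\bm{J}+2x_1\bm{A}+2x_m\bm{B}$, Theorem~\ref{them4} converts para-unitarity into the GAP property, and Theorem~\ref{them3} then transports this to all permutations $\pi$ and affine shifts $\sum c_jx_j+c_0$, recovering the statement. So your proof is elementary and self-contained, pairing terms by hand, whereas the paper's machinery trades that explicit combinatorics for a one-line algebraic verification that a matrix product is para-unitary; the latter is less transparent for this single fact but scales effortlessly to the QAM constructions that are the paper's real goal.
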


\subsection{GCPs over QAM}

In this paper, a {\em vectorial GBF (V-GBF)} is a function from $\mathbb{F}_2^m$ to $\mathbb{Z}_{4}^{q}$, denoted by \[\vec{f}(\bm{x})=(f^{(0)}(\bm{x}),f^{(1)}(\bm{x}),\cdots,f^{(q-1)}(\bm{x}))^{T}, \]
where each component function $f^{(p)}(\bm{x}) (0\le p<q )$ is a GBF over $\Z_{4}$. Let $\vec{1}$ denote the $q$-dimensional vector $(1,1,\cdots,1)^{T}$.
In this subsection, we revisit the constructions of GCPs over QAM by V-GBFs.

A sequence over $4^q$-QAM  can be viewed as the weighted sums of $q$ sequences over QPSK, with weights in the ratio of $2^{q-1}:2^{q-2}:\dots:1$ \cite{Li2010A} (by ignoring the factor $e^{\xi\pi/2}$).
Then a sequence over $4^q$-QAM of length $2^m$  can be associated with a V-GBF
 $\vec{f}(\bm{x})=(f^{(0)}(\bm{x}),f^{(1)}(\bm{x}),\cdots,f^{(q-1)}(\bm{x}))$ over $\Z_{4} $ by
\begin{equation}\label{equation_QAMBoolean}
F(y)=\sum_{p=0}^{q-1}2^{q-1-p}\cdot  \xi^{f^{(p)}(y)},
\end{equation}
where
$y=\sum_{j=1}^{m}x_{j}\cdot 2^{j-1}$ and ${f^{(p)}(y)}=f^{(p)}(\bm{x})\; (0\le p<q)$.
Obviously,  the sequences over QPSK can be seen as a special case of QAM sequences when $q=1$.

The GCPs $\{{F}(y) , {G}(y)\}$ of length $2^m$  over $4^{q}$-QAM were well studied in the literature. Their associated  V-GBFs $\{\vec{f}(\bm{x}),\vec{g}(\bm{x})\}$ are usually presented by
\begin{equation}\label{equation_QAMGolaypair}
	\left\{
	\begin{aligned}
	&\vec{f}(\bm{x})={f}(\bm{x})\cdot\vec{1}+\vec{s}(\bm{x}),\\
	&\vec{g}(\bm{x})=\vec{f}(\bm{x})+\vec{\mu}(\bm{x}),
	\end{aligned}\right.
	\end{equation}
where ${f}(\bm{x})$ are standard GCSs in form (\ref{equation_f(x)}), $\vec{s}(\bm{x})=({s}^{(0)}(\bm{x})=0,{s}^{(1)}(\bm{x}),\cdots,{s}^{(q-1)}(\bm{x}))^{T}$ and $\vec{\mu}(\bm{x})=({\mu}^{(0)}(\bm{x}),{\mu}^{(1)}(\bm{x}),\cdots,{\mu}^{(q-1)}(\bm{x}))^{T}$ are called {\em offset} V-GBFs and {\em pairing difference} V-GBFs, respectively.
The generalized cases I-III  \cite{Li2010A} and the generalized cases IV-V \cite{Liu2013New} are re-expressed by the offset V-GBFs $\vec{s}(\bm{x})$ and pairing difference V-GBFs $\vec{\mu}(\bm{x})$ in the following Facts \ref{fact2} and \ref{fact3} respectively.

\begin{definition}[Set $\mathcal{C}$]\label{Set-C}
Let $\mathcal{C}$ be a set consisting of all the vectors $\underline{d}=(d_{0}, d_{1}, d_{2})\in \Z_4^3$ such that $2d_0+d_1+d_2=0$ over $\Z_4$. Then $\underline{d}\in \mathcal{C}$ has
following $16$ possible values:
\begin{eqnarray*}
		&(0,0,0),(0,1,3),(0,2,2),(0,3,1),(1,0,2),(1,1,1),(1,2,0),(1,3,3),\\
		&(2,0,0),(2,1,3),(2,2,2),(2,3,1),(3,0,2),(3,1,1),(3,2,0),(3,3,3).
	\end{eqnarray*}
\end{definition}

Recall $x_{\pi(0)}=x_{\pi(m+1)}=0$ as \lq fake\rq\  variables. The  offset V-GBFs  $\vec{s}(\bm{x})$ in the generalized cases I-III can be re-expressed in a unified form by the following $\vec{d}$-vectors derived from the set $\mathcal{C}$.

\begin{fact}[Generalized cases I-III \cite{Li2010A}]\label{fact2}
For $1\le p\le q-1$ and ${\underline{d}^{(p)}}=({d_{0}^{(p)}}, {d_{1}^{(p)}}, {d_{2}^{(p)}})\in \mathcal{C}$, define the $\vec{d}$-vectors by {$\vec{d_{i}}=(d_{i}^{(0)},d_{i}^{(1)},\dots,d_{i}^{(q-1)})^{T}$}  ($i=0,1,2$). For $0\le \omega\le m$,
$\{{F}(y) , {G}(y)\}$  forms a $4^{q}$-QAM GCP of length $2^m$ if the offset V-GBFs
$$\vec{s}(\bm{x})=\vec{d}_{0}+\vec{d}_{1}\cdot x_{\pi(\omega)}+\vec{d}_{2}\cdot x_{\pi(\omega+1)},$$
and pairing difference V-GBFs $\vec{\mu}(\bm{x})=2 x_{\pi(1)}\cdot\vec{1}$ ($\omega\neq m$) or $\vec{\mu}(\bm{x})=2 x_{\pi(m)}\cdot\vec{1}$ ($\omega\neq 0$).
\end{fact}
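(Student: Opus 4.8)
The plan is to turn the complementarity of the $4^{q}$-QAM pair into a single Davis--Jedwab-type involution on the index set, in which Fact 1 governs the phase and the defining relation of $\mathcal{C}$ governs the amplitude. For $0\le y<2^{m}$ let $\bm u$ be its binary expansion and let $\bm v$ be the expansion of $y+\tau$. Since $\vec{s}$ depends on $\bm x$ only through $(x_{\pi(\omega)},x_{\pi(\omega+1)})$, the weight $W(\bm x)=\sum_{p=0}^{q-1}2^{q-1-p}\xi^{s^{(p)}(\bm x)}$ depends only on that pair, and $F(y)=\xi^{f(\bm u)}W(\bm u)$. Because $\vec{\mu}=2x_{\pi(1)}\vec{1}$ we have $G(y)=(-1)^{u_{\pi(1)}}F(y)$, so
\[
C_{F}(\tau)+C_{G}(\tau)=\sum_{y}\xi^{\,f(\bm v)-f(\bm u)}\,W(\bm v)\,\overline{W(\bm u)}\,\bigl(1+(-1)^{\,v_{\pi(1)}+u_{\pi(1)}}\bigr),
\]
and only the terms with $u_{\pi(1)}=v_{\pi(1)}$ survive.

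Two ingredients drive the cancellation. First, each $f^{(p)}=f+s^{(p)}$ is again of the standard shape (\ref{equation_f(x)}) --- the offset only perturbs the linear part --- and $g^{(p)}=f^{(p)}+2x_{\pi(1)}$, so by Fact 1 every component pair $\{f^{(p)},g^{(p)}\}$ is a QPSK GCP; this is the $q=1$ specialisation of the argument below and anchors it. Second, writing $W(a,b)$ for the value at $(x_{\pi(\omega)},x_{\pi(\omega+1)})=(a,b)$, the defining congruence $2d_{0}^{(p)}+d_{1}^{(p)}+d_{2}^{(p)}\equiv0$ of $\mathcal{C}$ gives $s^{(p)}(1-a,1-b)\equiv -s^{(p)}(a,b)\pmod 4$ for every $p$, hence the conjugate symmetry $W(1-a,1-b)=\overline{W(a,b)}$, and in particular $|W(1-a,1-b)|=|W(a,b)|$.

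I would then define the involution. For a surviving $y$ set $k=\min\{j:u_{\pi(j)}\ne v_{\pi(j)}\}$; as $u_{\pi(1)}=v_{\pi(1)}$ and $\tau\ne0$ we have $2\le k\le m$. Pair $y$ with the index obtained by flipping coordinate $\pi(k-1)$ in both $\bm u$ and $\bm v$, except in the borderline case $k=\omega+2$, where instead flip the two offset coordinates $\pi(\omega),\pi(\omega+1)$ simultaneously. In every case the flipped coordinates are ones on which $\bm u$ and $\bm v$ agree, so the flip alters $y$ and $y+\tau$ by the same signed amount, preserving the shift $\tau$ and the range condition; the partner is again surviving with the same $k$, and the map is a fixed-point-free involution. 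A short computation with the path quadratic form shows that under the pairing $f(\bm v)-f(\bm u)$ changes by $2$ modulo $4$ --- the only surviving contribution comes through the path edge meeting the differing coordinate $\pi(k)$ --- so the phase is multiplied by $-1$. The amplitude $W(\bm v)\overline{W(\bm u)}$ is reproduced: trivially when $\pi(k-1)$ is not an offset coordinate; by the conjugate symmetry when $k=\omega+1$, where the four values of $W$ recombine with ratio $1$; and when $k=\omega+2$ because there $\bm u,\bm v$ agree on both offset coordinates, so $W(\bm v)=W(\bm u)$, the summand equals $\xi^{\,f(\bm v)-f(\bm u)}|W(\bm u)|^{2}$, and its modulus is preserved by the conjugate symmetry. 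Paired terms therefore cancel and the sum is $0$.

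The main obstacle is precisely the amplitude $W(\bm v)\overline{W(\bm u)}$: the bare move ``flip $\pi(k-1)$'' kills the phases but can change $|W|$ when it lands on an offset coordinate, and reconciling this is where $2d_{0}+d_{1}+d_{2}=0$ is indispensable, since it is exactly the condition forcing the two offset coordinates to carry conjugate weights; the delicate value $k=\omega+2$ is what necessitates the modified ``flip both offset coordinates'' rule. This also pins down the role of the hypothesis on $\omega$: the conjugate-symmetry cancellation needs both offset coordinates to be genuine variables, i.e.\ the offset must not collapse onto the single coordinate $x_{\pi(1)}$, for otherwise only one real offset coordinate is active and $|W|$ can genuinely change. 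Finally, the variant $\vec{\mu}=2x_{\pi(m)}\vec{1}$ follows verbatim after reversing the permutation $\pi$, which swaps the two ends of the path together with the boundary fake variables $x_{\pi(0)},x_{\pi(m+1)}$.
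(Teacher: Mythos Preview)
Your approach is correct and genuinely different from the paper's. The paper does not prove Fact~2 directly: it cites \cite{Li2010A} and later absorbs the statement as the $t=1$ case of Theorem~\ref{thm_main_1}, whose proof runs through para-unitary matrices and Golay array pairs (Theorems~\ref{them3}--\ref{them8}). One builds a PU matrix ${\mathbb{M}}_1(\bm z)$ over $4^q$-QAM as a product involving a weighted sum of Hadamard matrices in $\mathcal{BH}_0$ (Lemma~\ref{lemma_QAM H}), extracts the associated V-GBF matrix via Theorem~\ref{Boolean_matrix}, and reads off the GCP. Your argument is instead a direct Davis--Jedwab involution on the correlation sum, adapted so that the amplitude factor $W(\bm v)\overline{W(\bm u)}$ is preserved. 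The key observation --- that $2d_0+d_1+d_2\equiv 0$ forces the conjugate symmetry $W(1-a,1-b)=\overline{W(a,b)}$ --- is precisely what the paper encodes algebraically as membership in $\mathcal{BH}_0$, but you exploit it combinatorially. Your route is more elementary and self-contained; the paper's route scales to several factors $q_k$ with no fresh case analysis.

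One point deserves care. You say the hypothesis on $\omega$ is what forbids ``collapse onto the single coordinate $x_{\pi(1)}$'', but the stated condition for $\vec\mu=2x_{\pi(1)}\vec 1$ is $\omega\ne m$, which still permits $\omega=0$. When $\omega=0$ the only real offset coordinate is $x_{\pi(1)}$, your $k=\omega+2=2$ move degenerates to flipping $x_{\pi(1)}$ alone, and the conjugate symmetry relates $W(0,b)$ to $W(1,1-b)$ rather than to $W(0,1-b)$, so the paired amplitudes need not match. Concretely, $q=m=2$, $\underline d^{(1)}=(0,2,2)$, $\pi=\mathrm{id}$, $f=2x_1x_2$ already gives $C_F(2)+C_G(2)=16$ with that pairing difference. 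In the paper's Theorem~\ref{thm_main_1} the boundary case $\omega=0$ carries $\vec\mu=2x_{\pi(1)}\vec 1+\vec d_1$ instead, and with that pairing the combination $W_F\overline{W_F}+W_G\overline{W_G}$ becomes constant on the surviving terms and your involution goes through unchanged; so the gap lies in the boundary clause of Fact~2 as written, not in your method.
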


\begin{definition}
[NSGIP \cite{Liu2013New}]\label{def1}
	A complex number is called a Gaussian integer if its real part and imaginary part are both integers.
	Define
	\begin{equation}
	{Q}(b_1,b_2,\dots,b_{q-1})=2^{q-1}+\sum_{p=1}^{q-1}2^{q-1-p}\xi^{b_p},\quad b_p\in\mathbb{Z}_4.
	\end{equation}
	as an one-to-one mapping from $ \mathbb{Z}_4^{q-1} $ to $ \mathcal{Q}_{q} $, which is a set consisting of $ 4^{q-1} $ Gaussian integers.

	For ${Q}_0= {Q}(b_1,b_2,\dots,b_{q-1})\in \mathcal{Q}_{q}  $ and ${Q}_1= {Q}(b'_1,b'_2,\dots,b'_{q-1})\in \mathcal{Q}_{q}  $, a pair of distinct
	Gaussian integers with identical magnitude, and which are not conjugate with each other, namely:
	\begin{equation}\label{NSGIP}
	|{Q}_0|=|{Q}_1|,\ {Q}_0\ne{Q}_1,\ \text{and}\ {Q}_0\ne \overline{Q}_{1},
	\end{equation}
$ ({Q}_0,{Q}_1) $ is called a non-symmetrical Gaussian integer pair (NSGIP).
\end{definition}

The  offset V-GBFs  $\vec{s}(\bm{x})$ of the generalized cases IV and V can be re-expressed  by the following $\vec{b}$-vectors derived from NSGIP.
\begin{fact}[Generalized cases IV-V \cite{Liu2013New}]\label{fact3}
Let $Q(b_1,b_2,\dots,b_{q-1})$ and $Q(b'_1,b'_2,\dots,b'_{q-1})$ be NSGIP.
Define the $\vec{b}$-vectors by $\vec{b}=(0,b_{1},b_{2},\dots,b_{q-1})^{T}$ and $\vec{b}'=(0,b'_{1},b'_{2},\dots,b'_{q-1})^{T}$.
$\{{F}(y) , {G}(y)\}$  forms a $4^{q}$-QAM GCP of length $2^m$ if the offset V-GBFs 	
\[ \vec{s}(\bm{x})=\vec{b}+( \vec{b}'-\vec{b})\cdot x_{\pi(\upsilon)}\]
for $2\le\upsilon\le m-1$, or
\[ \vec{s}(\bm{x})=\vec{b}+( \vec{b}'-\vec{b})\cdot x_{\pi(\upsilon_{1})}+( -\vec{b}'-\vec{b})\cdot x_{\pi(\upsilon_{2})}\]
for $1\le\upsilon_{1}\le m-2 $, $\upsilon_{1}+2\le\upsilon_{2}\le m $, and pairing difference V-GBFs $\vec{\mu}(\bm{x})=2 x_{\pi(1)}\cdot\vec{1} $ or $ 2  x_{\pi(m)}\cdot\vec{1}$.
\end{fact}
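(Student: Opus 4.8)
The plan is to realize each $4^q$-QAM sequence as a fixed standard QPSK sequence multiplied by a Gaussian-integer ``modulation factor'' depending only on the modulation bit(s), and to run the Davis--Jedwab pairing so that the magnitude condition $|Q_0|=|Q_1|$ is exactly what collapses that factor to a constant. Concretely, writing $\vec{f}=f\cdot\vec{1}+\vec{s}$ with $f$ as in (\ref{equation_f(x)}), equation (\ref{equation_QAMBoolean}) gives $F(y)=\xi^{f(\bm{x})}\sum_{p=0}^{q-1}2^{q-1-p}\xi^{s^{(p)}(\bm{x})}$. For the single-position offset $\vec{s}=\vec{b}+(\vec{b}'-\vec{b})\,x_{\pi(\upsilon)}$ the inner sum equals $Q_0$ if $x_{\pi(\upsilon)}=0$ and $Q_1$ if $x_{\pi(\upsilon)}=1$, so $F(y)=\xi^{f(\bm{x})}Q_{x_{\pi(\upsilon)}}$; for the two-position offset it takes the four values $Q_0,Q_1,\overline{Q}_1,\overline{Q}_0$ on the values of $(x_{\pi(\upsilon_1)},x_{\pi(\upsilon_2)})$, a factor I will write $\Phi(x_{\pi(\upsilon_1)},x_{\pi(\upsilon_2)})$. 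Since $\vec{\mu}=2x_{\pi(1)}\vec{1}$ and $\xi^2=-1$, the partner factors through the \emph{same} modulation, $G(y)=\xi^{\tilde{g}(\bm{x})}\cdot(\text{same factor})$ with $\tilde{g}=f+2x_{\pi(1)}$ the standard partner of $f$ from Fact 1.

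I would then treat the single-position case directly. Let $\bm{x},\bm{x}^+$ be the binary vectors of $y$ and $y+\tau$. Collecting the two correlations and using $\xi^{2(x^+_{\pi(1)}-x_{\pi(1)})}=(-1)^{x^+_{\pi(1)}-x_{\pi(1)}}$ gives
\[
C_F(\tau)+C_G(\tau)=\sum_y \xi^{f(\bm{x}^+)-f(\bm{x})}\,Q_{x^{+}_{\pi(\upsilon)}}\overline{Q}_{x_{\pi(\upsilon)}}\bigl[1+(-1)^{x^+_{\pi(1)}-x_{\pi(1)}}\bigr],
\]
so only pairs with $x^+_{\pi(1)}=x_{\pi(1)}$ survive. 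On this set I run the involution that flips $\pi(j)$ in both $\bm{x}$ and $\bm{x}^+$ simultaneously; because the two bits agree this preserves the difference $\tau$ with no carries, and the path structure $2\sum x_{\pi(j)}x_{\pi(j+1)}$ makes flipping $\pi(j)$ (on the set where $\pi(1),\dots,\pi(j)$ already agree) force the paired terms to cancel unless $x^+_{\pi(j+1)}=x_{\pi(j+1)}$. Iterating equalises the bits one position at a time; flips at positions $\neq\pi(\upsilon)$ leave the modulation factor as a spectator, so it rides along until $x^+_{\pi(\upsilon)}=x_{\pi(\upsilon)}$ is forced.

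At that moment the decisive step applies: the surviving factor is $Q_{x_{\pi(\upsilon)}}\overline{Q}_{x_{\pi(\upsilon)}}=|Q_{x_{\pi(\upsilon)}}|^2$, and by the NSGIP condition $|Q_0|=|Q_1|$ (Definition \ref{def1}) this is the \emph{same} constant for both bit values, so the modulation factor pulls out as a scalar. What remains is that constant times a restricted autocorrelation of the standard pair $\{f,\tilde{g}\}$; completing the involution down the whole path forces $\bm{x}^+=\bm{x}$, which is impossible for $\tau\neq0$, so the sum is empty and $C_F(\tau)+C_G(\tau)=0$. The range $2\le\upsilon\le m-1$ is what lets the chain be started at whichever path endpoint is \emph{not} the pairing-difference position, the case $\vec{\mu}=2x_{\pi(m)}\vec{1}$ being symmetric.

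The hard part will be the two-position offset, where this single chain breaks: the iteration must flip $\pi(\upsilon_1)$ to propagate agreement past it, but $\pi(\upsilon_1)$ is itself a modulation position, and flipping it turns $\Phi(x^{+}_{\pi(\upsilon_1)},\cdot)\,\overline{\Phi}(x_{\pi(\upsilon_1)},\cdot)$ into a combination built from the complementary bit value that does not cancel. I would resolve this by passing to the Golay-array picture the paper sets up: regard $\xi^{f}$ as a Golay complementary array of size $2\times\cdots\times2$ along the path $\pi$, and view the two modulations as tensor factors attached at the dimensions $\pi(\upsilon_1),\pi(\upsilon_2)$. The non-adjacency $\upsilon_1+2\le\upsilon_2$ guarantees there is no quadratic coupling $2x_{\pi(\upsilon_1)}x_{\pi(\upsilon_2)}$ between these dimensions, and the fact that $Q_0,Q_1,\overline{Q}_1,\overline{Q}_0$ share the single modulus $|Q_0|$, together with the relation $\Phi(a,b)=\overline{\Phi(1-a,1-b)}$, is exactly what preserves the array-complementary property; projecting the array back to a length-$2^m$ sequence then yields the GCP. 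Verifying that this projection preserves complementarity, and that the equal-modulus condition is the precise requirement, is the crux of the argument.
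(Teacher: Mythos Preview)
The paper does not prove Fact~\ref{fact3} directly; it is quoted from \cite{Liu2013New} and then recovered as the $t=0$ case of Theorem~\ref{thm_main_2}, whose proof runs through the PU-matrix/array machinery of Sections~5--7: one constructs the para-unitary matrices $\mathbb{M}_a(\bm{z})$, $\mathbb{M}_b(\bm{z})$ in Theorem~\ref{them6}, extracts the V-GBF matrices in Theorem~\ref{them8}, and reads off a GAP, hence a GCP, via Theorems~\ref{them3} and~\ref{them4}. Your route is genuinely different---a direct Davis--Jedwab style involution on the correlation sum---and for the single-position offset (case~IV) it is essentially correct: the flips at positions $j<\upsilon$ leave the factor $Q_{x^+_{\pi(\upsilon)}}\overline{Q}_{x_{\pi(\upsilon)}}$ untouched, and once the chain forces $x^+_{\pi(\upsilon)}=x_{\pi(\upsilon)}$ the NSGIP condition $|Q_0|=|Q_1|$ collapses it to the constant $|Q_0|^2$, after which the standard argument runs to the end. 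This is more elementary than the PU route and gives an honest proof of case~IV without any generating-function apparatus.

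The gap is case~V. You have correctly located the obstruction---flipping at $\pi(\upsilon_1)$ alters $\Phi$---but your proposed fix does not close it. ``Pass to the array picture and use $\Phi(a,b)=\overline{\Phi(1-a,1-b)}$'' is a heuristic, not a pairing; you yourself say the verification ``is the crux of the argument'', and it is not supplied. In the paper this crux is dispatched by a single algebraic device: the matrix $\mathbb{Q}=\begin{bmatrix}Q_0&Q_1\\ \overline{Q}_1&\overline{Q}_0\end{bmatrix}$ is applied as an element-wise product to the middle block of the PU chain between $\bm{D}(z_{\upsilon_1})$ and $\bm{D}(z_{\upsilon_2})$ (formula~(\ref{pu-2b})), and the observation that $\mathbb{Q}\odot\bm{M}(\bm{z})$ remains para-unitary whenever $\bm{M}(\bm{z})$ is---which uses exactly $|Q_0|=|Q_1|$ and the conjugate layout of $\mathbb{Q}$---replaces the broken involution wholesale. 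If you want to stay purely combinatorial you would have to produce an explicit coupled involution (e.g.\ flipping $\pi(\upsilon_1)$ and $\pi(\upsilon_2)$ together and invoking $\Phi(1{-}a,1{-}b)=\overline{\Phi(a,b)}$) and check that the induced phase from the path quadratic still cancels; the non-adjacency $\upsilon_1+2\le\upsilon_2$ would enter there. As written, that step is missing.
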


\subsection{Coefficient Matrix of Offset}

All the offsets $\vec{s}(\bm{x})$ in the generalized cases I-V and the new constructions in this paper can be represented by the form
\[
\vec{s}(\bm{x})=
\begin{pmatrix}
s^{(0)}(\bm{x})\\
s^{(1)}(\bm{x})\\
\vdots\\
s^{(q-1)}(\bm{x})\\
\end{pmatrix}
=
\begin{pmatrix}
\begin{array}{cccc}
c_{0,0}& c_{0,1} & \dots & c_{0,m} \\
c_{1,0}& c_{1,1} & \dots & c_{1,m} \\
\vdots& \vdots & \ddots & \vdots \\
c_{q-1,0}& c_{q-1,1} & \dots & c_{q-1,m} \\
\end{array}
\end{pmatrix}
\begin{pmatrix}
\begin{array}{c}
1\\
x_{\pi(1)}\\
\vdots\\
x_{\pi(m)}\\
\end{array}
\end{pmatrix},
\]
or alternatively,
$$\vec{s}(\bm{x})=(\vec{c},\vec{c}_{1},\dots,\vec{c}_{m})(1, x_{\pi(1)},\dots,x_{\pi(m)})^{T}.$$
We call $\mathcal{S}=\{c_{p,j}\}_{q\times m}=\left(
	\vec{c},\vec{c}_{1},\dots,\vec{c}_{m}
	\right)$  the {\em coefficient matrix} of $\vec{s}(\bm{x})$.

In order to get a unified representation, we define {\em redundant} vectors $\vec{c}_{0}$ and $\vec{c}_{m+1}$ corresponding to fake variables $x_{\pi(0)}$ and $x_{\pi(m+1)}$. Then the offsets $\vec{s}(\bm{x})$ can be represented by the {\em extended coefficient matrix}:
$$\vec{s}(\bm{x})=(\vec{c},\vec{c}_{0},\vec{c}_{1},\dots,\vec{c}_{m},\vec{c}_{m+1}
	)(1,x_{\pi(0)},x_{\pi(1)},\dots,x_{\pi(m)},x_{\pi(m+1)})^{T}.$$
Then the non-zero columns in the extended coefficient matrices for the generalized cases I-V are given below.
\begin{itemize}
	\item
	I-III:
	${\vec{c}}=\vec{d}_{0}$,
	$\vec{c}_{\omega}=\vec{d}_{1}$,
	$\vec{c}_{\omega+1}=\vec{d}_{2}$, for $0\le \omega\le m$;
\item IV:
	$\vec{c}=\vec{b}$ and $\vec{c}_{\upsilon}=\vec{b}'-\vec{b}$, for $2\leq{\upsilon}\leq m-1$;
	\item V: $\vec{c}=\vec{b}$, $\vec{c}_{\upsilon_{1}}=\vec{b}'-\vec{b}$ and $\vec{c}_{\upsilon_{2}}=-\vec{b}'-\vec{b}$, for  $1\le{\upsilon_{1}}\le m-2 $, $\upsilon_{1}+2\le{\upsilon_{2}}\le m $.
\end{itemize}

Thus, there are at most 2 non-zero columns $\vec{c}_j$ ($1\leq j\leq m$) in coefficient matrices of $\vec{s}(\bm{x})$ for the generalized cases I-V. In the next section, we will propose new GCPs and GCSs, by constructing more flexible  coefficient matrices of $\vec{s}(\bm{x})$.

\section{Main Results}
In this section, we will present two new constructions of GCPs over $4^{q}$-QAM. The  generalized cases I-III \cite{Li2010A} and the generalized cases IV-V \cite{Liu2013New} are
special cases of our first and second constructions, respectively. Those $\vec{d}$-vectors and $\vec{b}$-vectors derived from  the set $\mathcal{C}$ and NSGIP are still the
ingredients to construct the offset V-GBFs $\vec{s}(\bm{x})$ and pairing difference V-GBFs $\vec{\mu}(\bm{x})$. However, different from the generalized cases I-V, the new proposed $\vec{s}(\bm{x})$ and $\vec{\mu}(\bm{x})$ relate to the factorization of the integer $q$.

We first introduce the concept of mixed  radix numeral systems, in which the numerical base varies from position to position.
\begin{definition}[Mixed radix representation]\label{[Mixed radix representation}
	Suppose that $q=\prod_{k=1}^{t}q_{k}$, where $q_{k}$ ($1\le k\le t$) are positive integers larger than 1.
Then any integer $p$
($0\le p\le q-1$) can be uniquely represented by mixed radix representation as
$$p=\rho_{1}(p)+\rho_{2}(p)q_{1}+\rho_{3}(p)q_{1}q_{2}+\cdots+\rho_{t}(p)\prod_{k=1}^{t-1}q_{k}$$
for $0\le \rho_{k}(p)\le q_{k}-1$  $ (1\le k\le t)$, denoted by
$$p=(\rho_{t}(p),\rho_{t-1}(p),\dots, \rho_{1}(p))_{q_{t}q_{t-1}\dots q_{1}}.$$
\end{definition}

\begin{example}
The most familiar example  of mixed radix representation is in timekeeping system. We have 7 days in a week, 24 hours in a day, 60 minutes in a hour, and 60 seconds in a minute. The system for describing the  $604800=60\times 60 \times 24 \times 7$ runs as the follows table.
\[
\begin{tabular}{|c|c|c|c|c|}
	\hline
	        {Radix}         &  $q_1=60$  &  $q_2=60$  &  $q_3=24$  &   $q_4=7$   \\ \hline
	    {Denomination}      & second & minute &  hour  &   day   \\ \hline
	{Place value (seconds)} &  $1$   &  $q_1=60$  & $q_1q_2=3600$ & $q_1q_1q_3=86400$ \\ \hline
	{Number $p=323516$}&${\rho}_{1}(p)=56$&${\rho}_{2}(p)=51$&${\rho}_{3}(p)=17$&${\rho}_{4}(p)=3$\\\hline
\end{tabular}
\]

The $323516$th second can be represented by
$$323516=56+51\times 60+17\times 60\times 60+3\times 60\times 60\times 24.$$
Then $323516=(3,17,51,56)_{7,24,60,60}$ would be interpreted as $17:51:56$ on Wednesday.
\end{example}

\subsection{The First Construction}

Let $q=q_1\times q_2\times\cdots\times q_t$ be an ordered factorization of $q$, where $q_{k}\ge2 \;(1\le k\le t)$ are positive integers. Then any integer $p$ $(0\le p\le q-1)$ can be represented by  $(\rho_{t}(p),\rho_{t-1}(p),\dots, \rho_{1}(p))_{q_{t}q_{t-1}\dots q_{1}}$ according to the mixed radix representation. A tabular summary is given below.
\[\begin{tabular}{|c|c|c|c|c|c|}
\hline
{Radix}&$q_{1}$&$q_{2}$&$q_{3}$&$\dots$&$q_{t}$\\
\hline
{Place value}& $1$ &$q_{1}$&$q_{1}\cdot q_{2}$&$\dots$&$\prod_{k=1}^{t-1}q_{k}$\\\hline
{Number} $p$ &${\rho}_{1}(p)$&${\rho}_{2}(p)$&${\rho}_{3}(p)$&$\dots$&${\rho}_{t}(p)$\\
\hline
\end{tabular}\]

Recall the set $\mathcal{C}$ given in Definition \ref{Set-C}. We define the new $\vec{d}$-vectors $\vec{d_{i}}^{(k)}$ with respected to the ordered factorization of $q$.
\begin{definition}[$\vec{d}$-vector $\vec{d_{i}}^{(k)}$]\label{def_d}
For $1\leq k\leq t$, $1\le p_{k}\le q_{k}-1$, We arbitrarily choose ${\underline{d}^{(k,p_{k})}}=\left({d_{0}^{(k,p_{k})}}, {d_{1}^{(k,p_{k})}}, {d_{2}^{(k,p_{k})}}\right)\in \mathcal{C}$ and always define  ${\underline{d}^{(k,0)}}=(0,0,0)$. Then, for $1\le k\le t$ and $i=0,1,2$,  we define the $\vec{d}$-vectors  by
$$\vec{d_{i}}^{(k)}=\left(d_{i}^{(k,{\rho}_{k}(0))},d_{i}^{(k,{\rho}_{k}(1))},\dots,d_{i}^{(k,{\rho}_{k}(q-1))}\right)^{T}.$$
\end{definition}

According to the definition of the set $\mathcal{C}$, we have
 $2\vec{d_{0}}^{(k)}+\vec{d_{1}}^{(k)}+\vec{d_{2}}^{(k)}=\vec{0}$. Moreover, If the factorization of $q$ is trivial, we have $t=1$, and the $\vec{d}$-vectors given  in Definition
 \ref{def_d} agree with the $\vec{d}$-vectors for the generalized cases I-III shown in Fact \ref{fact2}.
\begin{example}\label{exam1}
For $q=6$ and  ordered factorization $q=q_1\times q_2=3\times2$, the mixed radix representation of $p=(\rho_2(p), \rho_1(p))_{2,3}$ is given in the left side of the following table. Moreover, if we choose
\[
		\underline{d}^{(1,1)}=(0,1,3),\quad
		\underline{d}^{(1,2)}=(1,0,2),\quad
		\underline{d}^{(2,1)}=(3,1,2),\quad
		\underline{d}^{(1,0)}=\underline{d}^{(2,0)}=(0,0,0),
		\]		
which can be read row by row in the middle and right sides of the table. Then we can read the corresponding $\vec{d}$-vectors $\vec{{d}_i}^{(1)}$ and $\vec{{d}_i}^{(2)}$ column by column. For example, $\vec{{d}_2}^{(1)}=(0,3,2,0,3,2)^{T}$, $\vec{{d}_0}^{(2)}=(0,0,0,3,3,3)^{T}$.
\[\begin{tabular}{|c|cc||c|ccc||c|ccc|}
\hline
$ p $&${\rho}_{1}(p)$&${\rho}_{2}(p)$&$\underline{d}^{(1,{\rho}_1(p))}$&$\vec{{d}_0}^{(1)}$&$\vec{{d}_1}^{(1)}$&$\vec{{d}_2}^{(1)}$&$\underline{d}^{(2,{\rho}_2(p))}$&$\vec{{d}_0}^{(2)}$&$\vec{{d}_1}^{(2)}$&$\vec{{d}_2}^{(2)}$\\\hline
$ 0 $&$ 0 $&$ 0 $&$\underline{d}^{(1,0)}$&$ 0 $&$ 0 $&$ 0 $&$\underline{d}^{(2,0)}$&$ 0 $&$ 0 $&$ 0 $\\
$ 1 $&$ 1 $&$ 0 $&$\underline{d}^{(1,1)}$&$ 0 $&$ 1 $&$ 3 $&$\underline{d}^{(2,0)}$&$0$&$0$&$0$\\
$ 2 $&$ 2 $&$ 0 $&$\underline{d}^{(1,2)}$&$1$&$0$&$2$&$\underline{d}^{(2,0)}$&$0$&$0$&$0$\\
$ 3 $&$ 0 $&$ 1 $&$\underline{d}^{(1,0)}$&$0$&$0$&$0$ &$\underline{d}^{(2,1)}$&$3$&$1$&$2$\\
$ 4 $&$ 1 $&$ 1 $&$\underline{d}^{(1,1)}$&$0$&$1$&$3$&$\underline{d}^{(2,1)}$&$3$&$1$&$2$\\
$ 5 $&$ 2 $&$ 1 $&$\underline{d}^{(1,2)}$&$1$&$0$&$2$&$\underline{d}^{(2,1)}$&$3$&$1$&$2$\\
\hline
\end{tabular}\]		
\end{example}

\begin{theorem}\label{thm_main_1}
	Suppose that factorization of $q$  and $\vec{d}$-vectors $\vec{d_{i}}^{(k)}$ are given above.
For arbitrary ordered position set $\{\omega_{1},\omega_{2},\dots,\omega_{t}\}\subset \{0,1,\dots,m\}$,
sequences over $4^{q}$-QAM of length $2^m$ associated with V-GBFs in form (\ref{equation_QAMGolaypair}) form a  GCP  if the offset V-GBFs $\vec{s}(\bm{x})$ satisfy
$$\vec{s}(\bm{x})=\sum_{k=1}^{t}\left(\vec{d_{1}}^{(k)}x_{\pi(\omega_{k})}+\vec{d_{2}}^{(k)}x_{\pi({\omega_{k}}+1)}+\vec{d_{0}}^{(k)}\right)$$
and the pairing difference V-GBFs $\vec{\mu}(\bm{x})$  satisfy
$$ \vec{\mu}(\bm{x})=\left\{
	\begin{aligned}
	&2x_{\pi(1)}\cdot\vec{1}+\vec{d_{1}}^{(k)},&{\exists k,\,\omega_{k}=0};\\
	&2x_{\pi(1)}\cdot\vec{1},&\text{otherwise};
	\end{aligned}\right.
	\quad\text{or}\;\left\{
	\begin{aligned}
	&2x_{\pi(m)}\cdot\vec{1}+d_{2}^{(k)},&\exists k,\,\omega_{k}=m;\\
	&2x_{\pi(m)}\cdot\vec{1},&\text{otherwise}.
	\end{aligned}\right. $$
\end{theorem}

The validity of Theorem \ref{thm_main_1} will be proved in Subsection 7.2.

Since the $\vec{d}$-vectors in Theorem \ref{thm_main_1} coincide with $\vec{d}$-vectors in the generalized cases I-III if the factorization of $q$ is trivial,
the generalized cases I-III in \cite{Li2010A} are special cases in our Theorem \ref{thm_main_1} for $t=1$. For composite number $q$ and non-trivial factorization,
new GCPs and GCSs over QAM are constructed.

\begin{example}\label{exam2}
For $q=6=q_1\times q_2=3\times 2$, $\omega_{1}=m$ and $1\le \omega_{2}=\omega \leq m-2$, the offset and the pairing difference V-GBFs in Theorem \ref{thm_main_1} are given below.
\begin{eqnarray*}
\vec{s}(\bm{x})
&=&\vec{d_{1}}^{(2)}\cdot x_{\pi(\omega)}+\vec{d_{2}}^{(2)}\cdot x_{\pi(\omega{+}1)}+\vec{d_{0}}^{(2)}
+\vec{d_{1}}^{(1)}\cdot x_{\pi(m)}+\vec{d_{0}}^{(1)},\\
		\vec{\mu}(\bm{x})&=&2 x_{\pi(1)}\cdot\vec{1}
	\quad\text{or}\ \ 2x_{\pi(m)}\cdot\vec{1}+\vec{d_2}^{(1)}.
	\end{eqnarray*}
The details of the offset V-GBFs are shown by the following table.
\[\begin{tabular}{|c|cc|ll|c|}
\hline
$ p $&${\rho}_{1}(p)$&${\rho}_{2}(p)$&${d}_i^{(1,{\rho}_1(p))}$&${d}_i^{(2,{\rho}_2(p))}$&{offset} : $s^{(p)}(\bm{x})$    \\\hline
$ 0 $&$ 0 $&$ 0 $&${d}_i^{(1,0)}=0$&${d}_i^{(2,0)}=0$&$0$\\
$ 1 $&$ 1 $&$ 0 $&${d}_i^{(1,1)}$&${d}_i^{(2,0)}=0$&$d_{1}^{(1,1)}x_{\pi(m)}+d_{0}^{(1,1)}$\\
$ 2 $&$ 2 $&$ 0 $&${d}_i^{(1,2)}$&${d}_i^{(2,0)}=0$&$d_{1}^{(1,2)}x_{\pi(m)}+d_{0}^{(1,2)}$\\
$ 3 $&$ 0 $&$ 1 $&${d}_i^{(1,0)}=0$&${d}_i^{(2,1)}$&$d_{1}^{(2,1)}x_{\pi(\omega)}+d_{2}^{(2,1)}x_{\pi(\omega+1)}+d_{0}^{(2,1)}$\\
$ 4 $&$ 1 $&$ 1 $&${d}_i^{(1,1)}$&${d}_i^{(2,1)}$&
$d_{1}^{(2,1)}x_{\pi(\omega)}+d_{2}^{(2,1)}x_{\pi(\omega+1)}+d_{0}^{(2,1)}+d_{1}^{(1,1)}x_{\pi(m)}+d_{0}^{(1,1)}$\\
$ 5 $&$ 2 $&$ 1 $&${d}_i^{(1,2)}$&${d}_i^{(2,1)}$&
$d_{1}^{(2,3)}x_{\pi(\omega)}+d_{2}^{(2,1)}x_{\pi(\omega+1)}+d_{0}^{(2,1)}+d_{1}^{(1,2)}x_{\pi(m)}+d_{0}^{(1,2)}$\\
\hline
\end{tabular}\]	
Moreover, if we choose $\underline{d}^{(k)}$ ($k=1,2$) from Example \ref{exam1}, the non-zero columns of corresponding coefficient matrix of offset is given by
\[
\mathcal{S}
=\bordermatrix{
&\vec{c}&\dots& \vec{c}_{\omega} & \vec{c}_{\omega+1} & \dots & \vec{c}_{m}\cr
&0&\dots&0&0& \dots &0\cr
&0&\dots&0&0& \dots &1\cr
&1&\dots&0&0& \dots &0\cr
&3&\dots&1&2& \dots &0\cr
&3&\dots&1&2& \dots &1\cr
&0&\dots&1&2& \dots &0\cr},
\]
from which we can see there are 3 non-zero $\vec{c}_j$ for $1\leq j\leq m$. This demonstrates that the construction in Theorem \ref{thm_main_1} produces new GCSs over QAM.
\end{example}

\subsection{The Second Construction}

In this subsection, we slightly modify the conditions in the first construction, and obtain our second construction which includes the generalized cases IV-V in \cite{Liu2013New} as special cases.

Let $q=q_0\times q_1\times\cdots\times q_t$ be an ordered factorization of $q$, where  ${q_{0}}\ge3$ and $q_{k}\ge2$  $(1\le k\le t)$ are positive integers.
Then any integer $p$ $(0\le p\le q-1)$ can be represented by  $(\rho'_{t}(p),\rho'_{t-1}(p),\dots, \rho'_{0}(p))_{q_{t}q_{t-1}\dots q_{0}}$ according to the mixed radix representation. A tabular summary is given below.
\[\begin{tabular}{|c|c|c|c|c|c|}
		\hline
		{Radix}&$q_{0}$&$q_{1}$&$q_{2}$&$\dots$&$q_{t}$\\
		\hline
		{Place value}& $1$ &$q_{0}$&$q_{0}\cdot q_{1}$&$\dots$&$\prod_{k=0}^{t-1}q_{k}$\\\hline
		{Number}&${\rho}'_{0}(p)$&${\rho}'_{1}(p)$&${\rho}'_{2}(p)$&$\dots$&${\rho}'_{t}(p)$\\
		\hline
\end{tabular}\]

Recall the set $\mathcal{C}$ in Definition \ref{Set-C} and NSGIP in Definition \ref{def1}. We define the $\vec{d}$-vectors and $\vec{b}$-vectors with respected to the ordered factorization of $q$ in this subsection.

\begin{definition}[$\vec{d}$-vector $\vec{d_{i}}^{[k]}$ and $\vec{b}$-vector]\label{p'_decomposition}
For $1\leq k\leq t$, $1\le p_{k}\le q_{k}-1$, we arbitrarily choose ${\underline{d}^{(k,p_{k})}}=\left({d_{0}^{(k,p_{k})}}, {d_{1}^{(k,p_{k})}}, {d_{2}^{(k,p_{k})}}\right)\in \mathcal{C}$, and always define ${\underline{d}^{(k,0)}}=(0,0,0)$. Then, for $1\le k\le t$ and $i=0,1,2$,  we define the $\vec{d}$-vectors  by
$$\vec{d_{i}}^{[k]}=\left(d_{i}^{(k,{\rho}'_{k}(0))},d_{i}^{(k,{\rho}'_{k}(1))},\dots,d_{i}^{(k,{\rho}'_{k}(q-1))}\right)^{T}.$$
	
	Suppose that $Q_0= Q(b_1,b_2,\dots,b_{q_{0}-1})$ and $Q_1= Q(b'_1,b'_2,\dots,b'_{q_{0}-1})$ are NSGIP over $\mathcal{Q}_{q_0}$ introduced in Definition \ref{def1}. Let $b_{0}=b'_{0}=0$. We define the $\vec{b}$-vectors  by $$\vec{b}=\left(b_{{\rho}'_{0}(0)},b_{{\rho}'_{0}(1)},\dots,b_{{\rho}'_{0}(q-1)}\right)^{T} \mbox{and}\
 \vec{b}'=\left(b'_{{\rho}'_{0}(0)},b'_{{\rho}'_{0}(1)},\dots,b'_{{\rho}'_{0}(q-1)}\right)^{T}.$$
\end{definition}

\begin{example}\label{exam3}
For $q=6=q_0\times q_1=3\times2$, the mixed radix representation of $p=(\rho'_1(p), \rho'_0(p))_{2,3}$ is shown in the left side of the following table. If we choose
$\underline{d}^{(1,1)}=(3,1,2)$, which can be read row by row in the middle of the table,  then we can read the corresponding $\vec{d}$-vectors column by column.

One typical NSGIP over $\mathcal{Q}_{3}$ is ${Q}_0= {Q}(0,2)=5$ and ${Q}_1= {Q}(1,1)=4+3\xi$. The corresponding $\vec{b}$-vectors are given in the right side of the table.
\[\begin{tabular}{|c|cc||c|ccc||c|c|c|c|}
	\hline
	$ p $&${\rho}'_{0}(p)$&${\rho}'_{1}(p)$&$\underline{d}^{(1,{\rho}'_1(p))}$&$\vec{{d}_0}^{[1]}$&$\vec{{d}_1}^{[1]}$&$\vec{{d}_2}^{[1]}$&${b}_{{\rho}'_{0}(p)}$&$\vec{b}$&${b}'_{{\rho}'_{0}(p)}$&$\vec{b}'$\\\hline
	$ 0 $&$ 0 $&$ 0 $&$\underline{d}^{(1,0)}$&$ 0 $&$ 0 $&$ 0 $&$b_{0}$&$ 0 $&$b'_{0}$&$ 0 $\\
	$ 1 $&$ 1 $&$ 0 $&$\underline{d}^{(1,0)}$&$ 0 $&$ 0 $&$ 0 $&$b_{1}$&$0$&$b'_{1}$&$1$\\
	$ 2 $&$ 2 $&$ 0 $&$\underline{d}^{(1,0)}$&$0$&$0$&$0$&$b_{2}$&$2$&$b'_{2}$&$1$\\
	$ 3 $&$ 0 $&$ 1 $&$\underline{d}^{(1,1)}$&$3$&$1$&$2$ &$b_{0}$&$0$&$b'_{0}$&$0$\\
	$ 4 $&$ 1 $&$ 1 $&$\underline{d}^{(1,1)}$&$3$&$1$&$2$&$b_{1}$&$0$&$b'_{1}$&$1$\\
	$ 5 $&$ 2 $&$ 1 $&$\underline{d}^{(1,1)}$&$3$&$1$&$2$&$b_{2}$&$2$&$b'_{2}$&$1$\\
	\hline
	\end{tabular}\]		
\end{example}

\begin{theorem}\label{thm_main_2}
Suppose that factorization of $q$, $\vec{d}$-vectors  $\vec{d_{i}}^{[k]}$ and $\vec{b}$-vectors $\vec{b}$, $\vec{b}'$
are given above. For arbitrary ordered position set $\{\omega_{1},\omega_{2},\dots,\omega_{t}\}\subset \{0,1,\dots,m\}$,
sequences over $4^{q}$-QAM of length $2^m$ associated with V-GBFs in form (\ref{equation_QAMGolaypair}) form a  GCP  if the offset V-GBFs $\vec{s}(\bm{x})$ satisfy
\begin{itemize}
\item{Case (a)}: for $2\leq {\upsilon}\leq m-1$,	
$$\vec{s}(\bm{x})=\sum_{k=1}^{t}\left(\vec{d_{1}}^{[k]}\cdot x_{\pi(\omega_{k})}+\vec{d_{2}}^{[k]}\cdot x_{\pi({\omega_{k}}+1)}+\vec{d_{0}}^{[k]}\right)+\left(\left(\vec{b}'-\vec{b}\right)\cdot x_{\pi(\upsilon)}+\vec{b}\right),$$
\item{Case (b)}: for $1\le{\upsilon_{1}}\le m-2 $, $\upsilon_{1}+2\le{\upsilon_{2}}\le m $,	
$$\vec{s}(\bm{x})=\sum_{k=1}^{t}\left(\vec{d_{1}}^{[k]}\cdot x_{\pi(\omega_{k})}+\vec{d_{2}}^{[k]}\cdot x_{\pi({\omega_{k}}+1)}+\vec{d_{0}}^{[k]}\right)+\left(\left(\vec{b}'-\vec{b}\right)\cdot x_{\pi(\upsilon_{1})}+\left(-\vec{b}'-\vec{b}\right)\cdot x_{\pi(\upsilon_{2})}+\vec{b}\right),$$
\end{itemize}
and the pairing difference V-GBFs $\vec{\mu}(\bm{x})$  satisfy
$$ \vec{\mu}(\bm{x})=\left\{
			\begin{aligned}
			&2x_{\pi(1)}\cdot\vec{1}+\vec{d_{1}}^{[k]},&{\exists k,\,\omega_{k}=0};\\
			&2x_{\pi(1)}\cdot\vec{1},&\text{otherwise};
			\end{aligned}\right.
			\quad\text{or}\;\left\{
			\begin{aligned}
			&2x_{\pi(m)}\cdot\vec{1}+d_{2}^{[k]},&\exists k,\,\omega_{k}=m;\\
			&2x_{\pi(m)}\cdot\vec{1},&\text{otherwise}.
			\end{aligned}\right. $$
\end{theorem}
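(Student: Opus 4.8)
The plan is to prove Theorem~\ref{thm_main_2} not by evaluating the aperiodic autocorrelations directly --- hopeless for a weighted sum of $q$ QPSK sequences --- but by lifting the pair $\{F,G\}$ to a pair of $m$-dimensional arrays of size $\underbrace{2\times\cdots\times2}_{m}$ over $4^{q}$-QAM and proving that this is a Golay array pair (GAP). Following the route outlined in the introduction and developed in Sections 5--7, one attaches to the V-GBFs $\vec f,\vec g$ of (\ref{equation_QAMGolaypair}) the arrays $\mathcal F(\bm x)=\sum_{p=0}^{q-1}2^{q-1-p}\xi^{f^{(p)}(\bm x)}$ and $\mathcal G(\bm x)$, in which the quadratic chain $2\sum_{j}x_{\pi(j)}x_{\pi(j+1)}$ of the standard GBF (\ref{equation_f(x)}) is stripped off and re-encoded in the reading order $\pi$, leaving each component array affine in the $x_{\pi(j)}$. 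Once $\{\mathcal F,\mathcal G\}$ is shown to be a GAP, the array-form result (Theorem 8), which projects any Golay array pair to a Golay complementary pair under every permutation $\pi$, yields at once that $\{F,G\}$ is a GCP over $4^{q}$-QAM.

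To exhibit the GAP I would realize $\mathcal F$ and $\mathcal G$ as a product of para-unitary (PU) matrices, one factor per coordinate $x_{\pi(j)}$, with entries in the QAM alphabet; complementarity of $\{\mathcal F,\mathcal G\}$ is then equivalent to para-unitarity of the product, hence to (scaled) unitarity of each factor. The structural feature that makes this manageable is the mixed-radix decomposition $q=q_{0}\times q_{1}\times\cdots\times q_{t}$: the leading coordinate $\rho'_{0}$, ranging over $\{0,\dots,q_{0}-1\}$, carries the NSGIP data $(\vec b,\vec b')$, while the coordinates $\rho'_{1},\dots,\rho'_{t}$ carry the $\vec d$-vectors $\vec{d_{i}}^{[k]}$ drawn from the set $\mathcal C$. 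Since these two groups of coordinates are independent in the mixed-radix expansion, the $\vec b$-contribution and each $\vec d$-contribution to a given PU factor factorize, so unitarity can be verified block by block.

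The verification then splits into two clean checks, each already validated elsewhere in the excerpt. For a factor governed by a $\vec d$-vector, the relevant block is unitary precisely because $2\vec{d_{0}}^{[k]}+\vec{d_{1}}^{[k]}+\vec{d_{2}}^{[k]}=\vec 0$, the defining relation of $\mathcal C$ --- the mechanism behind Theorem~\ref{thm_main_1} and Fact~\ref{fact2}. For the factor governed by $(\vec b,\vec b')$, unitarity is equivalent to $|Q_{0}|=|Q_{1}|$, exactly the magnitude condition (\ref{NSGIP}) of the NSGIP, while $Q_{0}\ne Q_{1}$ and $Q_{0}\ne\overline{Q_{1}}$ ensure the resulting GCP is genuinely new rather than conjugate-equivalent; this recovers Fact~\ref{fact3}. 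Cases (a) and (b) correspond to placing the single NSGIP factor at an interior position $\upsilon$, respectively splitting it across two non-adjacent positions $\upsilon_{1},\upsilon_{2}$ (the gap $\upsilon_{2}\ge\upsilon_{1}+2$ keeping the two factors from sharing a coordinate), just as in cases IV and V.

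The main obstacle I anticipate is the boundary bookkeeping together with the simultaneous compatibility of the $\vec d$- and $\vec b$-parts inside one PU product. When some $\omega_{k}=0$ or $\omega_{k}=m$, the offset column attached to the fake variable $x_{\pi(0)}$ or $x_{\pi(m+1)}$ falls off the array, and the prescribed correction $\vec\mu(\bm x)=2x_{\pi(1)}\cdot\vec 1+\vec{d_{1}}^{[k]}$ or $2x_{\pi(m)}\cdot\vec 1+d_{2}^{[k]}$ is exactly what re-absorbs it so the terminal PU factor stays unitary. Showing that this correction is consistent for all $t$ position choices at once, and that the $\vec b$-factor --- whose position is chosen independently of the $\omega_{k}$ --- never collides destructively with a $\vec d$-factor, is the delicate step; after the independence of the mixed-radix coordinates reduces every factor to a $\vec b$-block times $\vec d$-blocks, what remains are the routine unitarity checks already established for Theorem~\ref{thm_main_1} and Fact~\ref{fact3}.
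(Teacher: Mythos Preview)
Your proposal is correct and follows essentially the same route as the paper: build PU matrices over $4^{q}$-QAM (Theorem~\ref{them6}) whose mixed-radix decomposition separates the NSGIP factor $diag\{Q_{0},Q_{1}\}$ or $\mathbb{Q}\odot(\cdot)$ from the $\mathcal C$-based factors $\mathbb{H}^{\{k\}}$, verify para-unitarity of each piece (Lemma~\ref{lemma_QAM H} for the $\vec d$-blocks, $|Q_{0}|=|Q_{1}|$ for the $\vec b$-block), extract the V-GBF matrix (Theorem~\ref{them8}) to read off a GAP, and then project to GCPs via permutation and affine shift. One small correction of labels: in the paper's numbering the GAP-to-GCP projection under $\pi$ and affine GBFs is Theorem~\ref{them3}, while Theorem~\ref{them8} is the V-GBF extraction step.
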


We will prove Theorem \ref{thm_main_2} in Subsection 7.2.

Since the $\vec{b}$-vectors in Theorem \ref{thm_main_2} coincide with $\vec{b}$-vectors in the generalized cases IV-V if the factorization of $q=q_0$ is trivial,
the generalized cases IV-V in \cite{Liu2013New} are special cases of our Theorem \ref{thm_main_2} for $t=0$. For composite number $q$ and non-trivial factorization,
new GCPs and GCSs over QAM are constructed. We give an example of Case (b) to illustrate it.


\begin{example}
For $q=6=q_0\times q_1=3\times2$, $1\le\omega_{1}=\omega \le m-1$,
 the offset and the pairing difference V-GBFs of Case (b) in Theorem \ref{thm_main_2} are given below.
\begin{eqnarray*}
\vec{s}(\bm{x})
			&=&\vec{d_{1}}^{(1)}\cdot x_{\pi(\omega)}+\vec{d_{2}}^{(1)}\cdot x_{\pi(\omega{+}1)}+\vec{d_{0}}^{(1)}
			+(\vec{b}'-\vec{b})x_{\pi(\upsilon_{1})}+(-\vec{b}'-\vec{b})x_{\pi(\upsilon_{2})}+\vec{b},\\
		\vec{\mu}(\bm{x})&=&2x_{\pi(1)}\cdot\vec{1}
			\quad\text{or}\quad
			2x_{\pi(m)}\cdot\vec{1}.
	\end{eqnarray*}
The details of the offset V-GBFs can be shown by the following table.
	\[\begin{tabular}{|c|cc|l|c|}
	\hline
	$ p $&${{\rho}'_{0}(p)}$&${\rho}'_{1}(p)$&${d}_i^{(1,{\rho}'_1(p))}$&{offset} : ${s}^{(p)}(\bm{x})  $    \\\hline
	$ 0 $&$ 0 $&$ 0 $&${d}_i^{(1,0)}=0$&$0$\\
	$ 1 $&$ 1 $&$ 0 $&${d}_i^{(1,0)}=0$&$(b'_{1}-b_{1})x_{\pi(\upsilon_{1})}+(-b'_{1}-b_{1})x_{\pi(\upsilon_{2})}+{b_1}$\\
	$ 2 $&$ 2 $&$ 0 $&${d}_i^{(1,0)}=0$&$(b'_{2}-b_{2})x_{\pi(\upsilon_{1})}+(-b'_{2}-b_{2})x_{\pi(\upsilon_{2})}+{b_2}$\\
	$ 3 $&$ 0 $&$ 1 $&${d}_i^{(1,1)}$&$d_{1}^{(1,1)}x_{\pi(\omega)}+d_{2}^{(1,1)}x_{\pi(\omega+1)}+d_{0}^{(1,1)}$\\
	$ 4 $&$ 1 $&$ 1 $&${d}_i^{(1,1)}$&$d_{1}^{(1,1)}x_{\pi(\omega)}+d_{2}^{(1,1)}x_{\pi(\omega+1)}+(b'_{1}-b_{1})x_{\pi(\upsilon_{1})}+(-b'_{1}-b_{1})x_{\pi(\upsilon_{2})}+d_{0}^{(1,1)}+{b_1}$\\
	$ 5 $&$ 2 $&$ 1 $&${d}_i^{(1,1)}$&$d_{1}^{(1,1)}x_{\pi(\omega)}+d_{2}^{(1,1)}x_{\pi(\omega+1)}+(b'_{2}-b_{2})x_{\pi(\upsilon_{1})}+(-b'_{2}-b_{2})x_{\pi(\upsilon_{2})}+d_{0}^{(1,1)}+{b_2}$\\
	\hline	\end{tabular}\]	

If we choose $\underline{d}^{[1]}$, $\vec{b}$ and $\vec{b}'$ from Example \ref{exam3} and $\upsilon_1=\omega$,  the non-zero columns of coefficient matrix of offset is given by
\[
\mathcal{S}
=\bordermatrix{
	&\vec{c}&\dots& \vec{c}_{\omega} & \vec{c}_{\omega+1} & \dots & \vec{c}_{\upsilon_{2}} & \dots \cr
	&0&\dots&0&0& \dots  &0& \dots \cr
	&0&\dots&3&3& \dots  &3& \dots \cr
	&2&\dots&1&1& \dots  &1& \dots \cr
	&3&\dots&1&2& \dots  &0& \dots \cr
	&3&\dots&0&1& \dots  &3& \dots \cr
	&1&\dots&2&3& \dots  &1& \dots \cr}
\]
from which we can see there are 3 non-zero $\vec{c}_j$ for $1\leq j\leq m$. Moreover, if $\upsilon_1, \upsilon_2\notin \{\omega, \omega+1\}$, there will be 4 non-zero $\vec{c}_j$  for $1\leq j\leq m$ in the coefficient matrix of offset.
This demonstrate that the construction in Theorem \ref{thm_main_2} produce new GCSs over QAM.
\end{example}

\section{Enumerations}

The number of the  GCSs over $4^{q}$-QAM of length $2^{m}$ constructed in Theorems \ref{thm_main_1} and \ref{thm_main_2} is equal to the product of the number of the
standard GCSs ${f}(\bm{x})$ over QPSK and the number of the compatible offsets $\vec{s}(\bm{x})$, i.e.,
$$\#\{\vec{s}(\bm{x})\}\times \#\{{f}(\bm{x})\}.$$
It is well known that the number of the standard GCSs over QPSK is given by $\#\{{f}(\bm{x})\}=(m!/2)4^{(m+1)}$. So the enumeration of the GCSs is determined by the number of the compatible offsets. Moreover, each offset  can be uniquely represented by its coefficient matrix, so we have $\#\{\vec{s}(\bm{x})\}= \#\{\mathcal{S}\}$.

For given $q$, it was shown that the number of the compatible offsets in the generalized cases I-III and cases IV-V are a linear polynomial of $m$ \cite{Li2010A} and  a quadratic polynomial of $m$ \cite{Liu2013New}, respectively. Notice that the coefficient matrices of  offsets  in the generalized cases I-V have at most 2 non-zero columns $\vec{c}_j$ ($1\leq j\leq m$). For $q=4$ and $q=6$, by studying the coefficient matrices  with three and four non-zero columns $\vec{c}_j$, a lower bound of the numbers of new offsets other than the generalized cases I-V  is shown in Table \ref{table-3}. Moreover, for $q=q_{1}\times q_{2}\times \dots\times q_{t}$ ($q_k>1$), we show that the number of the new offsets in Theorem 1 is lower bounded by a polynomial of $m$ with degree $t$.

 \begin{table}
	\centering
	\caption{Comparisons of the numbers of the compatible offsets}	
\begin{threeparttable}
 \begin{tabular}{|c|c|c|c|}\hline
           & $ q=4 $& $q=6$ &$q=q_{1}\times q_{2}\times \dots\times q_{t}$\\\hline
\thead{The generalized\\ cases I-III \cite{Li2010A}}& $ 4032m+4040 $& $ 1047552m+1047584 $& $f_1(m)^*\  (deg(f_1)=1)$\\ \hline
\thead{The generalized\\cases IV-V \cite{Liu2013New}}   &$ 14(m^2-m-2)$&$ 584(m^2-m-2) $  &$f_2(m)^\dag\  (deg(f_2)=2)$  \\ \hline
\thead{New constructions\\in this paper}  &$\geq 100(m^2-m-2)$ &   $\geq (3700+20m)(m^2-m-2)$ &   $\geq f_t(m)^\ddag\  (deg(f_t)=t)$  \\ \hline
\end{tabular} \label{table-3}
\begin{tablenotes}\footnotesize
	\item[*]
	Linear polynomial $f_1(m)$ was given in \cite{Li2010A}.
	\item[\dag]
	Quadratic polynomial $f_2(m)$ was given in \cite{Liu2013New}.
    \item[\ddag]
	Polynomial $f_t(m)$  of degree $t$ is given in formula (\ref{lower-bound}).
\end{tablenotes}	
\end{threeparttable}		
\end{table}

Before listing the new compatible offsets, we classify the set $\mathcal{C}$ in Definition \ref{Set-C} into four classes according to the values of $d_{1}$ and $d_{2}$.
\begin{eqnarray*}
\mathcal{C}_{1}&=&\{(1,1,1),(3,1,1),(0,1,3),(2,1,3),(0,2,2),(2,2,2),(0,3,1),(2,3,1),(1,3,3),(3,3,3)\},\\
\mathcal{C}_{2}&=&\{(1,0,2),(3,0,2)\},\\
\mathcal{C}_{3}&=&\{(1,2,0),(3,2,0)\},\\
\mathcal{C}_{4}&=&\{(0,0,0),(2,0,0)\}.
\end{eqnarray*}
Then we have $d_{1}, d_{2}\neq 0$ for $\underline{d}=(d_{0}, d_{1}, d_{2})\in \mathcal{C}_{1}$,  $d_{1}=0, d_{2}\neq 0$ for $\underline{d}\in \mathcal{C}_{2}$, $d_{1}\neq 0, d_{2}=0$ for $\underline{d}\in \mathcal{C}_{3}$, and $d_{1}=d_{2}=0$ for $\underline{d}\in \mathcal{C}_{4}$.

\subsection{Enumerations  for $q=4$ and Generalization}

Since the number of the compatible offsets is independent of the permutation $\pi$, without loss of generality, we restrict $\pi$ to be the identity permutation in the rest of this section.

For the factorization $4=2\times2$, the mixed radix representation of $p=(\rho_2(p), \rho_1(p))_{2,2}$ is equivalent to the binary expansion of integer $p$.
The offset $\vec{s}(\bm{x})$ in Theorem \ref{thm_main_1} can be expressed by
\begin{equation}\label{enumeration-1} {s}^{(p)}(\bm{x})=\left(d_{1}^{(1,{\rho}_{1}(p))}x_{\omega_{1}}+d_{2}^{(1,{\rho}_{1}(p))}x_{{\omega_{1}}+1}+d_{0}^{(1,{\rho}_{1}(p))}\right)+\left(d_{1}^{(2,{\rho}_{2}(p))}x_{\omega_{2}}+d_{2}^{(2,{\rho}_{2}(p))}x_{{\omega_{2}}+1}+d_{0}^{(2,{\rho}_{2}(p))}\right)
\end{equation}

\begin{proposition}\label{prop-1}
For $q=4=2\times2$, $m\geq 3$, ${\underline{d}^{(1,1)}}, {\underline{d}^{(2,1)}}\in \mathcal{C}_{1}$,  the ordered pairs
$(\omega_{1}, \omega_{2})\neq (0, m), (m, 0)$ and $|\omega_{1}-\omega_{2}|\geq 2$, different choices  of $\left({\underline{d}^{(1,1)}}, {\underline{d}^{(2,1)}}, (\omega_{1}, \omega_{2})\right)$ in Theorem \ref{thm_main_1} determine different offsets with at least 3 non-zero columns $\vec{c}_j$ ($1\leq j\leq m$) in coefficient matrices.
\end{proposition}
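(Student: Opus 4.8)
The plan is to recast the statement as a joint claim about the coefficient matrix $\mathcal{S}$ attached to $\vec{s}(\bm{x})$: first, that the map
$$\Phi:\ \left(\underline{d}^{(1,1)},\underline{d}^{(2,1)},(\omega_1,\omega_2)\right)\longmapsto \mathcal{S}$$
is injective on the stated domain, and second, that every $\mathcal{S}$ in its image has at least three non-zero columns among $\vec{c}_1,\dots,\vec{c}_m$. Since distinct offsets correspond to distinct coefficient matrices, injectivity of $\Phi$ is exactly the ``different choices determine different offsets'' assertion. First I would write out the columns of $\mathcal{S}$ explicitly. With $q=4=2\times 2$ the mixed radix is the binary expansion, and using $\underline{d}^{(k,0)}=(0,0,0)$ one gets $\vec{d_i}^{(1)}=(0,d_i^{(1,1)},0,d_i^{(1,1)})^{T}$, supported on the rows $p$ with $\rho_1(p)=1$, and $\vec{d_i}^{(2)}=(0,0,d_i^{(2,1)},d_i^{(2,1)})^{T}$, supported on the rows with $\rho_2(p)=1$. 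By (\ref{enumeration-1}) the constant column is $\vec{c}=\vec{d_0}^{(1)}+\vec{d_0}^{(2)}$, while the positions $\omega_1,\omega_1+1,\omega_2,\omega_2+1$ carry $\vec{d_1}^{(1)},\vec{d_2}^{(1)},\vec{d_1}^{(2)},\vec{d_2}^{(2)}$ respectively. Because $\underline{d}^{(1,1)},\underline{d}^{(2,1)}\in\mathcal{C}_1$ forces $d_1,d_2\neq 0$, all four of these vectors are non-zero, and because $|\omega_1-\omega_2|\geq 2$ the four indices are pairwise distinct, so each vector sits alone in its own column with no cancellation.

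For the counting half, I would observe that each factor $k\in\{1,2\}$ contributes two in-range non-zero columns (at $\omega_k$ and $\omega_k+1$) when $1\le\omega_k\le m-1$, and exactly one when $\omega_k\in\{0,m\}$, since then one of $\omega_k,\omega_k+1$ equals the redundant index $0$ or $m+1$. Adding the two contributions and using disjointness, the number of non-zero columns among $\vec{c}_1,\dots,\vec{c}_m$ is $4$, $3$, or $2$, the value $2$ occurring only when both $\omega_1,\omega_2\in\{0,m\}$. Under $|\omega_1-\omega_2|\ge 2$ and $m\ge 3$ this last case forces $\{\omega_1,\omega_2\}=\{0,m\}$, i.e. exactly the ordered pairs $(0,m)$ and $(m,0)$ excluded by hypothesis; hence at least three non-zero columns always remain.

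For injectivity I would give an explicit inversion of $\Phi$ from $\mathcal{S}$. The constant column immediately returns $d_0^{(1,1)}$ and $d_0^{(2,1)}$ as its entries in the rows $p=1$ and $p=2$. The remaining non-zero columns split into ``type $1$'' (support exactly $\{p:\rho_1(p)=1\}=\{1,3\}$) and ``type $2$'' (support $\{2,3\}$); these two supports are distinct, so the two factors separate cleanly. If a type-$k$ factor has two in-range columns, then the smaller index is $\omega_k$ and the two column values are $d_1^{(k,1)}$ and $d_2^{(k,1)}$; if it has a single in-range column at index $j$, then $m\ge 3$ guarantees $j\in\{1,m\}$ is unambiguous, so $\omega_k\in\{0,m\}$ is recovered, the visible coordinate is $d_2^{(k,1)}$ (if $j=1$) or $d_1^{(k,1)}$ (if $j=m$), and the hidden coordinate is reconstructed from the defining relation $2d_0^{(k,1)}+d_1^{(k,1)}+d_2^{(k,1)}=0$ of $\mathcal{C}$ together with the already-known $d_0^{(k,1)}$. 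This recovers the full triple $\underline{d}^{(k,1)}$ and the position $\omega_k$ for $k=1,2$, so $\Phi$ is injective.

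The main obstacle is the boundary analysis when some $\omega_k\in\{0,m\}$: there one column of a factor is pushed onto a redundant index and vanishes from $\mathcal{S}$, so naive position/value reading cannot recover that coordinate. The observation that saves the argument is that within $\mathcal{C}_1$ the coordinate $d_0$ does not determine $(d_1,d_2)$ (each admissible $(d_1,d_2)$ occurs for two values of $d_0$), yet the constant column supplies $d_0$ and the single $\mathcal{C}$-relation then pins down the missing one of $d_1,d_2$; this is precisely why reading $d_0^{(k,1)}$ off the constant column is indispensable. I would also record the two small facts the argument leans on: that $d_1,d_2\ne 0$ throughout $\mathcal{C}_1$, so the supports are exactly $\{1,3\}$ and $\{2,3\}$, and that $m\ge 3$ forces $1\ne m$, so that a lone in-range column cannot be attributed simultaneously to $\omega_k=0$ and $\omega_k=m$.
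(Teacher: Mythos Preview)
Your argument is correct and follows the same route as the paper: write down the column structure of the coefficient matrix (the paper records this as Table~\ref{table-1}), use $\underline{d}^{(k,1)}\in\mathcal{C}_1$ to ensure all four column vectors are non-zero, and use the constraints on $(\omega_1,\omega_2)$ to guarantee at most one of the four indices falls on a fake variable, leaving at least three genuine non-zero columns. The paper's own proof is extremely terse and simply points to the table for both claims; your treatment of injectivity---separating the factors by their row-supports $\{1,3\}$ versus $\{2,3\}$, reading $d_0^{(k,1)}$ off the constant column, and in the boundary case recovering the hidden coordinate via the relation $2d_0+d_1+d_2=0$---makes explicit what the paper leaves to the reader, but it is not a different method.
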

\begin{proof}
The conditions $(\omega_{1}, \omega_{2})\neq (0, m), (m, 0)$ and $|\omega_{1}-\omega_{2}|\geq 2$ guarantee that there are at most one \lq fake\rq\ variable in $x_{\omega_{1}}, x_{\omega_{1}+1}, x_{\omega_{2}}, x_{\omega_{2}+1}$. Since none of $d_{1}^{(1,1)}, d_{2}^{(1,1)}, d_{1}^{(2,1)}, d_{2}^{(2,1)}$ equals 0,
the assertion follows immediately from the  Table \ref{table-1}, where show that there are at least 3 non-zero columns $\vec{c}_j$ ($1\leq j\leq m$) in coefficient matrices.
\end{proof}
\begin{table}
	\centering
	\caption{Non-zero columns in $\mathcal{S}$ for $q=4$}	
	\begin{tabular}{|c|lcccc|}
 \hline
$p$&$\vec{c}$ &$\vec{c}_{\omega_{1}}$&$\vec{c}_{\omega_{1}+1}$&$\vec{c}_{\omega_{2}}$&$\vec{c}_{\omega_{2}+1}$\\
 \hline
0&$0$&$0$          &$0$          &$0$          &$0$          \\
1&$d_{0}^{(1,1)}$ &$d_{1}^{(1,1)}$&$d_{2}^{(1,1)}$&$0$          &$0$          \\
2&$d_{0}^{(2,1)}$&$0$          &$0$          &$d_{1}^{(2,1)}$&$d_{2}^{(2,1)}$\\
3&$d_{0}^{(1,1)}+d_{0}^{(2,1)}$&$d_{1}^{(1,1)}$&$d_{2}^{(1,1)}$&$d_{1}^{(2,1)}$&$d_{2}^{(2,1)}$\\
 \hline
\end{tabular}\label{table-1}
\end{table}

\begin{proposition}\label{prop-2}
For $q=4$ and $m\geq 3$, Proposition \ref{prop-1} identifies $100(m+1)(m-2)$ compatible offsets other than the generalized cases I-V.
\end{proposition}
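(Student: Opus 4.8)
The plan is to realize the count as a product of three independent choices, leaning on Proposition~\ref{prop-1} to supply the two facts that make the bookkeeping legitimate: that distinct admissible triples $\left({\underline{d}^{(1,1)}}, {\underline{d}^{(2,1)}}, (\omega_{1}, \omega_{2})\right)$ yield distinct offsets, and that every offset so produced has at least three non-zero columns. Once injectivity on the admissible domain is granted, proving Proposition~\ref{prop-2} reduces entirely to enumerating that domain, so the whole argument becomes a combinatorial count.

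First I would count the two $\vec{d}$-vector factors. Each of ${\underline{d}^{(1,1)}}$ and ${\underline{d}^{(2,1)}}$ ranges freely over $\mathcal{C}_{1}$, and from the explicit listing of the four classes preceding this subsection one reads off $|\mathcal{C}_{1}|=10$. These two selections are independent, contributing a factor $10\times 10 = 100$. Next I would enumerate the admissible ordered pairs $(\omega_{1},\omega_{2})$ drawn from $\{0,1,\dots,m\}$. Among all $(m+1)^{2}$ ordered pairs, those violating $|\omega_{1}-\omega_{2}|\ge 2$ are the $m+1$ diagonal pairs together with the $2m$ pairs at distance exactly one, i.e. $3m+1$ pairs; hence the pairs with $|\omega_{1}-\omega_{2}|\ge 2$ number $(m+1)^{2}-(3m+1)=m^{2}-m$. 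For $m\ge 3$ the two explicitly excluded pairs $(0,m)$ and $(m,0)$ do satisfy the distance condition, so removing them leaves $m^{2}-m-2=(m+1)(m-2)$ admissible pairs. Multiplying the three factors gives $100(m+1)(m-2)$, exactly the claimed count.

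The one point requiring genuine care — that these offsets are truly new, i.e. none coincides with a generalized case I--V offset — is precisely where Proposition~\ref{prop-1} does the work: it guarantees at least three non-zero columns $\vec{c}_{j}$ ($1\le j\le m$), whereas Subsection~2.4 records that every offset in the generalized cases I--V has at most two non-zero such columns. Thus all $100(m+1)(m-2)$ offsets are pairwise distinct and fall outside cases I--V, completing the enumeration.

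I expect the only subtle step to be confirming that the two exclusion conditions are exactly what is needed, rather than the arithmetic: the constraint $|\omega_{1}-\omega_{2}|\ge 2$ together with $(\omega_{1},\omega_{2})\neq(0,m),(m,0)$ is what Proposition~\ref{prop-1} uses to ensure at most one \lq fake\rq\ variable appears among $x_{\omega_{1}},x_{\omega_{1}+1},x_{\omega_{2}},x_{\omega_{2}+1}$, so that no two of the four relevant columns collapse and the three-non-zero-column property genuinely holds. Once this alignment with the hypotheses of Proposition~\ref{prop-1} is verified, the product formula and the distinctness from cases I--V follow immediately, and the proposition is proved.
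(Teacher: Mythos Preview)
Your proof is correct and follows essentially the same approach as the paper: both arguments reduce the claim to counting the admissible triples, invoking Proposition~\ref{prop-1} for injectivity and the three-non-zero-column property that separates these offsets from the generalized cases I--V. The only cosmetic difference is in how the pair count $(m+1)(m-2)$ is obtained --- you use complementary counting on $\{0,\dots,m\}^{2}$, whereas the paper splits into the cases $\omega_{1}=0$, $\omega_{1}=m$, and $1\le\omega_{1}\le m-1$ --- but the substance is identical.
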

\begin{proof}
If $\omega_{1}=0$, we can select $\omega_{2}$ such that $2\leq \omega_{2}\leq m-1$.  If $\omega_{1}=m$, we can choose $\omega_{2}$ such that $1\leq \omega_{2}\leq m-2$. If $\omega_{1}\neq 0, m$, we can select $\omega_{2}$ such that $0\leq \omega_{2}\leq m$ and $\omega_{2}\neq \omega_{1}-1, \omega_{1}, \omega_{1}+1$. So there are a total of $(m+1)(m-2)$ ordered pairs $(\omega_{1}, \omega_{2})$.  For each ordered pair, there are $10\times10=100$ choices of ${\underline{d}^{(1,1)}}, {\underline{d}^{(2,1)}}$ such that ${\underline{d}^{(1,1)}}, {\underline{d}^{(2,1)}}\in \mathcal{C}_{1}$. Thus, the conditions in Proposition \ref{prop-1} identifies $100(m+1)(m-2)$  compatible offsets.
\end{proof}

This lower bound of the enumeration of new offsets for $q=4$ can be generalized to arbitrary $q$.
\begin{proposition}\label{prop-0}
For factorization $q=q_{1}\times q_{2}\times \dots\times q_{t}$ and $m\geq 2t-1$,  Theorem \ref{thm_main_1} identifies  at least
\begin{equation} \label{lower-bound}
(m+1)\cdot\frac{(m-t)!}{(m-2t+1)!}\prod_{k=1}^{t}(14^{q_{k}-1}-2\times2^{q_{k}-1})
\end{equation}
new compatible offsets with $2t-1$ or $2t$ non-zero columns $\vec{c}_j$ ($1\leq j\leq m$) in coefficient matrices.
\end{proposition}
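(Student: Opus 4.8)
The plan is to write the claimed number as the product of two independent quantities---a count of admissible position tuples $\{\omega_1,\dots,\omega_t\}$ and a count of admissible $\vec{d}$-vector choices for each factor $q_k$---and then to show that distinct choices yield distinct offsets, so that the product really lower-bounds the number of distinct coefficient matrices $\mathcal{S}$. Throughout I take $\pi$ to be the identity, as permitted, and I use that each index $k$ contributes exactly the adjacent column pair $\vec{c}_{\omega_k},\vec{c}_{\omega_k+1}$ to $\mathcal{S}$.

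For the positions I would require the $t$ intervals $\{\omega_k,\omega_k+1\}$ to be pairwise disjoint, i.e.\ $|\omega_k-\omega_{k'}|\ge 2$, and discard only those tuples placing two of the $\omega_k$ at the opposite ends $0$ and $m$ simultaneously; since $x_{\pi(0)}$ and $x_{\pi(m+1)}$ are fake, an internal $\omega_k$ then contributes two non-zero columns among $\vec{c}_1,\dots,\vec{c}_m$ and a single boundary $\omega_k$ contributes one, so every surviving tuple produces exactly $2t$ or $2t-1$ non-zero columns. Regarding each pair as a domino on the path with vertex set $\{0,1,\dots,m+1\}$, the number of ordered placements of $t$ disjoint dominoes is $t!\binom{m-t+2}{t}$, and the number of those covering both endpoints is $t!\binom{m-t}{t-2}$; a routine factorial simplification gives $t!\left[\binom{m-t+2}{t}-\binom{m-t}{t-2}\right]=(m+1)\,(m-t)!/(m-2t+1)!$, the hypothesis $m\ge 2t-1$ being exactly what keeps this positive.

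For the $\vec{d}$-vectors I would, independently for each $k$, restrict every $\underline{d}^{(k,p_k)}$ to $\mathcal{C}\setminus\mathcal{C}_4$ (the $14$ triples not identically zero in their last two coordinates); this is a sufficient, not necessary, restriction and hence legitimate for a lower bound. I then demand $\vec{d_1}^{(k)}\ne\vec{0}$ and $\vec{d_2}^{(k)}\ne\vec{0}$ so that both columns of the $k$-th domino are genuinely non-zero. Within $\mathcal{C}\setminus\mathcal{C}_4$ exactly the two elements of $\mathcal{C}_2$ have $d_1=0$, exactly the two of $\mathcal{C}_3$ have $d_2=0$, and none has $d_1=d_2=0$, so inclusion--exclusion over the $q_k-1$ chosen coordinates gives $14^{q_k-1}-2^{q_k-1}-2^{q_k-1}=14^{q_k-1}-2\times 2^{q_k-1}$ admissible tuples for index $k$. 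Multiplying these independent factors by the position count yields formula (\ref{lower-bound}).

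Finally---the step I expect to be the crux---I would prove that the assignment $(\{\omega_k\},\{\underline{d}^{(k,p_k)}\})\mapsto\mathcal{S}$ is injective, so that the two counts bound the number of \emph{distinct} offsets. The obstacle is that two dominoes at distance exactly $2$ merge into four consecutive non-zero columns, so the domino decomposition cannot be read off $\mathcal{S}$ directly. To separate the factors I would exploit the mixed-radix indexing: for fixed $k$ and $p_k$ the integer $p=p_k\prod_{j<k}q_j$ satisfies $\rho_k(p)=p_k$ and $\rho_{k'}(p)=0$ for all $k'\ne k$, whence the single component $s^{(p)}(\bm{x})=d_1^{(k,p_k)}x_{\pi(\omega_k)}+d_2^{(k,p_k)}x_{\pi(\omega_k+1)}+d_0^{(k,p_k)}$ isolates the $k$-th domino. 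Taking, for each $k$, the union of the non-zero positions appearing in the rows $p=p_k\prod_{j<k}q_j$ as $p_k$ ranges over $1,\dots,q_k-1$ recovers the consecutive pair $\{\omega_k,\omega_k+1\}$ (both present, since $\vec{d_1}^{(k)},\vec{d_2}^{(k)}\ne\vec{0}$) and hence $\omega_k$, together with every triple $\underline{d}^{(k,p_k)}$, which gives injectivity. Newness is then immediate: each of these offsets has $2t-1\ge 3$ non-zero columns $\vec{c}_j$ (for $t\ge 2$), whereas every generalized case I--V has at most two, so none of the counted offsets coincides with a generalized case.
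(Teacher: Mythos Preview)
Your argument is correct and matches the paper's (Appendix~C): the same restriction on the $\underline{d}^{(k,p_k)}$ (avoid $\mathcal{C}_4$ and forbid $\vec{d_1}^{(k)}=\vec{0}$ or $\vec{d_2}^{(k)}=\vec{0}$), the same restriction on the ordered position set (pairwise gaps $\ge 2$, not both $0$ and $m$), and the same product count; your domino/inclusion--exclusion enumeration of positions is a repackaging of the paper's stars-and-bars split according to whether $0\in\{\omega_k\}$, and the identity $\binom{m-t+2}{t}-\binom{m-t}{t-2}=\binom{m-t+1}{t}+\binom{m-t}{t-1}$ reconciles the two. Your row-isolation argument for injectivity (reading off factor $k$ from the rows $p=p_k\prod_{j<k}q_j$) is in fact more explicit than the paper's column-based assertion; one small point glossed over in both is that when $\omega_k\in\{0,m\}$ the corresponding $d_1^{(k,p_k)}$ (resp.\ $d_2^{(k,p_k)}$) is absorbed by the fake variable and not directly visible in $s^{(p)}$, but it is still uniquely recovered from the defining relation $2d_0+d_1+d_2\equiv 0$ of $\mathcal{C}$, so injectivity is unaffected.
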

\begin{proof}
See Appendix C.
\end{proof}

Proposition \ref{prop-0} shows that the number of the new  offsets in Theorem \ref{thm_main_1} is lower bounded by a polynomial of $m$ with degree $t$.

\subsection{Enumerations  for $q=6$}

We first elaborate four cases of offsets for $q=6$ from Theorems \ref{thm_main_1} and \ref{thm_main_2}.

Case (1): For factorization $q_1\times q_2=2\times 3$, 
the offset $\vec{s}(\bm{x})$ in Theorem \ref{thm_main_1} can be expressed in the form of (\ref{enumeration-1}).
Let the vectors ${\underline{d}^{(1,1)}}$, ${\underline{d}^{(2,1)}}$, ${\underline{d}^{(2,2)}}$ and the ordered pairs ($\omega_{1}, \omega_{2}$) satisfy the following conditions:
\begin{enumerate}
	\item[(1)] ${\underline{d}^{(1,1)}}\in \mathcal{C}_{1}, {\underline{d}^{(2,1)}}, {\underline{d}^{(2,2)}}\notin \mathcal{C}_{4}, ({\underline{d}^{(2,1)}}, {\underline{d}^{(2,2)}})\notin \left(\mathcal{C}_{2},\mathcal{C}_{2}\right)\bigcup\left(\mathcal{C}_{3},\mathcal{C}_{3}\right)$;
\item[(2)] $(\omega_{1}, \omega_{2})\neq (0, m), (m, 0)$ and $|\omega_{1}-\omega_{2}|\geq 2$.
\end{enumerate}

Case (2): For factorization $q_1\times q_2=3\times2$,
the offset $\vec{s}(\bm{x})$ in Theorem \ref{thm_main_1} can be expressed in the form of (\ref{enumeration-1}).
Let the vectors ${\underline{d}^{(1,1)}}$, ${\underline{d}^{(1,2)}}$, ${\underline{d}^{(2,1)}}$ and the ordered pairs ($\omega_{1}, \omega_{2}$) satisfy the following conditions:
\begin{enumerate}
	\item[(1)] ${\underline{d}^{(2,1)}}\in \mathcal{C}_{1}, {\underline{d}^{(1,1)}}, {\underline{d}^{(1,2)}}\notin \mathcal{C}_{4}, ({\underline{d}^{(1,1)}}, {\underline{d}^{(1,2)}})\notin \left(\mathcal{C}_{2},\mathcal{C}_{2}\right)\bigcup\left(\mathcal{C}_{3},\mathcal{C}_{3}\right)$;
\item[(2)] $(\omega_{1}, \omega_{2})\neq (0, m), (m, 0)$ and $|\omega_{1}-\omega_{2}|\geq 2$.
\end{enumerate}

Case (3): For factorization $q_0\times q_1=3\times2$,
the offset $\vec{s}(\bm{x})$ in Case (a) of Theorem \ref{thm_main_2}  can be expressed by
\begin{equation}\label{enumeration-2}
\quad{s}^{(p)}(\bm{x})=\left(d_{1}^{(1,{\rho}'_{1}(p))}x_{\omega}+d_{2}^{(1,{\rho}'_{1}(p))}x_{{\omega}+1}+d_{0}^{(1,{\rho}'_{1}(p))}\right)+\left((b'_{{\rho}'_{0}(p)}-b_{{\rho}'_{0}(p)})x_{\upsilon}+b_{{\rho}'_{0}(p)}\right).
\end{equation}
Let the vector ${\underline{d}^{(1,1)}}$ and the ordered pairs ($\omega, \upsilon$) satisfy the following conditions:
\begin{enumerate}
	\item[(1)] ${\underline{d}^{(1,1)}}\in \mathcal{C}_{1}$;
\item[(2)] $ 2\le \upsilon\le m-1 $, $1\le\omega\le m-1$, and  $\omega\neq \upsilon,\upsilon-1$.
\end{enumerate}

Case (4): For factorization $q_0\times q_1=3\times2$,
the offset $\vec{s}(\bm{x})$ in Case (b) of Theorem \ref{thm_main_2} can be expressed by
\begin{equation}\label{enumeration-3} \quad{s}^{(p)}(\bm{x})=\left(d_{1}^{(1,{\rho}'_{1}(p))}x_{\omega}+d_{2}^{(1,{\rho}'_{1}(p))}x_{{\omega}+1}+d_{0}^{(1,{\rho}'_{1}(p))}\right)+\left((b'_{{\rho}'_{0}(p)}{-}b_{{\rho}'_{0}(p)})x_{\upsilon_{1}}{+}({-}b'_{{\rho}'_{0}(p)}{-}b_{{\rho}'_{0}(p)})x_{\upsilon_{2}}{+}{b_{{\rho}'_{0}(p)}}\right).
	\end{equation}
Let the vector ${\underline{d}^{(1,1)}}$ and the ordered triples ($\omega, \upsilon_{1}, \upsilon_{2}$) satisfy the following conditions:
\begin{enumerate}
	\item[(1)] ${\underline{d}^{(1,1)}}\in \mathcal{C}_{1}$;
\item[(2)]  $1\le \upsilon_{1}\le m-2$,  $\upsilon_{1}+2\le \upsilon_{2}\le m$,  and $0\le\omega\le m$, $\omega\neq \upsilon_{1},\upsilon_{1}-1, \upsilon_{2}, \upsilon_{2}-1$.
\end{enumerate}

\begin{table}
	\centering
	\caption{Non-zero columns  in coefficient matrices of offsets for $q=6$}	
	\begin{tabular}{|c|cccc|cccc|cccc|}
		\hline
		&\multicolumn{4}{c|}{$\mathcal{S}_{1}$ in Case (1)}&\multicolumn{4}{c|}{$\mathcal{S}_{2}$ in Case (2)}&\multicolumn{4}{c|}{$\mathcal{S}_{3}$ and $\mathcal{S}_{4}$ in Case (3) and (4)}\\
		\cline{2-13}
		$p_{ }$&$\vec{c}_{\omega_{1}}$&$\vec{c}_{\omega_{1}+1}$&$\vec{c}_{\omega_{2}}$&$\vec{c}_{\omega_{2}+1}$&$\vec{c}_{\omega_{2}'}$&$\vec{c}_{\omega_{2}'+1}$&$\vec{c}_{\omega_{1}'}$&$\vec{c}_{\omega_{1}'+1}$&$\vec{c}_{\omega}$&$x_{\omega+1}$&$\vec{c}_{\upsilon_{1}}$&$\vec{c}_{\upsilon_{2}}$ \\
		\hline
		$0$&$0$          &$0$          &$0$          &$0$          &$0$          &$0$          &$0$          &$0$          &$0$          &$0$          &$0$          &$0$           \\
		$1$&$d_{1}^{(1,1)}$&$d_{2}^{(1,1)}$&$0$          &$0$          &$0$          &$0$          &$d_{1}^{(1,1)}$&$d_{2}^{(1,1)}$&$0$          &$0$          &$b'_{1}-b_{1}$&$-b'_{1}-b_{1}$\\
		$2$&$0$          &$0$          &$d_{1}^{(2,1)}$&$d_{2}^{(2,1)}$&$0$          &$0$          &$d_{1}^{(1,2)}$&$d_{2}^{(1,2)}$&$0$          &$0$          &$b'_{2}-b_{2}$&$-b'_{2}-b_{2}$\\
		$3$&$d_{1}^{(1,1)}$&$d_{2}^{(1,1)}$&$d_{1}^{(2,1)}$&$d_{2}^{(2,1)}$&$d_{1}^{(2,1)}$&$d_{2}^{(2,1)}$&$0$          &$0$          &$d_{1}^{(1,1)}$&$d_{2}^{(1,1)}$&$0$          &$0$           \\
		$4$&$0$          &$0$          &$d_{1}^{(2,2)}$&$d_{2}^{(2,2)}$&$d_{1}^{(2,1)}$&$d_{2}^{(2,1)}$&$d_{1}^{(1,1)}$&$d_{2}^{(1,1)}$&$d_{1}^{(1,1)}$&$d_{2}^{(1,1)}$&$b'_{1}-b_{1}$&$-b'_{1}-b_{1}$\\
		$5$&$d_{1}^{(1,1)}$&$d_{2}^{(1,1)}$&$d_{1}^{(2,2)}$&$d_{2}^{(2,2)}$&$d_{1}^{(2,1)}$&$d_{2}^{(2,1)}$&$d_{1}^{(1,2)}$&$d_{2}^{(1,2)}$&$d_{1}^{(1,1)}$&$d_{2}^{(1,1)}$&$b'_{2}-b_{2}$&$-b'_{2}-b_{2}$\\
		\hline	\end{tabular}\label{table-2}
\end{table}

\begin{proposition}\label{prop-3}
For $q=6$ and $m\geq 3$, the above Cases (1)-(4) identify $(3700+20m)(m+1)(m-2)$ compatible offsets other than the generalized cases I-V.
\end{proposition}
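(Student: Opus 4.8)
The plan is to count, separately for each of the four cases, the number of distinct coefficient matrices $\mathcal{S}$ that the construction yields, and then to add these counts. Since an offset $\vec{s}(\bm{x})$ is determined by its coefficient matrix, three things must be verified before the raw parameter counts become matrix counts: (i) within each case the assignment ``parameters $\mapsto \mathcal{S}$'' is injective; (ii) every matrix produced has at least three non-zero data columns $\vec{c}_j$ ($1\le j\le m$), so that it lies outside the generalized cases I--V, which have at most two such columns; and (iii) the four families of matrices are pairwise disjoint. All three should be read off the explicit entries collected in Table~\ref{table-2}.

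First I would fix the raw counts. In Cases~(1) and (2) the data are three vectors together with an ordered position pair $(\omega_1,\omega_2)$ obeying exactly the constraints of Proposition~\ref{prop-1}; by the argument of Proposition~\ref{prop-2} there are $(m+1)(m-2)$ admissible pairs. The vector constrained to $\mathcal{C}_1$ contributes $|\mathcal{C}_1|=10$ choices, and the remaining ordered pair ranges over $(\mathcal{C}_1\cup\mathcal{C}_2\cup\mathcal{C}_3)^2$ with the two diagonal blocks $(\mathcal{C}_2,\mathcal{C}_2)$ and $(\mathcal{C}_3,\mathcal{C}_3)$ deleted, giving $14^2-2^2-2^2=188$ choices; hence each of Cases~(1),(2) yields $10\cdot188\cdot(m+1)(m-2)=1880(m+1)(m-2)$ matrices. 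In Cases~(3),(4) the data are one vector in $\mathcal{C}_1$ ($10$ choices), one NSGIP over $\mathcal{Q}_3$, and the positions. Listing the sixteen values $Q(b_1,b_2)=4+2\xi^{b_1}+\xi^{b_2}$ and grouping by modulus, the only class with more than two mutually non-conjugate members is $\{5,\,4+3\xi,\,4-3\xi\}$, which supplies exactly $4$ ordered NSGIP. The positional constraints then admit $(m-2)(m-3)$ pairs $(\omega,\upsilon)$ in Case~(3) and $\tfrac12(m-1)(m-2)(m-3)$ triples $(\omega,\upsilon_1,\upsilon_2)$ in Case~(4), so these cases yield $40(m-2)(m-3)$ and $20(m-1)(m-2)(m-3)$ matrices, respectively.

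Injectivity and newness I would settle together from Table~\ref{table-2}. In every case membership in $\mathcal{C}_1$ forces $d_1\neq0$ and $d_2\neq0$, so the two columns carrying that vector are non-zero and locate $\omega_1$ (in Cases~(1),(2)) or $\omega$ (in Cases~(3),(4)); deleting $(\mathcal{C}_2,\mathcal{C}_2)$ and $(\mathcal{C}_3,\mathcal{C}_3)$ in Cases~(1),(2) forces \emph{both} columns $\vec{c}_{\omega_2}$ and $\vec{c}_{\omega_2+1}$ to be non-zero, while in Cases~(3),(4) the columns $\vec{c}_{\upsilon}$ (resp.\ $\vec{c}_{\upsilon_1},\vec{c}_{\upsilon_2}$) are non-zero because $\vec{b}\neq\vec{b}'$. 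Combined with the separation conditions on the indices (at most one fake variable, together with $|\omega_1-\omega_2|\ge2$ or $\omega\neq\upsilon,\upsilon-1$, etc.), this produces at least three distinct non-zero data columns even when a fake variable shortens the $\mathcal{C}_1$-block, and at the same time lets one recover all the position indices and then all the vectors from the matrix, establishing injectivity.

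The hardest step is disjointness, and I expect it to be the main obstacle. All four families consist of $\Z_4$-matrices of the same size, so they cannot be told apart by their entries and must instead be separated by the pattern in which repeated row-blocks occur. The key observation is how the shared $\mathcal{C}_1$-block (the one appearing three times) is distributed over the rows $p=0,\dots,5$: the factorization $2\times3$ of Case~(1) places it in rows $\{1,3,5\}$, whereas the factorization $3\times2$ of Case~(2) places it in rows $\{3,4,5\}$; Cases~(3),(4), built from an NSGIP over $\mathcal{Q}_3$, additionally exhibit the paired columns $\vec{b}'-\vec{b}$ and (in Case~(4)) $-\vec{b}'-\vec{b}$ whose joint structure is absent from the purely $\mathcal{C}$-built matrices of Cases~(1),(2). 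Verifying that these signatures are mutually exclusive for all admissible parameter choices is the delicate part of the argument. Once disjointness is in hand, summing the four counts gives
\begin{equation*}
2\cdot1880(m+1)(m-2)+40(m-2)(m-3)+20(m-1)(m-2)(m-3),
\end{equation*}
and pulling out the common factor $(m-2)$ reduces the bracket to $(20m+3700)(m+1)$, so the total equals $(3700+20m)(m+1)(m-2)$, as claimed. The enumeration of vectors, NSGIP and positions, together with this final factorization, is routine; the substantive content lies in the injectivity and disjointness analysis.
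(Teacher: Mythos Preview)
Your approach is essentially the paper's: count parameters in each case, check injectivity, check that every matrix has at least three non-zero data columns, check pairwise disjointness, then sum. Your raw counts all agree with the paper's (in particular, your direct enumeration of the four ordered NSGIP over $\mathcal{Q}_3$ and your separate position counts for Cases~(3) and~(4) recombine to the paper's $20(m-3)(m+1)(m-2)$, which the paper obtains by quoting the Liu \textsl{et al.}\ count $2(m-2)(m+1)$ and multiplying by $10(m-3)$).

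The one place where your sketch is vaguer than the paper is the disjointness of $\{\mathcal{S}_2\}$ from $\{\mathcal{S}_3\}\cup\{\mathcal{S}_4\}$. Your suggestion that the NSGIP columns carry a ``joint structure absent'' from $\mathcal{S}_2$ is not quite right at the level of individual columns: the columns $\vec{c}_{\omega_1'},\vec{c}_{\omega_1'+1}$ in $\mathcal{S}_2$ and the columns $\vec{c}_{\upsilon_1},\vec{c}_{\upsilon_2}$ in $\mathcal{S}_4$ have \emph{identical} row-support $(0,*,*,0,*,*)^T$, so entry patterns alone do not separate them. The paper's actual argument is positional: in $\mathcal{S}_2$ the two non-zero columns with this support sit at \emph{consecutive} indices $\omega_1',\omega_1'+1$, whereas in $\mathcal{S}_4$ they sit at $\upsilon_1,\upsilon_2$ with $\upsilon_2\ge\upsilon_1+2$, and in $\mathcal{S}_3$ there is only one such column. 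That contradiction is what forces disjointness; once you insert this observation your argument is complete.
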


\begin{proof}
We can verify this enumeration from  the non-zero columns of coefficient matrices in Table \ref{table-2}. Denote the set of coefficient matrices of the offsets in case ($i$) by $\{\mathcal{S}_{i}\}$ from different choices of $\omega_{k}$, $\upsilon_{k}$ and ${\underline{d}^{(k,p_{k})}}$ ($i=1,2,3,4$).

First of all, every coefficient matrix in $\{\mathcal{S}_{i}\}$ $(i=1,2,3,4)$  have  at least 3 non-zero columns $\vec{c}_j$ ($1\leq j\leq m$).
So these coefficient matrices must be different from those in the generalized cases I-V.

Secondly, we prove that $\{\mathcal{S}_{i}\}$  $(i=1,2,3,4)$ are pairwise disjoint. From the positions of the non-zero entries of coefficient matrices in Table \ref{table-2}, it is obviously  $\{\mathcal{S}_{1}\}\bigcap \left(\{\mathcal{S}_{2}\}\bigcup\{\mathcal{S}_{3}\}\bigcup\{\mathcal{S}_{4}\}\right) =\varnothing$. From the definition of the NSGIP, we have both $(b_1'-b_1, b_2'-b_2)\neq (0, 0)$ and $(-b_1'-b_1, -b_2'-b_2)\neq (0, 0)$. Together with the positions of the non-zero entries of $x_{\omega}$ and $x_{\omega+1}$, we obatin $\{\mathcal{S}_{3}\}\bigcap \{\mathcal{S}_{4}\}=\varnothing$.  If one coefficient  matrix of offset belongs to both  $\{\mathcal{S}_{2}\}$ and  $\{\mathcal{S}_{3}\}$ (or $\{\mathcal{S}_{4}\}$), we have $\omega_{2}'=\upsilon_{1}$ and $\omega_{2}'+1=\upsilon_{2}$ in Table \ref{table-2}, which contradicts to $\upsilon_{1}+2\le \upsilon_{2}$ in Cases (3) and (4). Thus we obtain $\{\mathcal{S}_{2}\}\bigcap \left(\{\mathcal{S}_{3}\}\bigcup\{\mathcal{S}_{4}\}\right) =\varnothing$.

Thirdly, it is straightforward that different parameters in each case lead to different offset.

With the same arguments in Proposition \ref{prop-2}, we can prove that there are a total of $(m+1)(m-2)$ ordered pairs $(\omega_{1},\omega_{2})$ in Case (1). For each ordered pair, there are $10$ choices of ${\underline{d}^{(1,1)}}$ such that ${\underline{d}^{(1,1)}}\in \mathcal{C}_{1}$, and there are $(14^2-2\times2^2)$ choices of ${\underline{d}^{(2,1)}}$ and ${\underline{d}^{(2,2)}}$ such that ${\underline{d}^{(2,1)}}, {\underline{d}^{(2,2)}}\notin \mathcal{C}_{4}, ({\underline{d}^{(2,1)}}, {\underline{d}^{(2,2)}})\notin \left(\mathcal{C}_{2},\mathcal{C}_{2}\right)\bigcup\left(\mathcal{C}_{3},\mathcal{C}_{3}\right)$. Thus we have  $\#\{\mathcal{S}_{1}\}=1880\times(m+1)(m-2)$. Similar to Case (1), we also have $\#\{\mathcal{S}_{2}\}=1880\times(m+1)(m-2)$. We consider Cases (3) and (4) together. It was proved in \cite{Liu2013New} that different choices of the subscript $\upsilon_{1}, \upsilon_{2},\upsilon$ and $\vec{b},\vec{b}'$ for NSGIPs over $\mathcal{Q}_{3}$ are  $2\times(m-2)(m+1)$. Moreover, we have $10$
choices of ${\underline{d}^{(1,1)}}$ such that ${\underline{d}^{(1,1)}}\in \mathcal{C}_{1}$, and $(m-3)$ choices of $\omega$ satisfying the conditions in Cases (3) and (4). Thus we have $\#\{\mathcal{S}_{3}, \mathcal{S}_{4}\}=20\times(m-3)(m+1)(m-2)$.

From the discussion above, we obtain
$$\#\{\mathcal{S}_{1}, \mathcal{S}_{2},\mathcal{S}_{3}, \mathcal{S}_{4}\}=\#\{\mathcal{S}_{1}\}+\#\{\mathcal{S}_{2}\}+\#\{\mathcal{S}_{3}\}+\#\{\mathcal{S}_{4}\}=(3700+20m)(m^2-m-2),$$
which completes the proof.
\end{proof}

\section{GAPs and PU Matrices over QAM}

In this section, we show a new viewpoint to construct GCPs over $4^q$-QAM by arrays and paraunitary matrices, by extending the results for PSK case \cite{CCA}.

\subsection{GCPs and GAPs over QAM}

We generalize the concept of the Golay array pair (GAP) from PSK \cite{Array1, Array2} to $4^q$-QAM in this subsection.

An $m$-dimensional complex-valued array of size $\underbrace{2\times2\times\cdots\times2}_m$
can be expressed  by a function $F(x_1, x_2, \cdots, x_{m})$ (or ${F}(\bm{x})$ for short) from $\Z_{2}^{m}$ to $\C$.

\begin{definition}
	The {\em aperiodic auto-correlation} of  an array $F(\bm{x})$ of size $2\times 2 \times \cdots \times 2$ at shift $\bm\tau=(\tau_1, \tau_2, \cdots \tau_{m})$ ($\tau_k=-1, 0 \ \mbox{or} \  1$) is defined by
	\begin{equation}
	C_{{F}}(\bm{\tau})=
	\sum_{\bm{x}}{F(\bm{x}+\bm{\tau})\cdot \overline{F}(\bm{x})},
	\end{equation}
	where \lq\lq$\bm{y}+\bm{\tau}$\rq\rq  is the element-wise addition of vectors over $\Z$, and ${F(\bm{x}+\bm{\tau})\cdot \overline{F}(\bm{x})}=0$ if  $F(\bm{x}+\bm{\tau})$ or $F(\bm{x})$ is not defined.
\end{definition}

\begin{definition}
	A pair of arrays $ \{F(\bm{x}), G(\bm{x})\} $ of size $2\times 2 \times \cdots \times 2$ is said to be a  {\em Golay array pair} (GAP) if
	\begin{equation}\label{GAP1}
	{C}_{F}(\bm{\tau})+{C}_{G}(\bm{\tau})=0, \forall \bm{\tau}\ne\bm{0}.
	\end{equation}
\end{definition}

For further results on GAPs, see \cite{Array2, CCA}.
An array over QPSK of size $2\times 2 \times \cdots \times 2$ can be described by a GBF over $\Z_4$ \cite{CCA}. Similarly,
an $m$-dimensional  array of size $2\times2\times\cdots\times2$ over $4^q$-QAM  can be described by a V-GBF  $\vec{f}(\bm{x})=(f^{(0)}(\bm{x}),f^{(1)}(\bm{x}),\cdots,f^{(q-1)}(\bm{x})): \mathbb{F}_2^m \rightarrow\mathbb{Z}_{4}^q$ by the weighted sum:
\begin{equation} \label{equation_QAMBoolean*}
F(\bm{x})=\sum_{p=0}^{q-1}2^{q-1-p}\cdot  \xi^{f^{(p)}(\bm{x})}.
\end{equation}
Arrays over QPSK can be obviously regarded as arrays over $4^q$-QAM for $q=1$.

A sequence  $F(y)$ of length $2^m$ can be connected with  an array ${F}(\bm{x})$ by setting $y=\sum_{j=1}^{m}x_j\cdot 2^{j-1}$. The aperiodic auto-correlation $C_{F}(\tau)$ of sequence $F(y)$  can be derived from the sum of aperiodic auto-correlation $C_{{F}}(\bm{\tau})$ of array ${F}(\bm{x})$ by restricting $\tau=\sum_{j=1}^{m}2^{j-1}\tau_{j}$, i.e.,
\begin{equation*}
C_{{F}}({\tau})=
\sum_{\bm{\tau}\in\mathcal{D}(\tau)}C_{F}(\bm{\tau}),
\end{equation*}
where $\mathcal{D}(\tau)=\{\bm{\tau}|\tau=\sum_{j=1}^{m}2^{j-1}\tau_{j}\}$.
Thus, if the arrays $F(\bm{x})$ and $G(\bm{x})$ over QAM described by V-GBFs $\vec{f}(\bm{x})$ and $\vec{g}(\bm{x})$ form a GAP, the sequences associated with V-GBFs $\vec{f}(\bm{x})$ and $\vec{g}(\bm{x})$ must form a GCP.

Moreover,  we can construct a large number of GCPs over QAM from a single GAP over QAM by the following theorem, the proof of which is omitted since it is similar to the PSK case in \cite[Lemma 8]{Array2} or \cite[Property 2]{CCA}.
\begin{theorem}\label{them3}
	Suppose that a pair of arrays over QAM described by V-GBFs $\{\vec{f}(\bm{x}),\vec{g}(\bm{x})\}$ form a GAP. Then
	the arrays described by V-GBFs
	$$\left\{\pi\cdot\vec{f}(\bm{x})+{f}'(\bm{x})\cdot\vec{1}, \  \pi\cdot\vec{g}(\bm{x})+{f}'(\bm{x})\cdot\vec{1}\right\}$$
	is also a GAP over QAM, where $\pi\cdot f(\bm{x})=f(x_{\pi(1)},x_{\pi(2)},\cdots, x_{\pi(m)})$ is an arbitrary permutation action on V-GBFs, and
	${f}'(\bm{x})=\sum_{j=1}^{m}c_jx_j+c_{0}$ ($c_j\in \Z_4$) is an arbitrary affine GBF from $\Z_2^m$ to $\Z_4$.
	Consequently, the sequences associated with the above V-GBFs form a GCP.
\end{theorem}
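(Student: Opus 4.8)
The plan is to handle the two transformations — the permutation $\pi$ of the Boolean variables and the addition of the affine term $f'(\bm{x})\cdot\vec{1}$ — one at a time, show that each preserves the defining complementary condition \eqref{GAP1}, and then compose them. The crucial structural observation is that under the V-GBF-to-array correspondence \eqref{equation_QAMBoolean*}, adding $f'(\bm{x})$ to every component function only multiplies the whole array by a single scalar phase: since $\vec{f}(\bm{x})+f'(\bm{x})\cdot\vec{1}$ has $p$-th component $f^{(p)}(\bm{x})+f'(\bm{x})$, we obtain $F'(\bm{x})=\sum_{p}2^{q-1-p}\xi^{f^{(p)}(\bm{x})+f'(\bm{x})}=\xi^{f'(\bm{x})}F(\bm{x})$. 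Likewise a permutation of the variables simply permutes the coordinate axes of the array.

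For the affine part I would expand $C_{F'}(\bm{\tau})=\sum_{\bm{x}}\xi^{f'(\bm{x}+\bm{\tau})-f'(\bm{x})}\,F(\bm{x}+\bm{\tau})\overline{F(\bm{x})}$. On the support of the summand both $\bm{x}$ and $\bm{x}+\bm{\tau}$ lie in $\{0,1\}^m$, so for $f'(\bm{x})=\sum_j c_j x_j+c_0$ the increment collapses to $f'(\bm{x}+\bm{\tau})-f'(\bm{x})=\sum_j c_j\tau_j$, a constant depending only on $\bm{\tau}$ and the fixed coefficients. Pulling this constant out gives $C_{F'}(\bm{\tau})=\xi^{\sum_j c_j\tau_j}\,C_F(\bm{\tau})$, and — this is the point that makes the argument work — exactly the same factor multiplies $C_{G'}(\bm{\tau})$, because $f'$ is shared by both arrays. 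Hence $C_{F'}(\bm{\tau})+C_{G'}(\bm{\tau})=\xi^{\sum_j c_j\tau_j}\bigl(C_F(\bm{\tau})+C_G(\bm{\tau})\bigr)=0$ for every $\bm{\tau}\neq\bm{0}$.

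For the permutation part, writing $F_\pi(\bm{x})=F(x_{\pi(1)},\dots,x_{\pi(m)})$ as $F(P\bm{x})$ for the coordinate permutation $P$ (which acts linearly over $\Z$, so $P(\bm{x}+\bm{\tau})=P\bm{x}+P\bm{\tau}$), the change of variable $\bm{y}=P\bm{x}$, a bijection of $\{0,1\}^m$, yields $C_{F_\pi}(\bm{\tau})=C_F(P\bm{\tau})$; since $\bm{\tau}\neq\bm{0}\iff P\bm{\tau}\neq\bm{0}$, the complementary condition is preserved. Composing the permutation with the affine twist then shows that $\{\pi\cdot\vec{f}(\bm{x})+f'(\bm{x})\cdot\vec{1},\,\pi\cdot\vec{g}(\bm{x})+f'(\bm{x})\cdot\vec{1}\}$ is a GAP, and the final clause is immediate from the projection identity $C_F(\tau)=\sum_{\bm{\tau}\in\mathcal{D}(\tau)}C_F(\bm{\tau})$ established earlier, which sends any GAP to a GCP.

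The step I expect to require the most care is the boundary bookkeeping in the affine computation: one must verify that restricting to the support where both $\bm{x}$ and $\bm{x}+\bm{\tau}$ remain in $\{0,1\}^m$ is precisely what forces the increment $\sum_j c_j\tau_j$ to be independent of $\bm{x}$ (no modular wraparound occurs, since $x_j+\tau_j\in\{0,1\}$ there), and that the terms discarded by the convention $F(\bm{x}+\bm{\tau})\overline{F(\bm{x})}=0$ outside the cube match identically for $F'$ and $F$. Everything else is routine reindexing, which is presumably why the authors invoke the analogous PSK arguments of \cite[Lemma 8]{Array2} and \cite[Property 2]{CCA} and omit the details.
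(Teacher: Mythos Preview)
Your proposal is correct and is precisely the argument the paper defers to: the authors omit the proof and point to the PSK analogues in \cite[Lemma~8]{Array2} and \cite[Property~2]{CCA}, whose content is exactly the two observations you spell out---that the affine shift multiplies both arrays by the common unimodular factor $\xi^{\sum_j c_j\tau_j}$ at shift $\bm{\tau}$, and that a coordinate permutation merely reindexes $\bm{\tau}$---together with the projection identity $C_F(\tau)=\sum_{\bm{\tau}\in\mathcal{D}(\tau)}C_F(\bm{\tau})$.
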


\subsection{GAPs and PU Matrices over QAM}

In this subsection, we extend the theory on GAPs and para-unitary matrices from PSK case \cite{CCA} to QAM case.

The {\em generating function} of  a complex-valued array ${F}(\bm{x})$ of size $2\times 2 \times \cdots \times 2$ is defined by
\begin{equation}\label{arr-gene}
F(z_1, z_2, \cdots, z_{m})=\sum_{x_1,x_2,\cdots, x_{m}}{F(x_1, x_2, \cdots, x_{m})}z_1^{x_1}z_2^{x_2}\cdots z_{m}^{x_{m}},
\end{equation}
(or denoted by
$F(\bm{z})=\sum_{\bm{x}}{F(\bm{x})}\cdot\bm{z}^{\bm{x}}$ for short). It is easy to verify
\begin{equation}
F(\bm{z})\cdot \overline{F}(\bm{z}^{-1})=\sum_{\bm{\tau}}C_{{F}}(\bm{\tau})z_1^{\tau_1}z_2^{\tau_2}\cdots z_{m}^{\tau_{m}},
\end{equation}
where $\bm{z}^{-1}=(z_1^{-1}, z_2^{-1}, \cdots, z_{m}^{-1})$. So arrays $ \{{F}(\bm{x}), {G}(\bm{x})\} $  form a GAP  if and only if their  generating functions $\{F(\bm{z}), G(\bm{z})\}$ satisfy
\begin{equation}\label{GAP2}
F(\bm{z})\cdot \overline{F}(\bm{z}^{-1})+G(\bm{z})\cdot \overline{G}(\bm{z}^{-1})=c,
\end{equation}
where $c$ is a real constant.

Note that the array described by $\vec{f}({\bm{x}})$ over QAM (or ${f}({\bm{x}})$ over QPSK)
is uniquely determined by the generating function ${F}({\bm{z}})$, and vice versa.

We define three types of matrices. Let ${F}_{u,v}(\bm{x})$ $(u,v\in \{0, 1\})$ be $m$-dimensional arrays over $4^q$-QAM  corresponding to V-GBF $\vec{f}_{u,v}(\bm{x})=(f_{u,v}^{(0)}(\bm{x}),f_{u,v}^{(1)}(\bm{x}),\cdots,f_{u,v}^{(q-1)}(\bm{x}))$ and generating function  ${F}_{u,v}(\bm{z})$. These arrays can be expressed by a formalized array matrix $\mathbb{{M}}({\bm{x}})$ with entry ${F}_{u,v}(\bm{x})$, i.e.,
\begin{equation}\label{QAM M(y)}
\mathbb{{M}}({\bm{x}})=
\begin{bmatrix}
{F}_{0,0}({\bm{x}})&{F}_{0,1}({\bm{x}})\\
{F}_{1,0}({\bm{x}})&{F}_{1,1}({\bm{x}})\\
\end{bmatrix}.
\end{equation}
Also, these arrays can be described by a formalized matrix with the V-GBF entry, i.e.,
\begin{equation}\label{QAM M(x)}
\widetilde{{\mathbb{{M}}}}({\bm{x}})=	\begin{bmatrix}
\vec{f}_{0,0}({\bm{x}})&\vec{f}_{0,1}({\bm{x}})\\
\vec{f}_{1,0}({\bm{x}})&\vec{f}_{1,1}({\bm{x}})\\
\end{bmatrix},
\end{equation}
and described by a  matrix with the generating-function entry, i,e.,
\begin{equation}\label{QAM M(z)}
\mathbb{{M}}({\bm{z}})=
\begin{bmatrix}
{F}_{0,0}({\bm{z}})&{F}_{0,1}({\bm{z}})\\
{F}_{1,0}({\bm{z}})&{F}_{1,1}({\bm{z}})\\
\end{bmatrix}.
\end{equation}
$\mathbb{{{M}}}({\bm{z}})$ is called the  generating-function matrix of $\mathbb{{M}}({\bm{x}})$ and $\mathbb{{\widetilde{{M}}}}({\bm{x}})$. For the case $q=1$ (QPSK), these matrices $\mathbb{{M}}({\bm{x}})$, $ \mathbb{{\widetilde{{M}}}}({\bm{x}})$ and $\mathbb{{{M}}}({\bm{z}})$ are denoted by
$\bm{{M}}({\bm{x}})$, $ \bm{{\widetilde{{M}}}}({\bm{x}})$ and $\bm{{{M}}}({\bm{z}})$, respectively, in the following paper.

Similar to the PSK case  \cite[Theorem 1]{CCA}, it is straightforward to obtain the following result.
\begin{theorem}\label{them4}
	Let $\mathbb{{M}}({\bm{z}})$ be the  generating-function matrix of a V-GBF matrix $\widetilde{{\mathbb{{M}}}}({\bm{x}})$. If $\mathbb{{M}}({\bm{z}})$ is a {\em para-unitary} (PU) matrix, i.e.,
	\begin{equation}\label{PU}
	\mathbb{{M}}({\bm{z}})\cdot\mathbb{{M}}^{\dagger}(\bm{z}^{-1})=c \cdot \bm{I},
	\end{equation}
	where  $c$ is a real number, $(\cdot)^\dagger$ denotes the Hermitian transpose and $\bm{I}$ is an identity matrix of order 2, $\mathbb{{M}}({\bm{z}})$ is called a {\em desired PU} matrix.
	The arrays over QAM described by every row (or column) of $\widetilde{{\mathbb{{M}}}}({\bm{x}})$ form a GAP.
\end{theorem}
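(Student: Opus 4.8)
The plan is to read off the Golay array pair (GAP) conditions directly from the diagonal entries of the para-unitary identity (\ref{PU}), using the generating-function description of aperiodic autocorrelation. First I would expand the condition $\mathbb{M}(\bm{z})\cdot\mathbb{M}^{\dagger}(\bm{z}^{-1})=c\cdot\bm{I}$ entry by entry. Writing $\mathbb{M}(\bm{z})=\left(F_{u,v}(\bm{z})\right)_{u,v\in\{0,1\}}$, the $(u,v)$ entry of the para-Hermitian conjugate $\mathbb{M}^{\dagger}(\bm{z}^{-1})$ is $\overline{F_{v,u}}(\bm{z}^{-1})$, obtained by conjugating the (Gaussian-integer) coefficients, inverting the variables, and transposing. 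Hence the $(u,u)$ diagonal entry of the product is
$$\sum_{w\in\{0,1\}}F_{u,w}(\bm{z})\cdot\overline{F_{u,w}}(\bm{z}^{-1}).$$

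I would then substitute the generating-function identity $F(\bm{z})\cdot\overline{F}(\bm{z}^{-1})=\sum_{\bm{\tau}}C_{F}(\bm{\tau})\bm{z}^{\bm{\tau}}$ (the displayed formula established just before (\ref{GAP2})) into each summand, so that the $(u,u)$ entry becomes $\sum_{\bm{\tau}}\bigl(C_{F_{u,0}}(\bm{\tau})+C_{F_{u,1}}(\bm{\tau})\bigr)\bm{z}^{\bm{\tau}}$. Since (\ref{PU}) forces this diagonal entry to equal the constant $c$, comparing coefficients of $\bm{z}^{\bm{\tau}}$ yields $C_{F_{u,0}}(\bm{\tau})+C_{F_{u,1}}(\bm{\tau})=0$ for every $\bm{\tau}\ne\bm{0}$. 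By the characterization (\ref{GAP2}) this is precisely the assertion that the pair $\{F_{u,0}(\bm{x}),F_{u,1}(\bm{x})\}$ described by row $u$ of $\widetilde{\mathbb{M}}(\bm{x})$ is a GAP, for each $u\in\{0,1\}$.

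For the columns I would argue that a right inverse is automatically a two-sided inverse here: the entries $F_{u,v}(\bm{z})$ lie in the commutative ring of Laurent polynomials over $\C$, over which $\mathbb{M}(\bm{z})\cdot\bigl(c^{-1}\mathbb{M}^{\dagger}(\bm{z}^{-1})\bigr)=\bm{I}$ forces $\det\mathbb{M}(\bm{z})$ to be a unit, so (\ref{PU}) also gives $\mathbb{M}^{\dagger}(\bm{z}^{-1})\cdot\mathbb{M}(\bm{z})=c\cdot\bm{I}$. Repeating the diagonal computation with the factors in this order produces $C_{F_{0,v}}(\bm{\tau})+C_{F_{1,v}}(\bm{\tau})=0$ for all $\bm{\tau}\ne\bm{0}$, i.e. each column of $\widetilde{\mathbb{M}}(\bm{x})$ also forms a GAP.

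The computation is routine once the bookkeeping is fixed, so the one genuinely delicate point — which I would state most carefully — is the meaning of $\mathbb{M}^{\dagger}(\bm{z}^{-1})$: the para-Hermitian conjugate must conjugate the coefficients, invert the variables, \emph{and} transpose, so that the diagonal entries of the matrix product really are the complementary autocorrelation sums $C_{F_{u,0}}+C_{F_{u,1}}$ rather than cross-correlation terms. I would also remark that the vanishing of the off-diagonal entries in (\ref{PU}) is not needed for the GAP claim; only the constancy of the two diagonal entries is used. Since this parallels exactly the PSK statement \cite[Theorem 1]{CCA}, the argument transfers verbatim with $F(\bm{x})$ now interpreted as the QAM weighted sum (\ref{equation_QAMBoolean*}).
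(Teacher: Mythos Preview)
Your proposal is correct and matches the argument the paper has in mind: the paper does not actually write out a proof of this theorem, merely noting that it is ``straightforward'' and parallel to the PSK case \cite[Theorem~1]{CCA}. Your entry-by-entry expansion of the diagonal of (\ref{PU}), combined with the autocorrelation identity preceding (\ref{GAP2}), is exactly that straightforward verification, and your invertibility remark over the Laurent polynomial ring cleanly handles the column case.
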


From Theorem \ref{them4}, GAPs  can be constructed by studying  $\mathbb{{M}}({\bm{z}})$ over QAM satisfying the PU condition.

\section{Constructions of  PU Matrices over $4^q$-QAM}

Before we show our idea on how to construct PU matrices over QAM, we first revisit the  construction of PU matrices $\bm{{{M}}}({\bm{z}})$ over QPSK in \cite{CCA}.
Let  $\bm{H}_k$ $(0\leq k\leq m)$ be arbitrary {\em Butson-type} (BH) Hadamard matrices \cite{Butson62} of order $2$ (with entries being fourth roots of unity), and $\bm{D}(z)=\begin{bmatrix}1&0\\
0& z \end{bmatrix}$. All the  standard GCPs over QPSK can be derived from the following PU matrices over QPSK:
\begin{equation}\label{QPSK-PU}
\bm{M}(\bm{z})=\bm{H}_0\cdot\bm{D}({z}_{1})\cdot\bm{H}_1\cdot\bm{D}(z_2)\cdots\bm{H}_{m-1}\cdot\bm{D}({z}_{m})\cdot\bm{H}_{m}.
\end{equation}

We go back to the matrices over QAM introduced in the previous section. Denote the component GBF matrix of the V-GBF matrix $\widetilde{{\mathbb{{M}}}}({\bm{x}})$ (in the form (\ref{QAM M(x)})) by
\begin{equation*}
\widetilde{\bm{M}}^{(p)}({\bm{x}})=	\begin{bmatrix}
{f}_{0,0}^{(p)}({\bm{x}})&{f}_{0,1}^{(p)}({\bm{x}})\\
{f}_{1,0}^{(p)}({\bm{x}})&{f}_{1,1}^{(p)}({\bm{x}})\\
\end{bmatrix}
\end{equation*}
for $0\le p<q$. Let $\bm{{{M}}}^{(p)}({\bm{z}})$ be the generating-function matrix of $\widetilde{\bm{M}}^{(p)}({\bm{x}})$. Since an array over $4^q$-QAM can be represented by a weighted sum of arrays over QPSK, the generating-function matrix $\mathbb{{M}}({\bm{z}})$ can also be represented by a weighted sum of $\bm{{{M}}}^{(p)}({\bm{z}})$, i.e.,
\begin{equation}\label{QAM PSK M(z)}
\mathbb{{M}}({\bm{z}})=
\sum_{p=0}^{q-1}2^{q-1-p}\bm{{{M}}}^{(p)}({\bm{z}}).
\end{equation}
If $\mathbb{{M}}({\bm{z}})$, {\em the weighted sum of $\bm{{{M}}}^{(p)}({\bm{z}})$, is a PU matrix}, GAPs over QAM are constructed by Theorem \ref{them4}. Moreover, if {\em each $\bm{{{M}}}^{(p)}({\bm{z}})$ is a PU matrix over QPSK}, GBF matrix  $\widetilde{\bm{M}}^{(p)}({\bm{x}})$ can be derived by the method proposed in \cite{CCA}.

However, it is very difficult to choose PU matrices over QSPK with the form (\ref{QPSK-PU}) such that their weighted sum is still a PU matrix. To address this problem, we first study the most simple PU matrices over QPSK, i.e, BH matrices, which can be  uniquely expressed by
\begin{equation}
\bm{{H}}({d_{0}}, {d_{1}}, {d_{2}})= \xi^{d_{0}}\cdot\begin{bmatrix}
1&{0}\\
{0}& \xi^{d_{1}}
\end{bmatrix}\begin{bmatrix}
1&1\\
1&-1
\end{bmatrix}\begin{bmatrix}
1&0\\
0& \xi^{d_{2}}
\end{bmatrix}=\begin{bmatrix}
\xi^{d_{0}}&\xi^{d_{0}+d_{2}}\\
\xi^{d_{0}+d_{1}}&-\xi^{d_{0}+d_{1}+d_{2}}
\end{bmatrix},
\end{equation}
where $d_{0}, d_{1}, d_{2}\in \Z_4$. Let $\mathcal{BH}_0$ be a subset of these BH matrices such that $2d_0+d_1+d_2=0$ over $\Z_4$.
In particular, we denote $\bm{H}(0,0,0)$ by $\bm{H}$ in the following paper, i.e.,
\begin{equation*}
\bm{H}=\begin{bmatrix}
1&1\\
1&-1
\end{bmatrix}
\end{equation*}

\begin{lemma}\label{lemma_QAM H}
	Let ${\bm{H}}_{p} \in \mathcal{BH}_0$ be BH matrices and $c_p$  real numbers for $0\leq p<q$. Then
	$\mathbb{{H}}=\sum_{p=0}^{q-1} c_p\cdot{\bm{H}}_{p}$
	is a unitary  matrix, i.e.,
	$\mathbb{\bm{H}}\mathbb{\bm{H}}^{\dag}=c\cdot\bm{I},$
	where $c$ is a real number.
\end{lemma}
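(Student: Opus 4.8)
The plan is to verify the unitarity condition $\mathbb{\bm{H}}\mathbb{\bm{H}}^{\dag}=c\cdot\bm{I}$ directly by expanding the sum and exploiting the structure of the matrices in $\mathcal{BH}_0$. First I would write $\mathbb{\bm{H}}=\sum_{p=0}^{q-1}c_p\bm{H}_p$ where each $\bm{H}_p=\bm{H}(d_0^{(p)},d_1^{(p)},d_2^{(p)})$ with $2d_0^{(p)}+d_1^{(p)}+d_2^{(p)}=0$ over $\Z_4$, and then compute
\[
\mathbb{\bm{H}}\mathbb{\bm{H}}^{\dag}=\sum_{p=0}^{q-1}\sum_{p'=0}^{q-1}c_pc_{p'}\,\bm{H}_p\bm{H}_{p'}^{\dag}.
\]
Since each $\bm{H}_p$ is by itself a BH matrix satisfying $\bm{H}_p\bm{H}_p^{\dag}=2\bm{I}$ (it is $\xi^{d_0}$ times a Hadamard-type matrix of order $2$), the diagonal terms $p=p'$ contribute $\sum_p 2c_p^2\cdot\bm{I}$, which is already a real scalar multiple of $\bm{I}$. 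The crux is therefore to show that the off-diagonal cross terms, when symmetrized over the pair $(p,p')$, produce another scalar multiple of $\bm{I}$ — equivalently that each cross term $\bm{H}_p\bm{H}_{p'}^{\dag}+\bm{H}_{p'}\bm{H}_p^{\dag}$ is itself a real multiple of the identity.

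The key computational step is to evaluate $\bm{H}_p\bm{H}_{p'}^{\dag}$ using the explicit entries
\[
\bm{H}(d_0,d_1,d_2)=\begin{bmatrix}\xi^{d_0}&\xi^{d_0+d_2}\\ \xi^{d_0+d_1}&-\xi^{d_0+d_1+d_2}\end{bmatrix}.
\]
Writing $\bm{H}_p$ and $\bm{H}_{p'}^{\dag}$ (the latter obtained by transposing and conjugating, i.e. replacing each $\xi^{d}$ by $\xi^{-d}$), I would multiply out the $2\times2$ product and read off its four entries as sums of powers of $\xi$. I expect the diagonal entries of $\bm{H}_p\bm{H}_{p'}^{\dag}$ to come out as $\xi^{d_0^{(p)}-d_0^{(p')}}$ times a common factor, and the off-diagonal entries to be expressible in terms of $\xi^{d_1},\xi^{d_2}$ differences. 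The real work is to check that after adding the transpose-conjugate partner $\bm{H}_{p'}\bm{H}_p^{\dag}=(\bm{H}_p\bm{H}_{p'}^{\dag})^{\dag}$, the off-diagonal entries cancel and the two diagonal entries coincide. This is where the constraint $2d_0+d_1+d_2=0$ must be invoked: it forces the phase relations among $d_0,d_1,d_2$ that make the off-diagonal contributions of the two matrices in $\mathcal{BH}_0$ destructively interfere.

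The main obstacle I anticipate is precisely this cancellation of the off-diagonal entries, which is not automatic for arbitrary BH matrices and relies essentially on membership in $\mathcal{BH}_0$. Concretely, I would substitute $d_2^{(p)}=-2d_0^{(p)}-d_1^{(p)}$ (and similarly for $p'$) to eliminate $d_2$ from the entry expressions, then verify that the resulting off-diagonal sum reduces to $\xi^{a}+\xi^{a+2}=\xi^{a}(1+\xi^{2})=\xi^{a}(1-1)=0$ for the relevant exponent $a$. An alternative, cleaner route is to recognize that any $\bm{H}\in\mathcal{BH}_0$ can be written as $\xi^{d_0}\bm{H}\bm{\Lambda}$ for a diagonal unitary $\bm{\Lambda}$, and that the condition $2d_0+d_1+d_2=0$ is exactly what guarantees $\mathcal{BH}_0$ is closed under the operation making all such matrices share a common eigenbasis structure; I would then argue that $\{\bm{H}_p\}$ all lie in a commutative family (or a suitable $*$-closed span) on which linear combinations stay normal with scalar $\bm{HH}^{\dag}$. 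Whichever route is taken, once the cross terms are shown to vanish or to sum to a real scalar, collecting everything gives $\mathbb{\bm{H}}\mathbb{\bm{H}}^{\dag}=c\cdot\bm{I}$ with $c=\sum_p 2c_p^2$ plus the surviving real cross contributions, completing the proof.
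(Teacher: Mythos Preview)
Your approach is correct and would succeed, but it is considerably more laborious than the paper's argument. The paper sidesteps the cross-term analysis entirely by noting that the constraint $2d_0+d_1+d_2\equiv 0\pmod 4$ forces $\xi^{d_0+d_2}=\overline{\xi^{d_0+d_1}}$ and $-\xi^{d_0+d_1+d_2}=-\overline{\xi^{d_0}}$, so every $\bm{H}_p\in\mathcal{BH}_0$ already has the shape
\[
\begin{bmatrix}\alpha_p & \overline{\beta_p}\\ \beta_p & -\overline{\alpha_p}\end{bmatrix}.
\]
This shape is a real vector space, so any real linear combination $\mathbb{H}=\sum_p c_p\bm{H}_p$ again has the form $\begin{bmatrix}\alpha & \overline{\beta}\\ \beta & -\overline{\alpha}\end{bmatrix}$, and one line of multiplication gives $\mathbb{H}\mathbb{H}^{\dag}=(|\alpha|^2+|\beta|^2)\bm{I}$.

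Your route --- expanding $\mathbb{H}\mathbb{H}^{\dag}$ into $q^2$ terms and showing each symmetrized cross term $\bm{H}_p\bm{H}_{p'}^{\dag}+\bm{H}_{p'}\bm{H}_p^{\dag}$ is a real scalar times $\bm{I}$ --- does work (indeed, once one knows the entries satisfy $(2,2)=\overline{(1,1)}$ and $(2,1)=-\overline{(1,2)}$ for $\bm{H}_p\bm{H}_{p'}^{\dag}$, adding its Hermitian transpose kills the off-diagonal and equalizes the diagonal). But this is exactly what the paper's structural observation proves in one stroke. Your ``alternative, cleaner route'' paragraph is groping toward the paper's idea; the missing insight is simply that $\mathcal{BH}_0$ sits inside the real $4$-dimensional algebra of matrices $\begin{bmatrix}\alpha & \overline{\beta}\\ \beta & -\overline{\alpha}\end{bmatrix}$ (a copy of the quaternions), every nonzero element of which is automatically a scalar multiple of a unitary.
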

\begin{proof}
	Since $2d_{0}+d_{1}+d_{2}=0$ over $\Z_4$, we have BH matrices
	\begin{equation*}
	\bm{{H}}({d_{0}}, {d_{1}}, {d_{2}})=\begin{bmatrix}
	\xi^{d_{0}}&\xi^{d_{0}+d_{2}}\\
	\xi^{d_{0}+d_{1}}&-\xi^{d_{0}+d_{1}+d_{2}}
	\end{bmatrix}=\begin{bmatrix}
	\xi^{d_{0}}&\overline{\xi^{d_{0}+d_{1}}}\\
	\xi^{d_{0}+d_{1}}&-\overline{\xi^{d_{0}}}
	\end{bmatrix}.
	\end{equation*}
	Then there exist complex number $\alpha$ and $\beta$ such that
	\begin{equation*}
	\mathbb{{H}}=\begin{bmatrix}
	\alpha&\overline{\beta}\\
	\beta&-\overline{\alpha}
	\end{bmatrix},
	\end{equation*}
	It's easy to verify that $\mathbb{\bm{H}}\cdot\mathbb{\bm{H}}^{\dag}=(|\alpha|^{2}+|\beta|^{2})\cdot\bm{I}.$
\end{proof}

We give examples for $q=6$ and $m=2$ to show our idea to construct PU matrices over QAM.
\begin{example}
	Suppose that all the following BH matrices $\bm{H}_{p}, \bm{H}_{p,i} \in \mathcal{BH}_0$. We have
	$$\mathbb{{M}}_{\omega_1=1}({\bm{z}})=\bm{H}_{0}\cdot\bm{D}({z}_{1})\cdot\left(2^0\bm{H}_{1, 0}+2^1\bm{H}_{1, 1}+2^2\bm{H}_{1, 2}+2^3\bm{H}_{1, 3}+2^4\bm{H}_{1, 4}+2^5\bm{H}_{1, 5}+2^6\bm{H}_{1, 6}\right)\cdot\bm{D}(z_2)\cdot\bm{H}_{2}$$
	is a PU matrix by Lemma \ref{lemma_QAM H}. Moreover,
	$$\mathbb{{M}}_{\omega_1=1}({\bm{z}})=\sum_{p=0}^{5}2^{p}\left(\bm{H}_{0}\cdot\bm{D}({z}_{1})\cdot\bm{H}_{1, p}\cdot\bm{D}(z_2)\cdot\bm{H}_{2}\right)$$
	is a weighted sum of PU Matrices over QPSK, so $\mathbb{{M}}_{\omega_1=1}({\bm{z}})$ is a PU matrix over $4^6$-QAM.
	
	We can also construct PU matrices over $4^6$-QAM based on the factorization $6=3\times2$. For example:
	$$\mathbb{{M}}_{(\omega_1, \omega_2)=(1, 0)}({\bm{z}})=\left((2^3)^1\bm{H}_{0,0}+(2^3)^0\bm{H}_{0,1}\right)\cdot\bm{D}({z}_{1})\cdot\left(2^2\bm{H}_{1, 0}+2^1\bm{H}_{1, 1}+2^0\bm{H}_{1, 2}\right)\cdot\bm{D}(z_2)\cdot\bm{H}_{2}.$$
	Moreover, we have
	$$\mathbb{{M}}_{(\omega_1, \omega_2)=(1, 0)}({\bm{z}})=\sum_{p_2=0}^{1}\sum_{p_1=0}^{2}2^{3(1-p_2)+(2-p_1)}\left(\bm{H}_{0, p_2}\cdot\bm{D}({z}_{1})\cdot\bm{H}_{1, p_1}\cdot\bm{D}(z_2)\cdot\bm{H}_{2}\right),$$
	where we can see the mixed radix representation plays an important role.
\end{example}

From the above example, we can construct  PU matrices over QAM by replacing BH matrices with the weighted sums of BH matrices in  $\mathcal{BH}_0$ (satisfying $2d_0+d_1+d_2=0$) in formula (\ref{QPSK-PU}). The \lq weights\rq \ can be determined by the factorization of $q$, and the positions $\{\omega_k\}$ of these weighted sums of BH matrices can be arbitrarily chosen. General results on the constructions of PU matrices over QAM will be introduced in the next sections.

\subsection{First Construction of PU Matrices over QAM}

In this subsection, recall the  factorization of $q=q_1\times q_2\times\cdots\times q_t$, the mappings $\rho_k$, and the vectors ${\underline{d}^{(k,p_{k})}}=\left({d_{0}^{(k,p_{k})}}, {d_{1}^{(k,p_{k})}}, {d_{2}^{(k,p_{k})}}\right)$ ($1\le k\le t$, $0\le p_{k}\le q_{k}{-}1$) in Theorem  \ref{thm_main_1}.

\begin{definition}\label{def-Hp}
	From the vectors ${\underline{d}^{(k,p_{k})}}$, we can define BH matrices $\bm{H}_{k,p_{k}}\in \mathcal{BH}_0$ by
	\begin{equation}
	\bm{H}_{k,p_{k}}={\bm{{H}}}({d_{0}^{(k,p_{k})}}, {d_{1}^{(k,p_{k})}}, {d_{2}^{(k,p_{k})}}).
	\end{equation}
	In particular,  we always have $\bm{H}_{k,0}=\bm{H}$, since $\underline{d}^{(k,0)}=\left(0, 0, 0\right)$. For $1\le k\le t$,  we define  the weighted sums of BH matrices by
	\[\mathbb{H}^{\{k\}}=\sum_{p_k=0}^{q_k-1}2^{(q_{k}-1-p_{k})\cdot\prod_{i=1}^{k-1}q_{i}}\cdot{\bm{H}}_{k,p_{k}}.\]
\end{definition}

\begin{theorem}\label{them5}
	With the notations above, for arbitrary ordered position set $\{\omega_{1},\omega_{2},\dots,\omega_{t}\}\subset \{0,1,\dots,m\}$,
	\begin{equation}\label{pu-1}
	{\mathbb{M}}_{1}(\bm{z})=\bm{U}^{\{0\}}\cdot\prod_{j=1}^{m}\left(\bm{D}({z}_{j})\cdot\bm{U}^{\{j\}}\right)
	\end{equation}
	is a PU matrix over $4^q$-QAM, where
	\begin{equation}
	\bm{U}^{\{j\}}=\left\{
	\begin{aligned}
	&\mathbb{H}^{\{k\}}, & \exists k,\, j=\omega_{k};\\
	&\bm{H},  & \forall l,\, k\neq\omega_{l}.
	\end{aligned}\right.
	\end{equation}
\end{theorem}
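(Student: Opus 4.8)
The plan is to exploit the cascade (multiplicative) structure of $\mathbb{M}_1(\bm{z})$: it is an alternating product of the constant matrices $\bm{U}^{\{j\}}$ and the diagonal delay matrices $\bm{D}(z_j)$, and para-unitarity is preserved under such products. Concretely, I would (i) verify that every factor is para-unitary up to a positive real scalar, then collapse $\mathbb{M}_1(\bm{z})\,\mathbb{M}_1^{\dagger}(\bm{z}^{-1})$ from the innermost factor outward to obtain the PU condition (\ref{PU}); and (ii) to justify that $\mathbb{M}_1(\bm{z})$ is genuinely a matrix \emph{over} $4^q$-QAM, expand the product of weighted sums and identify it with a weighted sum $\sum_{p=0}^{q-1}2^{q-1-p}\bm{M}^{(p)}(\bm{z})$ of QPSK generating-function matrices of the form (\ref{QPSK-PU}), as in (\ref{QAM PSK M(z)}).

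First I would record the factor-wise para-unitarity. For a position $j$ with no $\omega_k=j$ we have $\bm{U}^{\{j\}}=\bm{H}$ and $\bm{H}\bm{H}^{\dagger}=2\bm{I}$. For a position $j=\omega_k$ we have $\bm{U}^{\{j\}}=\mathbb{H}^{\{k\}}=\sum_{p_k=0}^{q_k-1}2^{(q_k-1-p_k)\prod_{i=1}^{k-1}q_i}\bm{H}_{k,p_k}$, a real-weighted sum of matrices $\bm{H}_{k,p_k}\in\mathcal{BH}_0$; hence Lemma \ref{lemma_QAM H} applies verbatim and yields $\mathbb{H}^{\{k\}}(\mathbb{H}^{\{k\}})^{\dagger}=c_k\bm{I}$ for a positive real $c_k$. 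Thus every constant factor satisfies $\bm{U}^{\{j\}}(\bm{U}^{\{j\}})^{\dagger}=c_j\bm{I}$. For the delay factors, since $\bm{D}(z)=\mathrm{diag}(1,z)$ has monomial entries with real coefficients, $\bm{D}^{\dagger}(z^{-1})=\mathrm{diag}(1,z^{-1})$, and therefore $\bm{D}(z_j)\bm{D}^{\dagger}(z_j^{-1})=\bm{I}$, i.e. each $\bm{D}(z_j)$ is para-unitary with constant $1$.

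Next I would carry out the telescoping. Writing
\[
\mathbb{M}_1(\bm{z})=\bm{U}^{\{0\}}\bm{D}(z_1)\bm{U}^{\{1\}}\cdots\bm{D}(z_m)\bm{U}^{\{m\}},
\]
the Hermitian transpose reverses the order of the factors, so in $\mathbb{M}_1(\bm{z})\,\mathbb{M}_1^{\dagger}(\bm{z}^{-1})$ the two central factors are $\bm{U}^{\{m\}}(\bm{U}^{\{m\}})^{\dagger}=c_m\bm{I}$; this scalar pulls out and the adjacent pair $\bm{D}(z_m)\bm{D}^{\dagger}(z_m^{-1})=\bm{I}$ cancels. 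Iterating this peeling over $j=m,m-1,\dots,0$ collapses the whole product to $c\,\bm{I}$ with $c=\prod_{j=0}^{m}c_j>0$, which is exactly (\ref{PU}). By Theorem \ref{them4}, the rows (or columns) of the associated V-GBF matrix then form GAPs over QAM.

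Finally, to see that $\mathbb{M}_1(\bm{z})$ is the weighted sum (\ref{QAM PSK M(z)}) of QPSK PU matrices, I would distribute the product over the $t$ weighted-sum factors $\mathbb{H}^{\{k\}}$. Each choice of indices $(p_1,\dots,p_t)$, i.e. replacing $\mathbb{H}^{\{k\}}$ by $\bm{H}_{k,p_k}$, yields a cascade of the form (\ref{QPSK-PU}), hence a QPSK PU matrix $\bm{M}^{(p)}(\bm{z})$ carrying weight $\prod_{k=1}^{t}2^{(q_k-1-p_k)\prod_{i<k}q_i}=2^{\sum_k(q_k-1-p_k)\prod_{i<k}q_i}$. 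I expect this last weight accounting to be the only step requiring genuine care: the key identity is $\sum_{k=1}^{t}(q_k-1)\prod_{i<k}q_i=q-1$, which telescopes via $(q_k-1)\prod_{i<k}q_i=\prod_{i\le k}q_i-\prod_{i<k}q_i$; combined with the mixed-radix value $p=\sum_k p_k\prod_{i<k}q_i$, the exponent equals $q-1-p$. Summing over all index tuples (which, being the distinct mixed-radix digit strings, produce distinct $p$ and so do not collide) gives $\mathbb{M}_1(\bm{z})=\sum_{p=0}^{q-1}2^{q-1-p}\bm{M}^{(p)}(\bm{z})$, confirming that $\mathbb{M}_1(\bm{z})$ is a PU matrix over $4^q$-QAM.
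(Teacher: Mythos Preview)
Your proposal is correct and follows essentially the same approach as the paper: the paper also invokes Lemma~\ref{lemma_QAM H} to conclude each $\mathbb{H}^{\{k\}}$ (hence each $\bm{U}^{\{j\}}$) is unitary, so the cascade (\ref{pu-1}) is PU, and it establishes the ``over $4^q$-QAM'' part via the same mixed-radix expansion into the weighted sum $\sum_{p=0}^{q-1}2^{q-1-p}\bm{M}_1^{(p)}(\bm{z})$ of QPSK PU matrices. Your treatment is simply more explicit about the telescoping and the exponent identity $\sum_k(q_k-1-p_k)\prod_{i<k}q_i=q-1-p$, which the paper asserts without detail.
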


\begin{proof}
	According to the mixed radix representation and the definition of $\mathbb{H}^{\{k\}}$, we have
	\begin{equation}\label{QAM-PU-QPSK-1}
	{\mathbb{M}}_{1}(\bm{z})=\sum_{p=0}^{q-1}2^{q-1-p}\cdot\bm{{{M}}}_{1}^{(p)}({\bm{z}})
	\end{equation}
	where $\bm{{{M}}}_{1}^{(p)}({\bm{z}})$ are PU matrices over QPSK with the form
	\begin{equation}\label{PU-QPSK-1}
	\bm{M}_{1}^{(p)}({\bm{z}})
	=\bm{U}_{p,0}\cdot\prod_{j=1}^{m}\left(\bm{D}({z}_{j})\cdot\bm{U}_{p,j}\right),
	\end{equation}
	where
	\begin{equation}
	\bm{U}_{p,j}=\left\{
	\begin{aligned}
	&{\bm{H}}_{k, \rho_k(p)}, & \exists k,\, j=\omega_{k};\\
	&\bm{H},  & \forall k,\, j\neq\omega_{k}.
	\end{aligned}\right.
	\end{equation}
	Then ${\mathbb{M}}_{1}(\bm{z})$ is a PU matrix over $4^q$-QAM by Lemma  \ref{lemma_QAM H}.
\end{proof}

We will give the corresponding GBF matrices of the generating-function matrices $\bm{M}_{1}^{(p)}({\bm{z}})$ in formula (\ref{QAM-PU-QPSK-1}), from which we can prove Theorem  \ref{thm_main_1}.

\subsection{Second Construction of PU Matrices over QAM}

Recall the  factorization of $q=q_0\times q_1\times\cdots\times q_t$, the mappings $\rho'_k$ ($0\le k\le t$), and the vectors ${\underline{d}^{(k,p_{k})}}=\left({d_{0}^{(k,p_{k})}}, {d_{1}^{(k,p_{k})}}, {d_{2}^{(k,p_{k})}}\right)$ ($1\le k\le t$, $0\le p_{k}\le q_{k}{-}1$) given in  Theorem  \ref{thm_main_2}. We now introduce another construction of PU matrices over $4^q$-QAM involving NSGIP.

For NSGIP $Q_0= Q(b_1,b_2,\dots,b_{q_0-1})$ and $Q_1= Q(b'_1,b'_2,\dots,b'_{q_0-1})$, define two matrices
$$diag\{Q_{0},Q_{1}\}=\begin{bmatrix}Q_{0}&0\\
0& Q_{1} \end{bmatrix}\ \text{and} \
\mathbb{Q}=\begin{bmatrix}
Q_{0}&Q_{1}\\
\overline{Q}_{1}&\overline{Q}_{0}
\end{bmatrix}.$$ Then $diag\{Q_{0},Q_{1}\}$ is a unitary matrix. Moreover, if
$\bm{M}(\bm{z})=
\begin{bmatrix}
{F}_{0,0}({\bm{z}})&{F}_{0,1}({\bm{z}})\\
{F}_{1,0}({\bm{z}})&{F}_{1,1}({\bm{z}})\\
\end{bmatrix}$ is a PU matrix, then
\begin{equation*}
\mathbb{Q}\odot\bm{M}(\bm{z}) =
\begin{bmatrix}
Q_{0}\cdot{F}_{0,0}({\bm{z}})&Q_{1}\cdot{F}_{0,1}({\bm{z}})\\
\overline{Q}_{1}\cdot{F}_{1,0}({\bm{z}})&\overline{Q}_{0}\cdot{F}_{1,1}({\bm{z}})\\
\end{bmatrix}
\end{equation*}
is also a PU matrix, where the symbol $\odot$ means the element-wise product of matrices.

\begin{theorem}\label{them6}
	With the notations above, For $1\le k\le t$, define  the weighted sums of BH matrices by
	\[\mathbb{H}^{\{k\}}=\sum_{p_k=0}^{q_k-1}2^{(q_{k}-1-p_{k})\cdot\prod_{i=0}^{k-1}q_{i}}\cdot{\bm{H}}^{(k,p_{k})},\]
	where $\bm{H}^{(k,p_{k})}$ is given in Definition \ref{def-Hp}. For arbitrary ordered position set $\{\omega_{1},\omega_{2},\dots,\omega_{t}\}\subset \{0,1,\dots,m\}$,  $2\le \upsilon\le m-1$, $1\le \upsilon_{1}\le m-2$, $\upsilon_{1}+2\le \upsilon_{2}\le m$,  both matrices
	\begin{equation}\label{pu-2a}
	{\mathbb{M}}_{a}(\bm{z})=\bm{U}^{\{0\}}\cdot \prod_{j=1}^{\upsilon-1}\left(\bm{D}(z_j)\cdot
	{\bm{U}}^{\{j\}}\right)\cdot diag\{Q_{0},Q_{1}\}\cdot\prod_{j=\upsilon}^{m}\left(\bm{D}(z_j)\cdot
	{\bm{U}}^{\{j\}}\right),
	\end{equation}
	and
	\begin{equation}\label{pu-2b}
	{\mathbb{M}}_{b}(\bm{z})=\prod_{j=1}^{\upsilon_{1}}\left(\bm{U}^{\{j-1\}}\cdot\bm{D}(z_{j})\right)\cdot\left(\mathbb{Q}\odot\left({\bm{U}}^{\{\upsilon_{1}\}}\cdot \prod_{j=\upsilon_{1}+1}^{\upsilon_{2}-1}\left(\bm{D}(z_k)\cdot
	{\bm{U}}^{\{j\}}\right)\right)\right)
	\cdot  \prod_{j={\upsilon_{2}}}^{m}\left(\bm{D}(z_k)\cdot
	{\bm{U}}^{\{j\}}\right),
	\end{equation}
	where
	\begin{equation}
	\bm{U}^{\{j\}}=\left\{
	\begin{aligned}
	&\mathbb{H}^{\{k\}}, & \exists k,\, j=\omega_{k};\\
	&\bm{H},  & \forall k,\, j\neq\omega_{k}.
	\end{aligned}\right.
	\end{equation}
	are PU matrices over $4^q$-QAM.
\end{theorem}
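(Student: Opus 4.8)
The plan is to follow the same two-step template used in the proof of Theorem \ref{them5}: first establish that ${\mathbb{M}}_a(\bm{z})$ and ${\mathbb{M}}_b(\bm{z})$ are para-unitary by exhibiting each as a product of para-unitary factors, and then record their $4^q$-QAM structure by expanding every weighted-sum block into QPSK pieces via the mixed radix representation. The only genuinely new ingredients beyond Theorem \ref{them5} are the Gaussian-integer factor $diag\{Q_{0},Q_{1}\}$ (in Case (a)) and the Hadamard factor $\mathbb{Q}\odot(\cdot)$ (in Case (b)), whose treatment hinges on the NSGIP condition $|Q_{0}|=|Q_{1}|$.

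First I would record the para-unitarity of the elementary factors. Each block $\bm{U}^{\{j\}}$ is either a single $\bm{H}\in\mathcal{BH}_0$ or a weighted sum $\mathbb{H}^{\{k\}}=\sum_{p_{k}}2^{(q_{k}-1-p_{k})\prod_{i=0}^{k-1}q_{i}}\bm{H}_{k,p_{k}}$ of matrices in $\mathcal{BH}_0$; by Lemma \ref{lemma_QAM H} every such block is a constant unitary matrix (up to a positive real scalar), hence para-unitary. The factor $\bm{D}(z_{j})=diag\{1,z_{j}\}$ satisfies $\bm{D}(z_{j})\bm{D}^{\dagger}(z_{j}^{-1})=\bm{I}$, and $diag\{Q_{0},Q_{1}\}$ is unitary because $|Q_{0}|=|Q_{1}|$. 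Since the class of para-unitary matrices is closed under multiplication — if $\bm{A}(\bm{z})\bm{A}^{\dagger}(\bm{z}^{-1})=c_{1}\bm{I}$ and $\bm{B}(\bm{z})\bm{B}^{\dagger}(\bm{z}^{-1})=c_{2}\bm{I}$ then $(\bm{A}\bm{B})(\bm{z})\,(\bm{A}\bm{B})^{\dagger}(\bm{z}^{-1})=c_{1}c_{2}\bm{I}$ — the expression (\ref{pu-2a}) for ${\mathbb{M}}_{a}(\bm{z})$ is a product of para-unitary factors and is therefore para-unitary.

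For Case (b) the inner matrix $\bm{U}^{\{\upsilon_{1}\}}\cdot\prod_{j=\upsilon_{1}+1}^{\upsilon_{2}-1}\left(\bm{D}(z_{j})\cdot\bm{U}^{\{j\}}\right)$ is para-unitary by the same product argument, and I would then invoke the fact recorded just before the theorem, that $\mathbb{Q}\odot\bm{M}(\bm{z})$ is para-unitary whenever $\bm{M}(\bm{z})$ is. The underlying reason, which I would make explicit, is that the equality $|Q_{0}|=|Q_{1}|$ lets one factor the Hadamard product as $\mathbb{Q}\odot\bm{M}=diag\{Q_{0},\overline{Q}_{1}\}\cdot\bm{M}\cdot diag\{1,Q_{1}/Q_{0}\}$, where both diagonal factors are unitary up to a real scalar (one checks the four entries match, the lower-right one being $\overline{Q}_{1}\cdot(Q_{1}/Q_{0})=|Q_{0}|^{2}/Q_{0}=\overline{Q}_{0}$); para-unitarity then follows from the closure property again. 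Multiplying on the left by $\prod_{j=1}^{\upsilon_{1}}\left(\bm{U}^{\{j-1\}}\cdot\bm{D}(z_{j})\right)$ and on the right by $\prod_{j=\upsilon_{2}}^{m}\left(\bm{D}(z_{j})\cdot\bm{U}^{\{j\}}\right)$, both para-unitary, shows ${\mathbb{M}}_{b}(\bm{z})$ in (\ref{pu-2b}) is para-unitary. It then remains to exhibit the $4^q$-QAM structure, i.e.\ to write each matrix as $\sum_{p=0}^{q-1}2^{q-1-p}\bm{M}^{(p)}(\bm{z})$ with every $\bm{M}^{(p)}(\bm{z})$ a QPSK generating-function matrix, exactly as in (\ref{QAM-PU-QPSK-1}); here I would expand each $\mathbb{H}^{\{k\}}$ by its definition and, in Case (a), write $diag\{Q_{0},Q_{1}\}=\sum_{p_{0}=0}^{q_{0}-1}2^{q_{0}-1-p_{0}}diag\{\xi^{b_{p_{0}}},\xi^{b'_{p_{0}}}\}$ (using $b_{0}=b'_{0}=0$), while in Case (b) I would expand $\mathbb{Q}=\sum_{p_{0}=0}^{q_{0}-1}2^{q_{0}-1-p_{0}}\mathbb{Q}_{p_{0}}$ with $\mathbb{Q}_{p_{0}}=\left[\begin{smallmatrix}\xi^{b_{p_{0}}}&\xi^{b'_{p_{0}}}\\\xi^{-b'_{p_{0}}}&\xi^{-b_{p_{0}}}\end{smallmatrix}\right]$ a QPSK matrix.

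I expect the main obstacle to be the bookkeeping in this last step: distributing all these sums through the products and verifying, via the mixed radix identity $p=\rho'_{0}(p)+\rho'_{1}(p)q_{0}+\cdots+\rho'_{t}(p)\prod_{i=0}^{t-1}q_{i}$, that the product of the block weights $2^{(q_{k}-1-\rho'_{k}(p))\prod_{i=0}^{k-1}q_{i}}$ together with the $q_{0}$-weight $2^{q_{0}-1-\rho'_{0}(p)}$ collapses to the single power $2^{q-1-p}$ (the exponents telescope to $(q-1)-p$). The accompanying subtlety is that the Hadamard factor must interact correctly with the matrix product, which is controlled precisely because each $\mathbb{Q}_{p_{0}}$ has unit-modulus entries and hence, like $\mathbb{Q}$ itself, factors through unitary diagonal matrices, so $\mathbb{Q}_{p_{0}}\odot(\cdot)$ maps QPSK para-unitary matrices to QPSK para-unitary matrices. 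Once the closure property and the NSGIP-driven factorization of $\mathbb{Q}\odot(\cdot)$ are in hand, the para-unitarity itself is routine, and these GBF matrices $\bm{M}^{(p)}(\bm{z})$ are exactly the objects from which Theorems \ref{thm_main_1} and \ref{thm_main_2} will be extracted in Section 7.
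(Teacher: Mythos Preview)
Your proposal is correct and follows essentially the same approach as the paper: verify para-unitarity from the product closure property together with the facts (stated just before Theorem~\ref{them6}) that $diag\{Q_0,Q_1\}$ is unitary and $\mathbb{Q}\odot\bm{M}(\bm{z})$ preserves PU, then expand each $\mathbb{H}^{\{k\}}$, $diag\{Q_0,Q_1\}$, and $\mathbb{Q}$ into their QPSK constituents and collect the weights via the mixed-radix identity to obtain the $4^q$-QAM decomposition $\sum_{p}2^{q-1-p}\bm{M}_e^{(p)}(\bm{z})$. Your explicit diagonal factorization $\mathbb{Q}\odot\bm{M}=diag\{Q_0,\overline{Q}_1\}\cdot\bm{M}\cdot diag\{1,Q_1/Q_0\}$ is a clean justification of the Hadamard-product step that the paper asserts but does not spell out.
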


\begin{proof}
For $e=a$ and $e=b$, it is easy to check that ${\mathbb{M}}_{e}(\bm{z})$ are PU matrices.
	According to the mixed radix representation and the definition of $\rho'_k$,  we have
	\begin{equation}\label{QAM-PU-QPSK-2}
	{\mathbb{M}}_{e}(\bm{z})=\sum_{p=0}^{q-1}2^{q-1-p}\cdot\bm{{{M}}}_{e}^{(p)}({\bm{z}}),
	\end{equation}
	where $\bm{M}_{e}^{(p)}(\bm{z})$ are multivariate polynomial matrices given by
	\begin{equation}\label{pu-2}
	{\bm{M}_{a}^{(p)}}(\bm{z})=\bm{U}_{p, 0}\cdot \prod_{j=1}^{\upsilon-1}\left(\bm{D}(z_j)\cdot
	{\bm{U}}_{p, j}\right)\cdot diag\{\xi^{b_{{\rho}'_{0}(p)}},\xi^{b'_{{\rho}'_{0}(p)}}\}\cdot\prod_{j=\upsilon}^{m}\left(\bm{D}(z_j)\cdot
	{\bm{U}}_{p,j}\right),
	\end{equation}\\
	and
	\begin{equation}\label{pu-3}
	\begin{split}
	{\bm{M}_{b}^{(p)}}(\bm{z})&=\prod_{j=0}^{\upsilon_{1}-1}\left(\bm{U}_{p, j}\cdot\bm{D}(z_{j+1})\right)\cdot
	\left(\begin{bmatrix}
	\xi^{b_{{\rho}'_{0}(p)}}&  \xi^{b'_{{\rho}'_{0}(p)}}\\
	\xi^{-b'_{{\rho}'_{0}(p)}}& \xi^{-b_{{\rho}'_{0}(p)}}
	\end{bmatrix}\odot\left({\bm{U}}_{p, \upsilon_{1}}\cdot \prod_{j=\upsilon_{1}+1}^{\upsilon_{2}-1}\left(\bm{D}(z_j)\cdot
	{\bm{U}}_{p, j}\right)\right)\right)\\
	&\cdot  \prod_{j=\upsilon_{2}}^{m}\left(\bm{D}(z_j)\cdot
	{\bm{U}}_{p, j}\right),
	\end{split}
	\end{equation}
	where
	\begin{equation}
	\bm{U}_{p,j}=\left\{
	\begin{aligned}
	&{\bm{H}}_{k, \rho'_k(p)}, & \exists k,\, j=\omega_{k};\\
	&\bm{H},  & \forall k,\, j\neq\omega_{k}.
	\end{aligned}\right.
	\end{equation}
By checking both ${\bm{M}_{a}^{(p)}}(\bm{z})$ and ${\bm{M}_{b}^{(p)}}(\bm{z})$ are PU matrices over QPSK, we finish the proof.
\end{proof}

We will give the corresponding GBF matrices of the generating-function matrices $\bm{M}_{a}^{(p)}({\bm{z}})$ and $\bm{M}_{b}^{(p)}({\bm{z}})$ in formulae (\ref{pu-2}) and (\ref{pu-3}), respectively, from which we can prove Theorem  \ref{thm_main_2}.

\section{Extracting Corresponding V-GBFs}

In this section, we will  develop a method to extract the corresponding V-GBF  matrices from their generating matrices ${\mathbb{M}}_{1}(\bm{z})$,  ${\mathbb{M}}_{a}(\bm{z})$ and ${\mathbb{M}}_{b}(\bm{z})$ introduced in Theorems \ref{them5} and \ref{them6}. We prove Theorems \ref{thm_main_1} and \ref{thm_main_2} by showing that all the GCPs in Theorems \ref{thm_main_1} and \ref{thm_main_2} can be respectively obtained by the corresponding V-GBF matrices of PU matrices ${\mathbb{M}}_{1}(\bm{z}), {\mathbb{M}}_{a}(\bm{z})$ and ${\mathbb{M}}_{b}(\bm{z})$, according to   Theorems \ref{them3} and \ref{them4}.

\subsection{GBF Matrices and Their Generating Matrices}

In this subsection, we introduce some basic results on how to extract GBF matrices from their generating matrices over QPSK.

The following notations of matrices of order $2$ will be used in the rest of the paper.

\begin{itemize}
	\item $\bm{D}(z)=\begin{bmatrix}1&0\\
	0& z \end{bmatrix}$,
	$\bm{D}(x)=\begin{bmatrix}1{-}x&0\\
	0& x \end{bmatrix}.$
	\item $\bm{H}=\begin{bmatrix}
	1&1\\
	1&-1
	\end{bmatrix},
	\widetilde{\bm{H}}=\begin{bmatrix}
	0&0\\
	0&2
	\end{bmatrix}.$
	\item $
	\bm{J}=\begin{bmatrix}1&1\\
	1&1 \end{bmatrix},
	\bm{A}=\begin{bmatrix}0&0\\
	1&1 \end{bmatrix},
	\bm{B}=\begin{bmatrix}0&1\\
	0&1 \end{bmatrix}.$
\end{itemize}

In the following theorem, suppose that
$\{\bm{z}_0,\bm{z}_1,\dots,\bm{z}_{m},{z}_1,\dots,z_{m}\}$ are  multivariate variables which do not intersect with each other, and  $\{\bm{x}_0,\bm{x}_1,\dots,\bm{x}_{m},{x}_1,\dots,x_{m}\} $ are their corresponding Boolean variables respectively, where $\bm{z}_k$ and $\bm{x}_k$
are multivariate variables, and  ${z}_k$ and ${x}_k$ are single variables.
\begin{theorem}\label{Boolean_matrix}
	For $0\le j\leq m$, let $\bm{M}^{\{j\}}(\bm{z}_j)$ be generating matrices of GBF matrices $\widetilde{\bm{M}}^{\{j\}}(\bm{x}_j)$ over QPSK.
	Denote $\bm{z}=(\bm{z}_0,\bm{z}_1,\dots,\bm{z}_{m},{z}_1,\dots,z_{m})$ and
	$\bm{x}=(\bm{x}_0,\bm{x}_1,\dots,\bm{x}_{m},{x}_1,\dots,x_{m})$. Then the corresponding  GBF matrix
	of
	\begin{equation}\label{QPSK-generating-matrix}
	{\bm{M}}(\bm{z})=\bm{M}^{\{0\}}(\bm{z}_0)\cdot\left(\prod_{j=1}^{m}\left(\bm{D}(z_j)\cdot \bm{M}^{\{j\}}(\bm{z}_j)\right)\right)
	\end{equation} is given by
	\begin{equation}\label{QPSK-GBF-matrix}
	\bm{\widetilde{M}}(\bm{x})
	=\bm{\widetilde{M}}^{\{0\}}(\bm{x}_0)\cdot\bm{D}(x_{1})\cdot\bm{J}+\sum_{j=1}^{m-1}\bm{J}\cdot
	\bm{D}(x_{j})\cdot\bm{\widetilde{M}}^{\{j\}}(\bm{x}_j)\cdot\bm{D}(x_{j+1})\cdot\bm{J}+\bm{J}\cdot
	\bm{D}(x_{m})\cdot\bm{\widetilde{M}}^{\{m\}}(\bm{x}_m).
	\end{equation}
\end{theorem}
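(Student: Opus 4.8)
The plan is to prove the identity by induction on $m$, the number of inserted diagonal factors $\bm{D}(z_j)$, driven by a single one-step composition rule. First I would record the dictionary between generating functions and GBFs over QPSK used in \cite{CCA}: if $F(\bm{z}_0)$ and $G(\bm{z}_1)$ are generating functions of QPSK arrays with GBFs $f(\bm{x}_0)$ and $g(\bm{x}_1)$ in disjoint variable blocks, then the product $F(\bm{z}_0)\,G(\bm{z}_1)$ is the generating function of the array with GBF $f(\bm{x}_0)+g(\bm{x}_1)$ (exponents add, since $\xi^{f}\xi^{g}=\xi^{f+g}$), and the Laurent polynomial $F(\bm{z}_0)+z\,G(\bm{z}_1)$ in a \emph{fresh} single variable $z$ is the generating function of the array in the enlarged variable set whose GBF is the branch selection $(1-x)f(\bm{x}_0)+x\,g(\bm{x}_1)$, where $x$ is the Boolean counterpart of $z$. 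The crucial observation is that the two summands occupy the disjoint monomial supports $z^0$ and $z^1$, so their sum is again a genuine single-array generating function.

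Next I would establish the composition lemma that powers the induction: if $\bm{P}(\bm{z})$ and $\bm{Q}(\bm{w})$ are single-array generating matrices over QPSK (every entry is the generating function of one array) with GBF matrices $\widetilde{\bm{P}}(\bm{x})$ and $\widetilde{\bm{Q}}(\bm{y})$, and $z$ is a fresh single variable with Boolean counterpart $x$, then $\bm{P}(\bm{z})\cdot\bm{D}(z)\cdot\bm{Q}(\bm{w})$ is again a single-array generating matrix with GBF matrix $\widetilde{\bm{P}}(\bm{x})\cdot\bm{D}(x)\cdot\bm{J}+\bm{J}\cdot\bm{D}(x)\cdot\widetilde{\bm{Q}}(\bm{y})$. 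To see this I would expand the $(u,v)$ entry using the diagonal form of $\bm{D}(z)$, obtaining $\bm{P}_{u,0}\bm{Q}_{0,v}+z\,\bm{P}_{u,1}\bm{Q}_{1,v}$, apply the dictionary entrywise, and then verify by a direct $2\times2$ computation that the resulting branch combination $(1-x)\big(\widetilde{P}_{u,0}+\widetilde{Q}_{0,v}\big)+x\big(\widetilde{P}_{u,1}+\widetilde{Q}_{1,v}\big)$ agrees with the $(u,v)$ entry of $\widetilde{\bm{P}}\bm{D}(x)\bm{J}+\bm{J}\bm{D}(x)\widetilde{\bm{Q}}$. Here $\bm{D}(x)$ performs the selection and $\bm{J}$ broadcasts the selected branch along the row and column, with the outer $+$ read as addition of GBFs in $\Z_4$.

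With the lemma in hand I would induct on $m$. The base case $m=1$ is exactly the lemma applied to $\bm{P}=\bm{M}^{\{0\}}$ and $\bm{Q}=\bm{M}^{\{1\}}$. For the inductive step I would write the $(m{+}1)$-fold product (\ref{QPSK-generating-matrix}) as $\bm{M}'(\bm{z}')\cdot\bm{D}(z_{m+1})\cdot\bm{M}^{\{m+1\}}(\bm{z}_{m+1})$, where $\bm{M}'$ is the $m$-fold product covered by the induction hypothesis, and apply the composition lemma once more. Substituting the hypothesised formula for $\widetilde{\bm{M}}'$ and distributing the trailing factor $\bm{D}(x_{m+1})\bm{J}$, every term that already ends in $\bm{J}$ (the leading term and all middle summands) acquires a block $\bm{J}\,\bm{D}(x_{m+1})\,\bm{J}$, which collapses to $\bm{J}$ by the identity $\bm{J}\,\bm{D}(x)\,\bm{J}=\bm{J}$ (valid since $(1-x)+x=1$ for Boolean $x$), leaving those terms unchanged; the former last term $\bm{J}\,\bm{D}(x_m)\,\widetilde{\bm{M}}^{\{m\}}$ instead gains the genuine factor $\bm{D}(x_{m+1})\bm{J}$ and becomes the $j=m$ summand. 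Adjoining the new last term $\bm{J}\,\bm{D}(x_{m+1})\,\widetilde{\bm{M}}^{\{m+1\}}$ supplied by the lemma yields precisely formula (\ref{QPSK-GBF-matrix}) with $m+1$ in place of $m$, closing the induction.

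The main obstacle is not the bookkeeping but the well-definedness underpinning the composition lemma: I must guarantee that every intermediate matrix product is \emph{still} a single-array generating matrix, so that the phrase ``the corresponding GBF matrix'' is meaningful at each stage. This hinges on the freshness of each $z_j$ and the disjointness of the blocks $\bm{z}_0,\dots,\bm{z}_m$, which forces the branches $z_j^0$ and $z_j^1$ into disjoint monomial supports and rules out any cancellation or collision that would break the single-array property. A secondary point needing care is the mixed arithmetic in (\ref{QPSK-GBF-matrix}), where the Boolean and integer entries of $\bm{J}$ and $\bm{D}(x_j)$ multiply GBF entries while the outer combination is addition in $\Z_4$; I would check once and for all that this formal matrix algebra faithfully mirrors multiplication of generating functions (exponent addition) together with branch selection, after which the induction is purely formal.
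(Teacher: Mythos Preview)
Your proposal is correct and follows essentially the same approach as the paper: the paper's Appendix~A proves exactly your composition lemma (stated as a ``claim'' for the product $\bm{M}^{\{0\}}(\bm{z}_0)\cdot\bm{D}(z)\cdot\bm{M}^{\{1\}}(\bm{z}_1)$) by the same entrywise expansion and index-vs-branch identification, and then appeals to iteration. Your write-up is more explicit than the paper in spelling out the inductive step via the collapse $\bm{J}\,\bm{D}(x)\,\bm{J}=\bm{J}$ and in flagging the freshness/disjointness conditions that make each intermediate product a genuine single-array generating matrix, but these are elaborations of the same argument rather than a different route.
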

\begin{proof}
	See Appendix A.
\end{proof}

The following corollary is an immediate consequence of Theorem \ref{Boolean_matrix}. Note that it has been provided in both \cite{Budi2018PU} and \cite{CCA} by other expressions, but the matrix expression shown here will simplify the process  to extract the V-GBF from PU matrices given in  Theorems 5 and 6.

\begin{corollary}\label{coro}
	The corresponding GBF matrix of the following PU matrix
	\begin{equation}\label{equation_standard_U(z)}
	\bm{U}(\bm{z})=\bm{H}\cdot\left(\prod_{j=1}^{m}\left(\bm{D}(z_j)\cdot \bm{H}\right)\right)
	\end{equation}
	is given by
	\begin{equation}\label{Uorder2}
	\widetilde{\bm{U}}(\bm{x})=f(\bm{x})\cdot\bm{J}+2x_{1}\cdot\bm{A}+2x_{m}\cdot\bm{B},
	\end{equation}
	where ${f}(\bm{x})={2}\cdot\sum_{j=1}^{m-1}x_{j}x_{j+1}$.
\end{corollary}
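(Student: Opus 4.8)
The plan is to apply Theorem~\ref{Boolean_matrix} in the degenerate case where every factor matrix is the constant Butson--Hadamard matrix, that is, $\bm{M}^{\{j\}}(\bm{z}_j)=\bm{H}$ for all $0\le j\le m$. Since $\bm{H}$ carries no $\bm{z}_j$-variables, the associated multivariate Boolean variables $\bm{x}_j$ all disappear and only the single variables $x_1,\dots,x_m$ survive, so that the PU matrix $\bm{U}(\bm{z})$ in~(\ref{equation_standard_U(z)}) is exactly the specialization of $\bm{M}(\bm{z})$ in~(\ref{QPSK-generating-matrix}). The first step is to record the GBF matrix attached to the constant generating matrix $\bm{H}$. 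Reading $\bm{H}=\begin{bmatrix}1&1\\1&-1\end{bmatrix}$ entrywise as a matrix of constant generating functions, each entry is a fourth root of unity $\xi^{d}$ whose corresponding GBF is the constant $d\in\Z_4$; since $1=\xi^0$ and $-1=\xi^2$, the GBF matrix of $\bm{H}$ is $\widetilde{\bm{H}}=\begin{bmatrix}0&0\\0&2\end{bmatrix}$, precisely the matrix introduced in the notation list. Substituting $\widetilde{\bm{M}}^{\{j\}}(\bm{x}_j)=\widetilde{\bm{H}}$ into~(\ref{QPSK-GBF-matrix}) then reduces $\widetilde{\bm{U}}(\bm{x})$ to an explicit sum of products of the fixed matrices $\widetilde{\bm{H}}$, $\bm{D}(x_j)$ and $\bm{J}$.

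The second step is to evaluate the three families of terms in~(\ref{QPSK-GBF-matrix}), all of which collapse because $\widetilde{\bm{H}}$ and the $\bm{D}(x_j)=\mathrm{diag}(1-x_j,x_j)$ are diagonal. For the two boundary terms I would compute $\widetilde{\bm{H}}\cdot\bm{D}(x_1)\cdot\bm{J}=2x_1\,\bm{A}$ and $\bm{J}\cdot\bm{D}(x_m)\cdot\widetilde{\bm{H}}=2x_m\,\bm{B}$, using that right- or left-multiplication by $\bm{J}$ spreads the single nonzero entry $2x_1$ (resp. $2x_m$) into the pattern $\bm{A}$ (resp. $\bm{B}$). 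For a generic interior index $j$, the diagonal product $\bm{D}(x_j)\cdot\widetilde{\bm{H}}\cdot\bm{D}(x_{j+1})$ equals $\mathrm{diag}(0,\,2x_jx_{j+1})$, and sandwiching it between two copies of $\bm{J}$ gives $\bm{J}\cdot\mathrm{diag}(0,2x_jx_{j+1})\cdot\bm{J}=2x_jx_{j+1}\,\bm{J}$. Summing the interior terms over $1\le j\le m-1$ then produces $\left(2\sum_{j=1}^{m-1}x_jx_{j+1}\right)\bm{J}=f(\bm{x})\,\bm{J}$, and adding the two boundary contributions yields exactly $\widetilde{\bm{U}}(\bm{x})=f(\bm{x})\,\bm{J}+2x_1\,\bm{A}+2x_m\,\bm{B}$, as claimed.

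Since Theorem~\ref{Boolean_matrix} is taken as given, essentially no conceptual obstacle remains; the only points requiring care are bookkeeping rather than ideas. First, one must check that the specialization is legitimate, i.e. that collapsing each block $\bm{M}^{\{j\}}(\bm{z}_j)$ to the variable-free matrix $\bm{H}$ is admissible in Theorem~\ref{Boolean_matrix}; this is immediate, as the theorem allows each factor to be an arbitrary generating matrix of a GBF matrix, and a constant matrix is the degenerate case with empty variable set. Second, the identification of the GBF matrix of $\bm{H}$ must respect the convention $-1=\xi^2$, so that the minus sign in $\bm{H}$ becomes the additive constant $2$ rather than a multiplicative $-1$; this is the one place where the passage from the multiplicative generating-function world to the additive $\Z_4$ world is genuinely used. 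The remaining matrix arithmetic is routine, which is why, as the text remarks, the corollary is an immediate consequence of Theorem~\ref{Boolean_matrix}.
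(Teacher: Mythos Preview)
Your proposal is correct and follows essentially the same approach as the paper: specialize Theorem~\ref{Boolean_matrix} to $\bm{M}^{\{j\}}(\bm{z}_j)=\bm{H}$ for all $j$, identify $\widetilde{\bm{H}}=\begin{bmatrix}0&0\\0&2\end{bmatrix}$, and verify the three identities $\widetilde{\bm{H}}\cdot\bm{D}(x_1)\cdot\bm{J}=2x_1\bm{A}$, $\bm{J}\cdot\bm{D}(x_j)\cdot\widetilde{\bm{H}}\cdot\bm{D}(x_{j+1})\cdot\bm{J}=2x_jx_{j+1}\bm{J}$, and $\bm{J}\cdot\bm{D}(x_m)\cdot\widetilde{\bm{H}}=2x_m\bm{B}$. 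Your write-up is somewhat more detailed in justifying the degenerate specialization and the passage from multiplicative to additive notation, but the argument is the same.
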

\begin{proof}
	For $0\le j\leq m$, let $\bm{M}^{\{j\}}(\bm{z}_j)=\bm{H}$ in Theorem \ref{Boolean_matrix}. By verifying the following results,
	\begin{eqnarray*}
		\bm{\widetilde{H}}\cdot\bm{D}(x_{1})\cdot\bm{J}&=&2x_{1}\cdot\bm{A},\\
		\bm{J}\cdot
		\bm{D}(x_{j})\cdot\bm{\widetilde{H}}\cdot\bm{D}(x_{j+1})\cdot\bm{J}&=&2x_jx_{j+1}\cdot \bm{J},\\
		\bm{J}\cdot
		\bm{D}(x_{m})\cdot\bm{\widetilde{H}}&=&2x_{m}\cdot\bm{B},
	\end{eqnarray*}
	the proof is completed by formula (\ref{QPSK-GBF-matrix}).
\end{proof}

\subsection{Extracting Corresponding V-GBFs from PU Matrices}

Recall the PU matrices over QAM: ${\mathbb{M}}_{1}(\bm{z})$, ${\mathbb{M}}_{a}(\bm{z})$ and ${\mathbb{M}}_{b}(\bm{z})$,  constructed in Theorems \ref{them5} and \ref{them6}.
We have shown that they are all weighted sums of the PU matrices $\bm{{{M}}}_{e}^{(p)}({\bm{z}})$ ($e=1, a$ or $b$) over QPSK. Moreover, all PU matrices $\bm{{{M}}}_{e}^{(p)}({\bm{z}})$ ($e=1, a$ or $b$) over QPSK are with  the form (\ref{QPSK-generating-matrix}), and a method to extract GBF matrices from these PU matrices has been developed in Theorem \ref{Boolean_matrix}, from which we are able to extract V-GBF matrices $\widetilde{\mathbb{M}}_{e}(\bm{x})$ (for $e=1, a$ or $b$).

{\begin{theorem}\label{them8}
		Let $\vec{d}$ vectors $\vec{d_{i}}^{(k)}$ and  $\vec{d_{i}}^{[k]}$, $\vec{b}$ vectors and  position set $\{\omega_{1},\omega_{2},\dots,\omega_{t}, \upsilon, \upsilon_{1}, \upsilon_{2}\}$ be given in Theorems \ref{thm_main_1} and \ref{thm_main_2} for $e=1, a$ or $b$. Then
		${\mathbb{M}}_{e}(\bm{z})$ is the generating matrix of the
		V-GBF matrix
		\begin{equation}\label{eqn_thm5}
		\widetilde{\mathbb{M}}_{e}(\bm{x})
		=\left(f(\bm{x})\cdot\vec{1}+\vec{s}(\bm{x})\right)\cdot\bm{J}
		+\vec{\mu}_{A}(\bm{x})\cdot\bm{A}
		+\vec{\mu}_{B}(\bm{x})\cdot\bm{B},
		\end{equation}	
		where $f(\bm{x})={2}\cdot\sum_{j=1}^{m-1}x_{j}x_{j+1}$.
		
		For the case $e=1$,
		
		$$\vec{s}(\bm{x})=\sum_{k=1}^{t}\left(\vec{d_{1}}^{(k)}x_{\omega_{k}}+\vec{d_{2}}^{(k)}x_{{\omega_{k}}+1}+\vec{d_{0}}^{(k)}\right),$$
		
		$$\vec{\mu}_{A}(\bm{x})=\left\{
		\begin{aligned}
		&2x_{1}\cdot\vec{1}+\vec{d_{1}}^{(k)},&\exists k,\,\omega_{k}=0,\\
		&2x_{1}\cdot\vec{1},&\text{otherwise};
		\end{aligned}\right.$$

		$$\vec{\mu}_{B}(\bm{x})=\left\{
		\begin{aligned}
		&2x_{m}\cdot\vec{1}+\vec{d_{2}}^{(k)},&\exists k,\,\omega_{k}=m;\\
		&2x_{m}\cdot\vec{1},&\text{otherwise}.
		\end{aligned}\right.$$

		For the case $e=a$,
		$$\vec{s}(\bm{x})=\sum_{k=1}^{t}\left(\vec{d_{1}}^{[k]}x_{\omega_{k}}+\vec{d_{2}}^{[k]}x_{{\omega_{k}}+1}+\vec{d_{0}}^{[k]}\right)+(\vec{b}'-\vec{b})x_{\upsilon}+\vec{b}.$$

		For the case $e=b$,
		$$\vec{s}(\bm{x})=\sum_{k=1}^{t}\left(\vec{d_{1}}^{[k]}x_{\omega_{k}}+\vec{d_{2}}^{[k]}x_{{\omega_{k}}+1}+\vec{d_{0}}^{[k]}\right)+
		(\vec{b}'-\vec{b})x_{\upsilon_{1}}+(-\vec{b}'-\vec{b})x_{\upsilon_{2}}+\vec{b}.$$
		
		For both cases $e=a$ and $e=b$,
		$$\vec{\mu}_{A}(\bm{x})=\left\{
		\begin{aligned}
		&2x_{1}\cdot\vec{1}+\vec{d_{1}}^{[k]},&\exists k,\,\omega_{k}=0,\\
		&2x_{1}\cdot\vec{1},&\text{otherwise}.
		\end{aligned}\right.$$
		$$\vec{\mu}_{B}(\bm{x})=\left\{
		\begin{aligned}
		&2x_{m}\cdot\vec{1}+\vec{d_{2}}^{[k]},&\exists k,\,\omega_{k}=m,\\
		&2x_{m}\cdot\vec{1},&\text{otherwise}.
		\end{aligned}\right.$$
	\end{theorem}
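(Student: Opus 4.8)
The plan is to prove Theorem~\ref{them8} by a direct, layer-by-layer extraction of the V-GBF matrix from each PU matrix. Theorems~\ref{them5} and~\ref{them6} already write $\mathbb{M}_e(\bm{z})=\sum_{p=0}^{q-1}2^{q-1-p}\bm{M}_e^{(p)}(\bm{z})$ as a weighted sum of QPSK PU matrices (formulae (\ref{QAM-PU-QPSK-1}) and (\ref{QAM-PU-QPSK-2})), and the generating-matrix/V-GBF dictionary (\ref{equation_QAMBoolean*}) acts independently in each layer $p$. Hence it suffices to compute the QPSK GBF matrix $\widetilde{\bm{M}}_e^{(p)}(\bm{x})$ of each $\bm{M}_e^{(p)}(\bm{z})$ and then collect the $p$-th coordinates into the $\vec{d}$- and $\vec{b}$-vectors. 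Each $\bm{M}_e^{(p)}(\bm{z})$ is a product of the matrices $\bm{D}(z_j)$ with constant BH matrices $\bm{U}_{p,j}$, all equal to $\bm{H}$ except at the positions $\omega_k$, where $\bm{U}_{p,\omega_k}=\bm{H}_{k,\rho_k(p)}$; thus it is precisely of the form~(\ref{QPSK-generating-matrix}), and Theorem~\ref{Boolean_matrix} applies with every $\bm{M}^{\{j\}}$ a constant matrix whose GBF matrix is the entrywise $\Z_4$-exponent array $\widetilde{\bm{H}}(d_0,d_1,d_2)=\left[\begin{smallmatrix} d_0 & d_0+d_2\\ d_0+d_1 & d_0+d_1+d_2+2\end{smallmatrix}\right]$ of the BH matrix.

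First I would record the local contributions of a single BH factor, generalising Corollary~\ref{coro}. A direct $2\times 2$ computation gives, for an interior position $j$, $\bm{J}\cdot\bm{D}(x_j)\cdot\widetilde{\bm{H}}(d_0,d_1,d_2)\cdot\bm{D}(x_{j+1})\cdot\bm{J}=(d_0+d_1x_j+d_2x_{j+1}+2x_jx_{j+1})\bm{J}$, and at the two ends $\widetilde{\bm{H}}(d_0,d_1,d_2)\cdot\bm{D}(x_1)\cdot\bm{J}=(d_0+d_2x_1)\bm{J}+(d_1+2x_1)\bm{A}$ and $\bm{J}\cdot\bm{D}(x_m)\cdot\widetilde{\bm{H}}(d_0,d_1,d_2)=(d_0+d_1x_m)\bm{J}+(d_2+2x_m)\bm{B}$. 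Taking $(d_0,d_1,d_2)=(0,0,0)$ recovers the $\bm{H}$-contributions $f(\bm{x})\bm{J}+2x_1\bm{A}+2x_m\bm{B}$ of Corollary~\ref{coro}, so substituting into~(\ref{QPSK-GBF-matrix}) shows that replacing $\bm{H}$ by $\bm{H}_{k,\rho_k(p)}$ at an interior position $\omega_k$ adds exactly $d_0^{(k,\rho_k(p))}+d_1^{(k,\rho_k(p))}x_{\omega_k}+d_2^{(k,\rho_k(p))}x_{\omega_k+1}$ to the $\bm{J}$-part, while the fake-variable convention routes the cases $\omega_k\in\{0,m\}$ correctly into the $\bm{A}$- and $\bm{B}$-parts. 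Collecting these over $p$ yields $\vec{s}(\bm{x})$, $\vec{\mu}_A(\bm{x})$, $\vec{\mu}_B(\bm{x})$ of the case $e=1$, settling that case.

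For $e=a$ the only new ingredient is the factor $\mathrm{diag}\{\xi^{b_{\rho'_0(p)}},\xi^{b'_{\rho'_0(p)}}\}$ at position $\upsilon$; being diagonal it commutes past $\bm{D}(z_\upsilon)$ and is absorbed into the neighbouring BH matrix, producing a modified constant matrix whose interior contribution, by the identities above, is $\vec{b}+(\vec{b}'-\vec{b})x_\upsilon$ in the $\bm{J}$-part (the constraint $2\le\upsilon\le m-1$ keeps $\upsilon$ interior, so no $\bm{A},\bm{B}$ terms appear). For $e=b$ I would first exploit that $\mathbb{Q}$ is singular: writing the per-layer matrix $\mathbb{Q}'=\left[\begin{smallmatrix}\xi^{b}&\xi^{b'}\\ \xi^{-b'}&\xi^{-b}\end{smallmatrix}\right]$ as a rank-one outer product $\bm{u}\bm{v}^{T}$ converts the Hadamard product into ordinary diagonal pre- and post-multiplication, $\mathbb{Q}'\odot\bm{N}=\mathrm{diag}(\bm{u})\cdot\bm{N}\cdot\mathrm{diag}(\bm{v})$, for any middle block $\bm{N}$. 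These two diagonal matrices again commute past the adjacent $\bm{D}(z)$'s and are absorbed into the BH factors sitting at $\upsilon_1$ and $\upsilon_2$, reducing the whole product to the standard form~(\ref{QPSK-generating-matrix}) and bringing the computation back to the local identities.

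The routine parts are the $2\times2$ verifications and the $p$-by-$p$ assembly of the vectors. The step I expect to be the genuine obstacle is the phase bookkeeping in case $e=b$: one must verify that the rank-one split of $\mathbb{Q}'$ and its absorption into the two BH factors deposit precisely $(\vec{b}'-\vec{b})x_{\upsilon_1}$ and $(-\vec{b}'-\vec{b})x_{\upsilon_2}$, together with the single constant $\vec{b}$, in the offset — keeping careful track of the conjugations $\xi^{-b}=\overline{\xi^{b}}$ in $\mathbb{Q}$ and of the sign $-1=\xi^{2}$ hidden in $\bm{H}$, since a misallocation between the $\upsilon_1$- and $\upsilon_2$-columns shifts the answer by a multiple of $2$. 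Once the extracted $\widetilde{\mathbb{M}}_e(\bm{x})$ is shown to equal the stated form~(\ref{eqn_thm5}), Theorem~\ref{them8} is proved; its rows and columns then feed Theorems~\ref{them3} and~\ref{them4} to yield the GAPs and GCPs of Theorems~\ref{thm_main_1} and~\ref{thm_main_2}.
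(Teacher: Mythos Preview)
Your proposal is correct and, for cases $e=1$ and $e=a$, essentially identical to the paper's argument: both invoke Theorem~\ref{Boolean_matrix} on each $\bm{M}_e^{(p)}(\bm{z})$, compute the local $2\times2$ contribution of each $\widetilde{\bm{H}}(d_0,d_1,d_2)$, and assemble the layers via the definitions of $\vec{d}_i^{(k)}$, $\vec{b}$, $\vec{b}'$. The paper packages your interior/end identities as Lemma~\ref{lem2} (after subtracting the baseline $\widetilde{\bm{H}}$) and handles the diagonal factor in case $e=a$ by Lemma~\ref{lem4}(1), which is precisely your ``absorb into the neighbouring BH matrix'' manoeuvre.

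For case $e=b$ you take a genuinely different route. You exploit that the per-layer matrix $\mathbb{Q}'$ has determinant $\xi^{b-b}-\xi^{b'-b'}=0$, so a rank-one split converts the Hadamard product into two diagonal flankings, which you then absorb into adjacent BH factors. This works, but as you anticipate, the phase bookkeeping is delicate: the rank-one factorisation is not unique, and different choices of $\bm{u},\bm{v}$ distribute the constant $\vec{b}$ and the two linear coefficients between the $\upsilon_1$- and $\upsilon_2$-columns differently, so matching the stated form requires care. The paper bypasses this detour via Lemma~\ref{lem4}(2): since every entry of $\mathbb{Q}'$ is a power of $\xi$, Hadamard-multiplying by $\mathbb{Q}'$ simply adds its exponent matrix $\widetilde{\bm{C}}=\left[\begin{smallmatrix} b & b'\\ -b' & -b\end{smallmatrix}\right]$ to the GBF matrix of the middle block; one further application of Theorem~\ref{Boolean_matrix} then yields the single additive term $\bm{J}\,\bm{D}(x_{\upsilon_1})\,\widetilde{\bm{C}}\,\bm{D}(x_{\upsilon_2})\,\bm{J}$, a direct bilinear evaluation with no conjugation traps. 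Your rank-one approach is more structural and would generalise beyond unimodular entries; the paper's is shorter and sidesteps exactly the obstacle you flagged.
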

	
	\begin{proof}
		See Appendix B.
	\end{proof}
	
	Notice that the V-GBFs in first row and first column of  $\widetilde{\mathbb{M}}_{e}(\bm{x})$ in Theorem \ref{them8} are
	\begin{equation*}
	\left\{
	\begin{aligned}
	&\vec{f}(\bm{x})=f(\bm{x})\cdot\vec{1}+\vec{s}(\bm{x}),\\
	&\vec{g}(\bm{x})=\vec{f}(\bm{x})+\vec{\mu}_{A}(\bm{x}),
	\end{aligned}\right.
	\quad\text{and}\quad
	\left\{
	\begin{aligned}
	&\vec{f}(\bm{x})=f(\bm{x})\cdot\vec{1}+\vec{s}(\bm{x}),\\
	&\vec{g}(\bm{x})=\vec{f}(\bm{x})+\vec{\mu}_{B}(\bm{x}),
	\end{aligned}\right.
	\end{equation*}
	respectively, which are both GAPs by Theorem \ref{them4}. Then by applying Theorem \ref{them3}, the results in Theorems \ref{thm_main_1} and \ref{thm_main_2} follow immediately.

\section{Concluding Remarks}

In this paper, we propose two new constructions of GCSs over $4^q$-QAM of length $2^{m}$, which are usually presented by the combination of the standard GCSs over QPSK and compatible offsets. We propose new compatible offsets by introducing the so-called $\vec{d}$-vectors based on the factorization of integer $q$ from the Set $\mathcal{C}$ and $\vec{b}$-vectors from NSGIPs as the ingredients. We show that the generalized cases I-V \cite{Li2010A,Liu2013New} are special cases of our new constructions. Moreover, our constructions significantly increased the number of the GCSs over $4^q$-QAM if $q$ is a  composite number. It has been shown in \cite{Li2010A} and \cite{Liu2013New} that the numbers of offsets in the generalized cases I-III and IV-V are a linear  polynomial of $m$ and a  quadratic  polynomial of $m$, respectively. We show that, for $q=q_{1}\times q_{2}\times \dots\times q_{t}$ ($q_k>1$), the number of new offsets in our first construction is lower bounded by a polynomial of $m$ with degree $t$.  In particular, for $q=4$, the numbers of  new  offsets in our first construction is seven times more than that in the generalized  cases IV-V. We also show that the numbers of  new  offsets in our two constructions is lower bounded by a cubic polynomial of $m$ for $q=6$.

The results are proved from new viewpoints of GAPs and PU matrices over QAM. 

We show  that a large number of GCPs can be constructed from a single GAP over $4^q$-QAM in Theorem \ref{them3}, which extend the idea in \cite{Array2} for PSK  case to  QAM.
This argument greatly simplifies the process for constructing GCPs and GCSs over $4^q$-QAM. Although the  three-stage process in  \cite{Array2} are not involved here, Theorem \ref{them3}  has the same importance as the three-stage process for   GAPs of size $2\times 2 \times \cdots \times 2$.   Our proof implies that all the mentioned  GCSs over QAM in this paper can be regarded as projections of Golay complementary arrays of size $2\times2\times\cdots\times2$, so the results in this paper provide a partial solution to an open
problem from \cite{Array2} for GAPs of size $2\times 2 \times \cdots \times 2$:

{\em How can the three-stage construction process  be used to simplify or extend
known results on the construction of Golay sequences in QAM modulation? }

A full answer to this question for GAPs of size $L_1\times L_2 \times \cdots \times L_m$ over QAM will be given in our future work.

We also make a connection between GAPs and specified PU matrices with multi-variables over QAM in Theorem \ref{them4}, which generalizes the idea in  \cite{Budi2018PU} for GCPs and  PU matrices with a single variable over QAM. The PU matrices constructed in Theorems \ref{them5} and \ref{them6} can be easily decomposed to the weighted sum of PU 
matrices over QPSK, which make sure that we can derive the GBF form of corresponding GAPs. It should be pointed out that the GCSs proposed here belong to the so-called $M$-Qum cases, which was mentioned,  but the GBF form could not be explicitly given in \cite{Budi2018PU}. On the other hand, many new GCSs over QAM, realized by the PU algorithm, were also found in \cite{Budi2018PU} by  exhaustive search. For instance, compared with the generalized cases I-V,  numerical results showed that the overall increase in the total number of GCSs  including 1-Qum and 2-Qum cases of length 1024  is up to $59\%, 242\%,$ and $340\%$, for 64-, 256-, and 1024-QAM, respectively. We left the new constructions of the PU matrices over QAM and the corresponding V-GBFs, which can significantly increase the number of GCSs proposed in this paper,  as an open problem.

\section*{Acknowledgment}

The authors wish to thank Dr. S.~Z.~Budi\v{s}in for his valuated discussions on the constructions of PU in \cite{Budi2018PU} and the constructions in this paper.

\appendices
\section{Proof of Theorem \ref{Boolean_matrix}}
We claim that
\begin{equation}
{\bm{M}}(\bm{z}_0,\bm{z}_1,z)=\bm{M}^{\{0\}}(\bm{z}_0)\cdot\bm{D}(z)\cdot\bm{M}^{\{1\}}(\bm{z}_{1})
\end{equation}
is the generating matrix of  GBF matrix
\begin{eqnarray}\label{eqn_proof}
\bm{\widetilde{M}}(\bm{x}_0,\bm{x}_1,x)
=\bm{\widetilde{M}}^{\{0\}}(\bm{x}_0)\cdot\bm{D}(x)\cdot\bm{J}+\bm{J}\cdot
\bm{D}(x)\cdot\bm{\widetilde{M}}^{\{1\}}(\bm{x}_1).
\end{eqnarray}
over QPSK. To prove the claim,
recall that $\bm{D}(z)=\sum_{x=0}^{1}\bm{D}(x)\cdot z^{x}$. Then we have
\begin{eqnarray*}
	{\bm{M}}(\bm{z}_0,\bm{z}_1,z)&=&
	\bm{M}^{\{0\}}(\bm{z}_0)\cdot\bm{D}(z)\cdot\bm{M}^{\{1\}}(\bm{z}_{1})\\ &=&\sum_{\bm{x}_0}\left({\bm{M}^{\{0\}}(\bm{x}_0)}\cdot\bm{z}_{0}^{\bm{x}_0}\right)\cdot
	\sum_{x}\left(\bm{D}(x)\cdot z^{x}\right)\cdot \sum_{\bm{x}_1}\left({\bm{M}^{\{1\}}(\bm{x}_1)}\cdot\bm{z}_{1}^{\bm{x}_1}\right)\\
	&=&\sum_{\bm{x}_0}\sum_{x}\sum_{\bm{x}_1}{\bm{M}^{\{0\}}(\bm{x}_0)}\cdot\bm{D}(x)\cdot {\bm{M}^{\{1\}}(\bm{x}_1)}\cdot\bm{z}_{0}^{\bm{x}_0}\cdot z^{x}\cdot\bm{z}_{1}^{\bm{x}_1}.
\end{eqnarray*}
On the other hand, from another expansion:
$${\bm{M}}(\bm{z}_0,\bm{z}_1,z)=\sum_{\bm{x}_0}\sum_{x}\sum_{\bm{x}_1}{\bm{M}(\bm{x}_0,\bm{x}_1,x)}\cdot\bm{z}_{0}^{\bm{x}_0}\cdot z^{x}\cdot\bm{z}_{1}^{\bm{x}_1},$$  we obtain
\begin{equation*}
{\bm{M}(\bm{x}_0,\bm{x}_1,x)}={\bm{M}^{\{0\}}(\bm{x}_0)}\cdot\bm{D}(x)\cdot {\bm{M}^{\{1\}}(\bm{x}_1)}.
\end{equation*}
Then the entry of ${\bm{M}(\bm{x}_0,\bm{x}_1,x)}$ can be expressed by
\begin{equation*}
M_{i,j}(\bm{x}_0,\bm{x}_1,x)=M^{\{0\}}_{i,x}(\bm{x}_{0})\cdot M^{\{1\}}_{x,j}(\bm{x}_{1}).
\end{equation*}
Alternatively, we have
\begin{equation}\label{eqn2-proof}
\widetilde{M}_{i,j}(\bm{x}_0,\bm{x}_1,x)=\widetilde{M}^{\{0\}}_{i,x}(\bm{x}_{0})+ \widetilde{M}^{\{1\}}_{x,j}(\bm{x}_{1})
\end{equation}
for $i,j, x=0, 1$. Then we can verify that the formulae (\ref{eqn_proof}) and (\ref{eqn2-proof}) are equivalent by specifying $x=0$ and $1$.

Furthermore, Theorem \ref{Boolean_matrix} can be proved by iteratively using the above claim.

\section{Proof of Theorem \ref{them8}}

Denote GBF matrix of $ \bm{H}(d_{0}, d_{1}, d_{2}) $ by $\widetilde{\bm{{H}}}(d_{0}, d_{1}, d_{2})$, we have
\begin{equation}\label{eqn_H(ddd)}
\widetilde{\bm{{H}}}({d_{0}}, {d_{1}}, {d_{2}})
=\begin{bmatrix}
{d_{0}}&{d_{0}+d_{2}}\\
{d_{0}+d_{1}}&{d_{0}+d_{1}+d_{2}}+2
\end{bmatrix}
={d_{0}}\cdot\bm{J}+{d_{1}}\cdot\bm{A}+{d_{2}}\cdot\bm{B}+\begin{bmatrix}
0&0\\
0&2
\end{bmatrix}.
\end{equation}

\begin{lemma}\label{lem2}
	Let the matrices $\bm{D}(x)$, $\bm{J}$, $\bm{A}$ and $\bm{B}$ be the same at those given in Subsection 7.1, we have
	\begin{itemize}
		\item[(1)] $\bm{J}\cdot\bm{D}(x)\cdot\bm{J}=\bm{J}$;
		\item[(2)] $\bm{A}\cdot\bm{D}(x)\cdot\bm{J}=\bm{A}$;
		\item[(3)] $\bm{B}\cdot\bm{D}(x)\cdot\bm{J}=x\cdot\bm{J}$;
		\item[(4)] $\bm{J}\cdot\bm{D}(x)\cdot\bm{A}=x\cdot\bm{J}$;
		\item[(5)] $\bm{J}\cdot\bm{D}(x)\cdot\bm{B}=\bm{B}$.
	\end{itemize}
\end{lemma}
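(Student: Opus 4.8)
The plan is to verify each of the five identities by a direct $2\times 2$ matrix multiplication, associating the products from left to right and recording each intermediate matrix. Recalling from Subsection 7.1 that $\bm{D}(x)=\begin{bmatrix}1-x&0\\0&x\end{bmatrix}$, $\bm{J}=\begin{bmatrix}1&1\\1&1\end{bmatrix}$, $\bm{A}=\begin{bmatrix}0&0\\1&1\end{bmatrix}$ and $\bm{B}=\begin{bmatrix}0&1\\0&1\end{bmatrix}$, the entire argument reduces to bookkeeping with these four explicit matrices.

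First I would compute the building-block products that recur throughout, namely $\bm{J}\,\bm{D}(x)=\begin{bmatrix}1-x&x\\1-x&x\end{bmatrix}$ (feeding into (1), (4), (5)) together with the left factors $\bm{A}\,\bm{D}(x)=\begin{bmatrix}0&0\\1-x&x\end{bmatrix}$ and $\bm{B}\,\bm{D}(x)=\begin{bmatrix}0&x\\0&x\end{bmatrix}$ (feeding into (2), (3)). The single algebraic fact driving every simplification is that a row of the form $(1-x,\;x)$ contracted against a column of an all-ones matrix returns $(1-x)+x=1$; notably, no Boolean idempotency $x^2=x$ is required, so these are in fact formal polynomial identities in the scalar $x$. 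Substituting the building blocks and applying this collapse gives (1) $\bm{J}\bm{D}(x)\bm{J}=\bm{J}$, (2) $\bm{A}\bm{D}(x)\bm{J}=\bm{A}$ and (5) $\bm{J}\bm{D}(x)\bm{B}=\bm{B}$ directly, whereas in (3) and (4) the surviving entries are each a single copy of $x$, yielding $x\cdot\bm{J}$.

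There is no genuine obstacle here: every step is an elementary product of order-$2$ matrices, and the only point requiring mild care is keeping the order of multiplication straight and confirming that the middle factor $\bm{D}(x)$ always collapses against an adjacent all-ones row or column. I would therefore present the proof as five one-line verifications, each obtained by the substitution-and-collapse just described. These identities are precisely the algebraic lemmas needed in Appendix B to reduce the products $\bm{J}\cdot\bm{D}(x_j)\cdot\widetilde{\bm{M}}^{\{j\}}(\bm{x}_j)\cdot\bm{D}(x_{j+1})\cdot\bm{J}$ arising from Theorem \ref{Boolean_matrix} into the coefficient-matrix form used in Theorem \ref{them8}; their role is thus to serve as the computational engine for the V-GBF extraction rather than to present any difficulty of their own.
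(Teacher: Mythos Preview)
Your proposal is correct; the paper itself does not supply a proof of this lemma, treating the five identities as immediate verifications, so your direct matrix-multiplication argument is exactly what is expected. Your observation that only the collapse $(1-x)+x=1$ is needed (and not $x^2=x$) is a nice touch but not something the paper remarks on.
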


Now we can give the proof of Theorem \ref{them8} for the case $e=1$.

\begin{proof}( Extracting V-GBF from ${\mathbb{M}}_{1}(\bm{z})$ ) Recall the notations in the proof of Theorem \ref{them5}. As $\rho_{k}(0)=0$ ($1\le k\le t$),  we have  ${\bm{H}}^{(k,\rho_{k}(0))}=\bm{H}$, which leads to  $\bm{{{M}}}_{1}^{(0)}({\bm{z}})=\bm{H}\cdot\left(\prod_{j=1}^{m}\left(\bm{D}(z_j)\cdot \bm{H}\right)\right)=\bm{U}({\bm{z}})$ and  $\bm{\widetilde{M}}_{1}^{(0)}(\bm{x})=\widetilde{\bm{U}}(\bm{x})$, shown in Corollary \ref{coro}.

 According to  Theorem \ref{Boolean_matrix}, we can obtain the GBF matrices $\bm{\widetilde{M}}_{1}^{(p)}(\bm{x})$ of PU matrices $\bm{{{M}}}_{1}^{(p)}({\bm{z}})$.
We study the difference between  $\bm{\widetilde{M}}_{1}^{(p)}(\bm{x})$ and $\bm{\widetilde{M}}_{1}^{(0)}(\bm{x})$:
	\begin{equation*}
	\begin{split}
	\bm{\widetilde{M}}_{1}^{(p)}(\bm{x})-\bm{\widetilde{M}}_{1}^{(0)}(\bm{x})
	&=\left(\sum_{k=1}^{t}\bm{J}\cdot
	\bm{D}(x_{\omega_{k}})\cdot(\widetilde{\bm{H}}^{(k,{\rho}_k(p))}-\widetilde{\bm{H}})\cdot\bm{D}(x_{\omega_{k}+1})\cdot\bm{J}\right)\\
	&+(\bm{I}-\bm{J}\cdot\bm{D}(x_{0}))\cdot(\widetilde{\bm{U}}_{p,0}-\widetilde{\bm{H}})\cdot\bm{D}(x_{1})\cdot\bm{J}\\
	&+\bm{J}\cdot
	\bm{D}(x_{m})\cdot(\widetilde{\bm{U}}_{p, m}-\widetilde{\bm{H}})\cdot(\bm{I}-\bm{D}(x_{m+1})\cdot\bm{J}),
	\end{split}
	\end{equation*}
	where where  $x_{0}$ and $x_{m+1}$ are \lq fake\rq \  variables.

	According to (\ref{eqn_H(ddd)}), the difference of $\widetilde{\bm{H}}^{(k,p_{k})}$ and $\widetilde{\bm{H}}$ can be expressed by
	$$(\bm{\widetilde{H}}^{(k,p_{k})}-\bm{\widetilde{H}})= {d_{0}^{(k,p_{k})}}\cdot\bm{J}+{d_{1}^{(k,p_{k})}}\cdot\bm{A}+{d_{2}^{(k,p_{k})}}\cdot\bm{B}.$$

	From Lemma \ref{lem2}, each term in the sum of the above difference between  $\bm{\widetilde{M}}_{1}^{(p)}(\bm{x})$ and $\bm{\widetilde{M}}_{1}^{(0)}(\bm{x})$ can be respectively simplified as
	\begin{align*}
	\sum_{k=1}^{t}\bm{J}\cdot
	\bm{D}(x_{\omega_{k}})\cdot(\widetilde{\bm{H}}^{(k,{\rho}_k(p))}-\widetilde{\bm{H}})\cdot\bm{D}(x_{\omega_{k}+1})\cdot\bm{J}
	&=\sum_{k=1}^{t}\left(d_{1}^{(k,{\rho}_k(p))}x_{\omega_{k}}+d_{2}^{(k,{\rho}_k(p))}x_{\omega_{k}+1}+d_{0}^{(k,{\rho}_k(p))}\right)
	\cdot\bm{J},
	\\
	(\bm{I}-\bm{J}\cdot\bm{D}(x_{0}))\cdot(\widetilde{\bm{U}}_{p,0}-\widetilde{\bm{H}})\cdot\bm{D}(x_{1})\cdot\bm{J}
	&=\left\{
	\begin{aligned}
	&d_{1}^{(k,{\rho}_{k}(p))}\cdot\bm{A},&\exists k,\,\omega_{k}=0,\\
	&\bm{0},&\text{otherwise};
	\end{aligned}\right.
	\\
	\bm{J}\cdot
	\bm{D}(x_{m})\cdot(\widetilde{\bm{U}}_{p, m}-\widetilde{\bm{H}})\cdot(\bm{I}-\bm{D}(x_{m+1})\cdot\bm{J})
	&=\left\{
	\begin{aligned}
	&d_{2}^{(k,{\rho}_{k}(p))}\cdot\bm{B},&\exists k,\,\omega_{k}=m,\\
	&\bm{0},&\text{otherwise}.
	\end{aligned}\right.
	\end{align*}
	By applying Corollary \ref{coro}, we obtain
	\begin{equation}\label{final-1}
	\bm{\widetilde{M}}_{1}^{(p)}(\bm{x})=
	\left(f(\bm{x})+{s}^{(p)}(\bm{x})\right)\cdot \bm{J}+\mu_{A}^{(p)}\cdot\bm{A}+\mu_{B}^{(p)}\cdot\bm{B},
	\end{equation}
	where  $f(\bm{x})={2}\cdot\sum_{j=1}^{m-1}x_{j}x_{j+1}$, and
	\begin{itemize}
		\item
		${s}^{(p)}(\bm{x})=\sum_{k=1}^{t}\left(d_{1}^{(k,{\rho}_{k}(p))}x_{\omega_{k}}+d_{2}^{(k,{\rho}_{k}(p))}x_{{\omega_{k}}+1}+d_{0}^{(k,{\rho}_{k}(p))}\right)$,
		where   $x_{0}=x_{m+1}=0$;
		\item
		$
		{\mu}_{A}^{(p)}(\bm{x})=\left\{
		\begin{aligned}
		&2x_{1}+d_{1}^{(k,{\rho}_{k}(p))},&\exists k,\,\omega_{k}=0,\\
		&2x_{1},&\text{otherwise};
		\end{aligned}\right.$
		\item $
		{\mu}_{B}^{(p)}(\bm{x})=\left\{
		\begin{aligned}
		&2x_{m}+d_{2}^{(k,{\rho}_{k}(p))},&\exists k,\omega_{k}=m,\\
		&2x_{m},&\text{otherwise}.
		\end{aligned}\right.
		$
	\end{itemize}
	This completes the proof.
\end{proof}

\begin{lemma}\label{lem4}
	Let $\bm{M}(\bm{z})$ be the generating matrix of GBF matrix $\widetilde{\bm{M}}(\bm{x})$ over QPSK. Suppose $\alpha, \beta, c_{i,j}(0\le i,j\le 1)\in \Z_{4}$. We have
	\begin{itemize}
		\item[(1)]$\bm{M}(\bm{z})\cdot diag\{\xi^{\alpha},\xi^{\beta}\}$ is the generating matrix of GBF matrix $\widetilde{\bm{M}}(\bm{x})+\bm{J}\cdot diag\{{\alpha},{\beta}\}$;
		\item[(2)]$\bm{M}(\bm{z})\odot\bm{C}$ (or $\bm{C}\odot\bm{M}(\bm{z})$)is the generating matrix of GBF matrix $\widetilde{\bm{M}}(\bm{x})+\widetilde{\bm{C}}$, where
		$\bm{C}=\begin{bmatrix}\xi^{c_{0,0}}&\xi^{c_{0,1}}\\
		\xi^{c_{1,0}}& \xi^{c_{1,1}} \end{bmatrix}$ and
		$\widetilde{\bm{C}}=\begin{bmatrix}
		c_{0,0}&c_{0,1}\\
		c_{1,0}& c_{1,1} \end{bmatrix}.$
	\end{itemize}
\end{lemma}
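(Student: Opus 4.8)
The plan is to reduce both parts of the lemma to a single elementary \emph{scaling principle} for generating functions over QPSK: for a fixed $c\in\Z_4$, multiplying $F(\bm{z})=\sum_{\bm{x}}\xi^{f(\bm{x})}\bm{z}^{\bm{x}}$ by the scalar $\xi^{c}$ gives $\xi^{c}F(\bm{z})=\sum_{\bm{x}}\xi^{f(\bm{x})+c}\bm{z}^{\bm{x}}$, which is exactly the generating function of the array whose GBF is $f(\bm{x})+c$. This holds because $\xi^{c}\cdot\xi^{f(\bm{x})}=\xi^{f(\bm{x})+c}$ and the exponent is read modulo $4$ (consistent with $\xi^{4}=1$). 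With this principle in hand, both claims become a matter of tracking which scalar multiplies each $(i,j)$-entry under the respective matrix operation, and then matching the resulting constant offset against the claimed additive GBF term.

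For part (1), I would first write out the right-multiplication explicitly. If $\bm{M}(\bm{z})$ has $(i,j)$-entry $M_{i,j}(\bm{z})$, then since $diag\{\xi^{\alpha},\xi^{\beta}\}$ is diagonal, the product $\bm{M}(\bm{z})\cdot diag\{\xi^{\alpha},\xi^{\beta}\}$ scales the whole of column $j$ by $\xi^{\alpha}$ when $j=0$ and by $\xi^{\beta}$ when $j=1$; that is, its $(i,j)$-entry is $\xi^{\gamma_j}M_{i,j}(\bm{z})$ with $\gamma_0=\alpha$, $\gamma_1=\beta$. By the scaling principle this is the generating function of the GBF $\widetilde{M}_{i,j}(\bm{x})+\gamma_j$. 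It then remains to check that $\bm{J}\cdot diag\{\alpha,\beta\}$ has $(i,j)$-entry equal to $\gamma_j$, which follows from the direct computation $\bm{J}\cdot diag\{\alpha,\beta\}=\begin{bmatrix}\alpha&\beta\\\alpha&\beta\end{bmatrix}$. Hence $\widetilde{\bm{M}}(\bm{x})+\bm{J}\cdot diag\{\alpha,\beta\}$ has the required $(i,j)$-entry, which proves part (1).

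For part (2) the argument is even more direct, since the Hadamard product acts entrywise. The $(i,j)$-entry of $\bm{M}(\bm{z})\odot\bm{C}$ is $\xi^{c_{i,j}}M_{i,j}(\bm{z})$, which by the scaling principle is the generating function of $\widetilde{M}_{i,j}(\bm{x})+c_{i,j}$; since $\widetilde{\bm{C}}$ has $(i,j)$-entry $c_{i,j}$ by definition, this coincides with the $(i,j)$-entry of $\widetilde{\bm{M}}(\bm{x})+\widetilde{\bm{C}}$. The commutativity of the Hadamard product then disposes of the variant $\bm{C}\odot\bm{M}(\bm{z})$ at once.

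The computation is routine throughout, so I do not expect a genuine obstacle; the only points that warrant care are bookkeeping ones. I would be careful to confirm the \emph{column}-indexing in part (1), namely that right-multiplication by a diagonal matrix scales columns rather than rows and that this matches the column pattern of $\bm{J}\cdot diag\{\alpha,\beta\}$, and to note that every exponent addition is taken in $\Z_4$, so that the claimed GBF offsets are well-defined. I would also state the scaling principle only for QPSK arrays (where each entry is a single power of $\xi$), since it is precisely this single-phase structure that turns scalar multiplication into additive offsets of the GBF.
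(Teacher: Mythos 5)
Your proof is correct, and it is exactly the routine entrywise verification the paper intends: Lemma \ref{lem4} is stated in Appendix B without an explicit proof (like Lemma \ref{lem2}, it is treated as self-evident), and the scaling principle $\xi^{c}\sum_{\bm{x}}\xi^{f(\bm{x})}\bm{z}^{\bm{x}}=\sum_{\bm{x}}\xi^{f(\bm{x})+c}\bm{z}^{\bm{x}}$ applied per entry --- with the column-scaling check $\bm{J}\cdot diag\{\alpha,\beta\}=\begin{bmatrix}\alpha&\beta\\ \alpha&\beta\end{bmatrix}$ for part (1) --- is precisely the argument the authors implicitly rely on. No gaps; your care over $\Z_4$ arithmetic and the single-phase QPSK structure is appropriate and matches the paper's framework.
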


Now we can give the proof of Theorem \ref{them8} for the cases $e=a$ and $e=b$.

\begin{proof}(( Extracting V-GBF from ${\mathbb{M}}_{a}(\bm{z})$ and ${\mathbb{M}}_{b}(\bm{z})$) Recall the notations in the proof of Theorem \ref{them6}.
	According to Lemma \ref{lem4}, by iteratively using Theorem \ref{Boolean_matrix}, the  corresponding GBF matrices are given by
	\begin{equation*} \bm{\widetilde{M}}_{a}^{(p)}(\bm{x})=\bm{\widetilde{M}}_1^{(p)}(\bm{x})+\bm{J}\cdot\bm{D}(x_{\upsilon-1})\cdot\bm{J}\cdot diag\{{b_{{\rho}'_{0}(p)}},{b'_{{\rho}'_{0}(p)}}\}\cdot\bm{D}(x_{\upsilon})\cdot\bm{J}
	\end{equation*}
	and
	\begin{equation*}
	\bm{\widetilde{M}}_{b}^{(p)}(\bm{x})=
	\bm{\widetilde{M}}_1^{(p)}(\bm{x})+\bm{J}\cdot\bm{D}(x_{\upsilon_{1}})\begin{bmatrix}
	{b_{{\rho}'_{0}(p)}}& {b'_{{\rho}'_{0}(p)}}\\
	{-b'_{{\rho}'_{0}(p)}}&{-b_{{\rho}'_{0}(p)}}
	\end{bmatrix}\cdot\bm{D}(x_{\upsilon_{2}})\cdot\bm{J},
	\end{equation*}
	where
	\begin{equation}
	\bm{\widetilde{M}}_1^{(p)}(\bm{x})
	=\bm{\widetilde{U}}_{p,0}\cdot\bm{D}(x_{1})\cdot\bm{J}+\sum_{j=1}^{m-1}\bm{J}\cdot
	\bm{D}(x_{j})\cdot\bm{\widetilde{U}}_{p,j}\cdot\bm{D}(x_{j+1})\cdot\bm{J}+\bm{J}\cdot
	\bm{D}(x_{m})\cdot\bm{\widetilde{U}}_{p,m}.
	\end{equation}
	The last term in $\bm{\widetilde{M}}_{a}^{(p)}(\bm{x})$ can be calculated by
	\begin{equation}\label{final-2}
	\bm{J}\cdot\bm{D}(x_{\upsilon-1})\cdot\bm{J}\cdot diag\{{b_{{\rho}'_{0}(p)}},{b'_{{\rho}'_{0}(p)}}\}\cdot\bm{D}(x_{\upsilon})\cdot\bm{J}=((b'_{{\rho}'_{0}(p)}-b_{{\rho}'_{0}(p)})x_{\upsilon}+b_{{\rho}'_{0}(p)})\cdot\bm{J}.
	\end{equation}
	The last term in $\bm{\widetilde{M}}_{b}^{(p)}(\bm{x})$ can be calculated by
	\begin{equation}\label{final-3}
	\bm{J}\cdot
	\bm{D}(x_{\upsilon_{1}})\cdot\begin{bmatrix}
	{b_{{\rho}'_{0}(p)}}& {b'_{{\rho}'_{0}(p)}}\\
	{-b'_{{\rho}'_{0}(p)}}&{-b_{{\rho}'_{0}(p)}}
	\end{bmatrix}\cdot\bm{D}(x_{\upsilon_{2}})\cdot\bm{J} =((b'_{{\rho}'_{0}(p)}-b_{{\rho}'_{0}(p)})x_{\upsilon_{1}}+(-b'_{{\rho}'_{0}(p)}-b_{{\rho}'_{0}(p)})x_{\upsilon_{2}}+{b_{{\rho}'_{0}(p)}})\cdot\bm{J}.
	\end{equation}
	The term $\bm{\widetilde{M}}_1^{(p)}(\bm{x})$ is well studied in the proof of  Extracting V-GBF from ${\mathbb{M}}_{1}(\bm{z})$. By replacing the subscript $\rho_{k}(p)$ by $\rho'_{k}(p)$ in formula (\ref{final-1}), we have
	\begin{equation}\label{final-4}
	\bm{\widetilde{M}}_1^{(p)}(\bm{x})=
	\left(f(\bm{x})+{s}_{0}^{(p)}(\bm{x})\right)\cdot \bm{J}+\mu_{A}^{(p)}\cdot\bm{A}+\mu_{B}^{(p)}\cdot\bm{B},
	\end{equation}
	where  $f(\bm{x})={2}\cdot\sum_{j=1}^{m-1}x_{j}x_{j+1}$, $ {s}_{0}^{(p)}(\bm{x})=\sum_{k=1}^{t}\left(d_{1}^{(k,{\rho}'_{k}(p))}x_{\omega_{k}}+d_{2}^{(k,{\rho}'_{k}(p))}x_{{\omega_{k}}+1}+d_{0}^{(k,{\rho}'_{k}(p))}\right) $,
	\[
	{\mu}_{A}^{(p)}(\bm{x})=\left\{
	\begin{aligned}
	&2x_{1}+d_{1}^{({\rho}'_{k}(p))},&\exists  k,\,\omega_{k}=0;\\
	&2x_{1},&\text{otherwise}.
	\end{aligned}\right.
	\quad\text{and}\quad
	{\mu}_{B}^{(p)}(\bm{x})=\left\{
	\begin{aligned}
	&2x_{m}+d_{2}^{({\rho}'_{k}(p))},&\exists k,\,\omega_{k}=m;\\
	&2x_{m},&\text{otherwise}.
	\end{aligned}\right.
	\]
	where  $x_{0}=x_{m+1}=0$ are \lq fake\rq\  variables.	
	
	Combining the formulae  (\ref{final-2}), (\ref{final-3}) and (\ref{final-4}), we complete the proof.
\end{proof}

\section{Proof of Proposition \ref{prop-0}}

For factorization $q=q_{1}\times q_{2}\times \dots\times q_{t}$ ($q_{k}\ge 2$), we have the  mixed radix representation of  $p=(\rho_{t}(p),\rho_{t-1}(p),\dots, \rho_{1}(p))_{q_{t}, q_{t-1}\dots q_{1}}$,  the offsets in Theorem \ref{thm_main_1} are given by
\begin{equation*}
{s}^{(p)}(\bm{x})=\sum_{k=1}^{t}\left(d_{1}^{(k,{\rho}_{k}(p))}x_{\omega_{k}}+d_{2}^{(k,{\rho}_{k}(p))}x_{{\omega_{k}}+1}+d_{0}^{(k,{\rho}_{k}(p))}\right), (0\le p< q),
\end{equation*}
We consider the  vectors ${\underline{d}^{(k,p_{k})}}$  and the ordered sets $\{\omega_{1}, \omega_{2},\dots,\omega_{t}\}$
satisfying the following conditions:
\begin{enumerate}
	\item[(1)]
	for $q_{k}=2$,
	${\underline{d}^{(k,1)}}\in \mathcal{C}_{1}$;
	
	\item[(2)] for $q_{k}\ge 3$, $ {\underline{d}^{(k,p_{k})}}\notin \mathcal{C}_{4}$, and
	$ \left({\underline{d}^{(k,0)}}, {\underline{d}^{(k,1)}},\dots,{\underline{d}^{(k,q_{k}-1)}}\right)\notin \left(\mathcal{C}_{2},\mathcal{C}_{2},\dots,\mathcal{C}_{2}\right)\bigcup\left(\mathcal{C}_{3},\mathcal{C}_{3},\dots,\mathcal{C}_{3}\right)$;
	\item[(3)]
	for any $1\le k_{1}\ne k_{2}
	\le t$,
	$(\omega_{k_1},\omega_{k_2})\neq (0, m)$ or  $(m, 0)$ and $|\omega_{k_1}-\omega_{k_2}|\geq 2 $.
\end{enumerate}

\begin{table}
	\centering
	\caption{Non-zero columns in coefficient matrices of offsets for $q=q_{1}\times q_{2}\times \dots\times q_{t}$}	
	\begin{threeparttable}
		\begin{tabular}{|c|cc@{}ccc@{}cc@{}ccc|}
			\hline
			&\multicolumn{10}{c|}{$ {s}^{(p)}(\bm{x}) $} \\
			\cline{2-11}
			$p_{ }$&$c_{\omega_{1}}$&$c_{\omega_{1}+1}$& &$c_{\omega_{2}}$&$c_{\omega_{2}+1}$& &$\dots$ & &$c_{\omega_{t}}$&$c_{\omega_{t}+1}$\\
			\hline
			$0$&$0$          &$0$       &\vline &$0$          &$0$          &\vline &$ \dots $&\vline &$0$&$0$ \\
			$1$&$d_{1}^{(1,1)}$&$d_{2}^{(1,1)}$&\vline &$0$          &$0$          &\vline &$ \dots $&\vline &$0$&$0$ \\
			\vdots&\vdots&\vdots&\vline &\vdots&\vdots &\vline &\vdots&\vline &\vdots&\vdots \\
			$q_{1}-1$&$d_{1}^{(1,q_{1}-1)}$&$d_{2}^{(1,q_{1}-1)}$&\vline &$0$          &$0$          &\vline &$ \dots $&\vline &$0$&$0$ \\
			\cline{2-3}
			$q_{1}$&$0$          &$0$          &\vline &$d_{1}^{(2,1)}$&$d_{2}^{(2,1)}$&\vline &$ \dots $&\vline &$0$&$0$ \\
			$q_{1}+1$&$d_{1}^{(1,1)}$&$d_{2}^{(1,1)}$&\vline &$d_{1}^{(2,1)}$&$d_{2}^{(2,1)}$&\vline &$ \dots $&\vline &$0$&$0$ \\
			\vdots&\vdots&\vdots&\vline &\vdots&\vdots &\vline &\vdots&\vline &\vdots&\vdots \\
			$q_{1}\cdot2-1$&$d_{1}^{(1,q_{1}-1)}$&$d_{2}^{(1,q_{1}-1)}$&\vline &$d_{1}^{(2,1)}$&$d_{2}^{(2,1)}$&\vline &$ \dots $&\vline &$0$&$0$ \\
			\cline{2-3}
			\vdots&\vdots&\vdots& &\vdots&\vdots &\vline &\vdots&\vline &\vdots&\vdots \\
			\cline{2-3}
			$q_{1}\cdot(q_{2}-1)$&$0$          &$0$          &\vline &$d_{1}^{(2,q_{2}-1)}$&$d_{2}^{(2,q_{2}-1)}$&\vline &$ \dots $&\vline &$0$&$0$ \\
			$q_{1}\cdot(q_{2}-1)+1$&$d_{1}^{(1,1)}$&$d_{2}^{(1,1)}$&\vline &$d_{1}^{(2,q_{2}-1)}$&$d_{2}^{(2,q_{2}-1)}$&\vline &$ \dots $&\vline &$0$&$0$ \\
			\vdots&\vdots&\vdots&\vline &\vdots&\vdots & \vline&\vdots&\vline &\vdots&\vdots \\
			$q_{1}\cdot q_{2}-1$&$d_{1}^{(1,q_{1}-1)}$&$d_{2}^{(1,q_{1}-1)}$&\vline &$d_{1}^{(2,q_{2}-1)}$&$d_{2}^{(2,q_{2}-1)}$&\vline &$ \dots $&\vline &$0$&$0$ \\
			\cline{2-6}
			\vdots&\vdots&\vdots&&\vdots&\vdots&&$ \ddots $ &\vline &\vdots&\vdots \\
			\cline{2-8}
			$\prod_{k=1}^{t-1}q_{k}$&$0$&$0$&&$0$&$0$&&\dots&\vline &$d_{1}^{(t,1)}$&$d_{2}^{(t,1)}$ \\
			\vdots&\vdots&\vdots&&\vdots&\vdots &&\vdots&\vline &\vdots&\vdots \\
			$\prod_{k=1}^{t-1}q_{k}\cdot2-1$&$d_{1}^{(1,q_{1}-1)}$&$d_{2}^{(1,q_{1}-1)}$&&$d_{1}^{(2,q_{2}-1)}$&$d_{2}^{(2,q_{2}-1)}$&&$ \dots $&\vline &$d_{1}^{(t,1)}$&$d_{2}^{(t,1)}$ \\\cline{2-8}
			\vdots&\vdots&\vdots&&\vdots&\vdots &&\vdots&&\vdots&\vdots \\
			\cline{2-8}
			$\prod_{k=1}^{t-1}q_{k}\cdot(q_{t}-1)$&$0$&$0$&&$0$&$0$&&\dots&\vline &$d_{1}^{(t,q_{t}-1)}$&$d_{2}^{(t,q_{t}-1)}$ \\
			\vdots&\vdots&\vdots&&\vdots&\vdots &&\vdots&\vline &\vdots&\vdots \\
			$\prod_{k=1}^{t}q_{k}-1$&$d_{1}^{(1,q_{1}-1)}$&$d_{2}^{(1,q_{1}-1)}$&&$d_{1}^{(2,q_{2}-1)}$&$d_{2}^{(2,q_{2}-1)}$&&$ \dots $&\vline &$d_{1}^{(t,q_{t}-1)}$&$d_{2}^{(t,q_{t}-1)}$ \\
			\hline
		\end{tabular}\label{table-4}
		\begin{tablenotes}
			\item Note:
			The items enclosed by blocks of the same width are periodically identical.
		\end{tablenotes}
	\end{threeparttable}
\end{table}

The condition (3) guarantees that the subscripts of variables $\{x_{\omega_{k}}, x_{\omega_{k}+1}|1\le k\le t\}$ are all different, i.e, there are at most  one
\lq fake\rq \ variable. Then the possible non-zero columns in the corresponding coefficient matrices are given by
\begin{equation}\label{non-zero}
\vec{c}=\sum_{k=1}^t\vec{d_0}^{(k)},\quad
\vec{c}_{\omega_{k}}=\vec{d_{1}}^{(k)},\quad
\vec{c}_{\omega_{k}+1}=\vec{d_{2}}^{(k)},\quad (1\le k\le t),
\end{equation}
which was shown in Table \ref{table-4} in detail. The conditions (1) and (2) guarantee that the $\vec{d}$-vector $\vec{d_{i}}^{(k)}=\left(d_{i}^{(k,{\rho}_{k}(0))},d_{i}^{(k,{\rho}_{k}(1))},\dots,d_{i}^{(k,{\rho}_{k}(q-1))}\right)\ne\vec{0}$ for $i=1,2$ and $1\le k\le t$.
Thus, there are $2t-1$ or $2t$ non-zero columns $\vec{c}_j$ in  coefficient matrices.
Moreover, for $1 \leq k \leq t$ and $1\leq p_k\leq q_k-1$, different choices of $d_1^{(k,p_{k})}$  and  $d_2^{(k,p_{k})}$   lead to different $\vec{d}$-vector  $\vec{d_{i}}^{({k})}$, and  different choices of $d_0^{(k,p_{k})}$  lead to different vector $\vec{c}$ from the  mixed radix representation. Thus, different choices  of $\left({\underline{d}^{(k,0)}}, {\underline{d}^{(k,1)}},\dots,{\underline{d}^{(k,q_{k}-1)}}\right)$  and ordered set $\{\omega_{1}, \omega_{2},\dots,\omega_{t}\}$ satisfying the above conditions determine different coefficient matrices (or different offsets $\vec{s}(\bm{x})$) with at least $2t-1$ non-zero columns $\vec{c}_{\omega_{k}}$.

We first count the number of sets $\{\omega_{1}, \omega_{2},\dots,\omega_{t}\}$ such that $0\leq \omega_{1}<\omega_{2}\cdots <\dots,\omega_{t}\leq m$ satisfying the condition (3). If $\omega_{1}=0$, suppose that the elements in $\{\omega_{2}, \omega_{3},\dots,\omega_{t}\}$ are bars and the other $m-t+1$ elements in the set $\{1, 2, \cdots m\}$ are stars,  a configuration satisfying the condition (3) is obtained by placing $t-1$ separating bars at places between two stars. Since there are $m-t$ gaps between the stars, there are $\tbinom{m-t}{t-1}$ possible configurations. If $\omega_{1}\neq 0$, suppose that the elements in $\{\omega_{1}, \omega_{2},\dots,\omega_{t}\}$ are bars and the other $m-t+2$ elements in the set $\{0, 1, \cdots, m{+}1\}$ are stars,  a configuration satisfying the condition (3) is obtained by placing $t$ separating bars at places between two stars. Since there are $m-t+1$ gaps between the stars, there are $\tbinom{m-t+1}{t}$ possible configurations. Then there are
$$t!\left(\tbinom{m-t}{t-1}+\tbinom{m-t+1}{t}\right)=(m+1)\frac{(m-t)!}{(m-2t+1)!}$$
ordered sets ($\omega_{1}, \omega_{2},\dots,\omega_{t}$) satisfying the condition (3).

For each $k$ satisfying $q_{k}=2$, there are $10$ choices of ${\underline{d}^{(k,1)}}$ satisfying the condition (1). For each $k$ satisfying $q_{k}\ge 3$,
there are $(14^{q_{k}-1}-2\times2^{q_{k}-1})$ choices of ${\underline{d}^{(k,p_{k})}}$  satisfying the condition (2). It is obvious that $(14^{q_{k}-1}-2\times2^{q_{k}-1})=10$ for $q_{k}=2$.

From the discussion above, the conditions (1)(2)(3) identify
$$(m+1)\frac{(m-t)!}{(m-2t+1)!}\prod_{k=1}^{t}(14^{q_{k}-1}-2\times2^{q_{k}-1})$$
compatible offsets with at least $2t-1$ non-zero columns $\vec{c}_j$ in their corresponding coefficient matrices.

\end{document}